\newcommand {\ignore} [1] {}
\definecolor{ForestGreen}{rgb}{0.1333,0.5451,0.1333}
\definecolor{DarkRed}{rgb}{0.65,0,0}
\newenvironment{wrapper}[1]
{
	\begin{center}
		\begin{minipage}{\linewidth}
			\begin{mdframed}[hidealllines=true, backgroundcolor=gray!20, leftmargin=0cm,innerleftmargin=0.4cm,innerrightmargin=0.4cm,innertopmargin=0.4cm,innerbottommargin=0.4cm,roundcorner=3pt]
				#1}
			{\end{mdframed}
		\end{minipage}
	\end{center}
}
\newcommand{\miss}{\mathsf{miss}}
\newcommand{\Amplify}{\textnormal{\texttt{Sparsify-Types}}}
\newcommand{\Extend}{\textnormal{\texttt{Extend-Coloring}}}
\newcommand{\Small}{\textnormal{\texttt{Color-Small}}}
\newcommand{\Compute}{\textnormal{\texttt{Compute-Paths}}}
\newcommand{\ModifyB}{\textnormal{\texttt{Modify-Types}}}
\newcommand{\Flip}{\textnormal{\texttt{Flip-Path}}}
\newcommand{\Depth}{\textnormal{\texttt{depth}}}
\DeclareMathOperator*{\U}{\mathcal U}
\DeclareMathOperator*{\C}{\mathcal C}
\DeclareMathOperator*{\f}{\boldsymbol{f}}
\DeclareMathAlphabet{\mathmybb}{U}{bbold}{m}{n}
\newtheorem{theorem}{Theorem}[section]
\newtheorem{lemma}[theorem]{Lemma}
\newtheorem{corollary}[theorem]{Corollary}
\newtheorem{definition}[theorem]{Definition}
\newtheorem{invariant}[theorem]{Invariant}
\newtheorem{assumption}[theorem]{Assumption}
\newtheorem{observation}[theorem]{Observation}
\newtheorem{claim}[theorem]{Claim}
\DeclareMathOperator*{\poly}{poly}
\DeclareMathOperator*{\polylog}{polylog}
\title{
Vizing’s Theorem in Deterministic Almost-Linear Time
}
\author{}
\author{
Sepehr Assadi\thanks{University of Waterloo, \texttt{sepehr@assadi.info}}
\and
Soheil Behnezhad\thanks{Northeastern University, \texttt{s.behnezhad@northeastern.edu}}
\and
Sayan Bhattacharya\thanks{University of Warwick, \texttt{s.bhattacharya@warwick.ac.uk}}
\and 
Mart\'in Costa\thanks{University of Warwick, \texttt{martin.costa@warwick.ac.uk}}
\and 
Shay Solomon\thanks{Tel Aviv University, \texttt{solo.shay@gmail.com}} \and 
Tianyi Zhang\thanks{Nanjing University, \texttt{tianyiz25@nju.edu.cn} \smallskip}
}
\date{
}
\begin{document}

\maketitle

\pagenumbering{gobble}

\begin{abstract}
Vizing’s theorem states that any $n$-vertex $m$-edge graph of maximum degree $\Delta$ can be edge colored using at most $\Delta + 1$ different colors.
Vizing's original proof is easily translated into a deterministic $O(mn)$ time algorithm.
This deterministic time bound was subsequently improved to $\tilde O(m\sqrt{n})$ time, independently by [Arjomandi, 1982] and by [Gabow et al., 1985].\footnote{We use the $\tilde O$ notation to hide $\polylog(n)$ factors.}

\smallskip

A series of recent papers improved the time bound of $\tilde O(m\sqrt{n})$ using randomization, culminating in the randomized near-linear time $(\Delta+1)$-coloring algorithm by 
[Assadi, Behnezhad, Bhattacharya, Costa, Solomon, and Zhang, 2025]. At the heart of all of these recent improvements, there is some form of a sublinear time algorithm. Unfortunately, sublinear time algorithms as a whole almost always require randomization. This raises a natural question: can the deterministic time complexity of the problem be reduced below the $\tilde O(m\sqrt{n})$ barrier?

\smallskip

In this paper, we answer this question in the affirmative. We present a deterministic almost-linear time $(\Delta+1)$-coloring algorithm, namely, an algorithm running in $m \cdot 2^{O(\sqrt{\log \Delta})} \cdot \log n = m^{1+o(1)}$ time. Our main technical contribution is to entirely forego sublinear time algorithms. We do so by presenting a new deterministic color-type sparsification approach that runs in almost-linear (instead of sublinear) time, but can be used to color a much larger set of edges.
\end{abstract}

\clearpage

\setcounter{tocdepth}{3}
\tableofcontents
\clearpage
\pagenumbering{arabic}

\clearpage
\setcounter{page}{1}

\section{Introduction}

Let $G = (V, E)$ be a simple and undirected $n$-vertex $m$-edge graph with maximum degree $\Delta$. A {\em proper $\mu$-edge coloring} (shortly, $\mu$-coloring) $\chi : E \rightarrow \{1, 2, \ldots, \mu\}$ of $G$,
for an integer parameter $\mu \in \mathbb{N}^+$, is an assignment of colors $\chi(e)$ to each edge $e \in E$, where no two adjacent edges receive the same color.
Since the colors of all edges incident on each vertex must be distinct, any proper edge coloring requires at least $\Delta$ different colors. While $\Delta$ colors always suffice in bipartite graphs, some graphs (e.g.,  the triangle) require more than $\Delta$ colors. Vizing's Theorem asserts that $\Delta+1$ colors are always sufficient~\cite{Vizing}. Furthermore, even for small values of $\Delta$, the problem of determining whether $\Delta$ colors suffice is NP-hard~\cite{holyer1981np}.
The focus of this work is on the {\bf deterministic} time complexity of $(\Delta+1)$-coloring in {\em general} graphs. 

Vizing's original proof is easily translated to a deterministic $O(mn)$ time algorithm. This deterministic time bound was improved to $\tilde{O}(m\sqrt{n})$ in the 1980s, independently by \cite{arjomandi1982efficient} and \cite{gabow1985algorithms} (more recently~\cite{sinnamon2019fast} proved that the algorithm of~\cite{gabow1985algorithms} in fact achieves a clean $O(m\sqrt{n})$ time).
In {\em bipartite} graphs, deterministic near-linear time algorithms (even for $\Delta$-coloring) are known since the 80s  \cite{cole1982edge,combinatorica/ColeOS01,alon2003simple,goel2010perfect}. 
However, in general graphs, the problem is much more challenging. 
Allowing {\em randomization}, a sequence of recent papers \cite{sinnamon2019fast,
BhattacharyaCCSZ24,Assadi24,BhattacharyaCSZ24,ABBC2025}  improved the longstanding time bound of $\tilde O(m\sqrt{n})$, culminating in the randomized near-linear time $(\Delta+1)$-coloring algorithm of~\cite{ABBC2025} (we discuss these papers in more detail below).
Nonetheless, the following fundamental question remains open: 

\begin{wrapper}
Can the {\bf deterministic} time complexity of $(\Delta+1)$-coloring in {\bf general} graphs be reduced below the bound of $\tilde O(m\sqrt{n})$ barrier?
\end{wrapper}

In this paper, we present a deterministic almost-linear time $(\Delta+1)$-coloring algorithm.

\begin{theorem} \label{thm:main:1}
    There is a deterministic algorithm that, for any  $n$-vertex $m$-edge graph $G = (V, E)$ with maximum degree $\Delta$, computes a $(\Delta + 1)$-coloring of $G$ in $m \cdot 2^{O(\sqrt{\log \Delta})} \cdot \log n = m^{1+o(1)}$  time.
\end{theorem}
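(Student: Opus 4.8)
The plan is to build the final coloring through $O(\log n)$ successive \emph{extension tasks}, each of which will run in $m \cdot 2^{O(\sqrt{\log \Delta})}$ time, for a total of $m \cdot 2^{O(\sqrt{\log \Delta})} \cdot \log n$. Using standard machinery (Eulerian partitioning / degree halving as in \cite{gabow1985algorithms}, or iterated peeling of the uncolored subgraph), it suffices to solve the following task: given a proper partial $(\Delta+1)$-coloring $\chi$ of $G$ and a set $U$ of uncolored edges that, after peeling, forms a matching, extend $\chi$ to color all of $U$ without ever using more than $\Delta+1$ colors. The classical way to color one edge $uv \in U$ is a Vizing fan at $u$ followed by flipping a Kempe chain, i.e.\ an $(\alpha,\beta)$-alternating path with $\alpha \in \miss(v)$ and $\beta \in \miss(u)$; the bottleneck is that this chain can have length $\Omega(n)$, and there is no deterministic way to keep such chains short without first imposing some global structure on $\chi$. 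Our whole strategy is to impose exactly such a structure.

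The heart of the algorithm, and its main new component, is a deterministic subroutine $\Amplify$ that \emph{reshapes} $\chi$ --- by recoloring already-colored edges --- so that the number of distinct ``color types'' among the uncolored edges of $U$ drops to $2^{O(\sqrt{\log \Delta})}$, in $m \cdot 2^{O(\sqrt{\log \Delta})}$ time, i.e.\ almost-linear rather than sublinear. Here the ``type'' of $uv$ is a canonical representative of the pair of color classes that an augmenting structure for $uv$ would traverse. The subroutine is recursive over the palette: at each level we partition the currently relevant set of colors into $2^{\Theta(\sqrt{\log \Delta})}$ buckets, recolor --- via short, local Vizing-chain flips $\Flip$ confined to a single pair of buckets --- so that every surviving uncolored edge has its relevant colors inside a bounded number of buckets, and then recurse inside each bucket on the restricted sub-palette. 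Since each level shrinks $\log(\text{palette size})$ by $\Theta(\sqrt{\log \Delta})$, the recursion has depth $O(\sqrt{\log \Delta})$, and since each level pays only an $O(1)$ multiplicative overhead in both the edge budget and the running time, the total overhead is $2^{O(\sqrt{\log \Delta})}$. This is precisely the trade advertised in the abstract: we are willing to touch and recolor a large portion of the graph in order to install this structure, which in turn lets us color a large portion of $U$ --- a trade the sublinear samplers of prior randomized work cannot make, and which is what removes the need for randomization.

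Once $\Amplify$ has made the set of types small, the uncolored edges are colored by a routine $\Extend$ that processes the (few) types one at a time: for a fixed type $t$, all uncolored edges of type $t$ request augmenting structures living in the same $O(1)$ color classes, so a single BFS-like sweep over those classes --- equivalently, Vizing chains whose length is now bounded by the number of types, i.e.\ $2^{O(\sqrt{\log \Delta})}$ --- colors all of them except the few whose chains collide; the colliding edges form a geometrically smaller sub-instance handled by a self-contained recursion contributing only a constant factor. This costs $\tilde O(m)$ per type and hence $m \cdot 2^{O(\sqrt{\log \Delta})}$ overall. Chaining $\Amplify$ and $\Extend$ colors all of $U$ in $m \cdot 2^{O(\sqrt{\log \Delta})}$ time, and iterating over the $O(\log n)$ extension tasks colors all of $E$; for the base case, once the palette has shrunk below $2^{\Theta(\sqrt{\log \Delta})}$ we simply invoke the classical $O(mn)$-time algorithm $\Vizing$, whose cost is absorbed into the bound.

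The step I expect to be the crux is the correctness and running-time analysis of $\Amplify$: one must show that the local recolorings used to ``align'' the types of different uncolored edges do not interfere destructively --- a flip that fixes one edge's type must not ruin many others' --- and that the fraction of uncolored edges lost at each of the $O(\sqrt{\log \Delta})$ recursion levels is small enough that a $1-o(1)$ fraction survives to the bottom (the rest being folded into the next extension task). This is exactly where determinism bites: a \emph{random} choice of bucketing would make the ``few collisions'' claim hold in expectation for free, so the technical work is to replace it with a constructive charging / potential argument --- bounding, for each already-colored edge, how many times it can be recolored across all levels --- that certifies the $2^{O(\sqrt{\log \Delta})}$ overhead and the bounded number of surviving types deterministically.
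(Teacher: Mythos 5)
Your high-level architecture matches the paper's: recursively partition the palette into $\eta = 2^{\Theta(\sqrt{\log\Delta})}$ buckets, at each level recolor so that each surviving uncolored edge has its two relevant colors inside a single bucket, recurse within buckets on edge-disjoint sub-instances, and balance to get recursion depth $O(\sqrt{\log\Delta})$ and total cost $m\cdot 2^{O(\sqrt{\log\Delta})}\cdot\log n$. You also correctly identify the crux as the interference analysis for the recolorings inside one level. However, two of your quantitative claims are wrong, and one of them would derail a proof attempt if you pursued it.

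The main problem is the survival rate. You claim a $1-o(1)$ fraction of $U$ survives to the bottom of the recursion, with $O(\log n)$ extension tasks overall. In fact each call to the type-sparsification subroutine necessarily discards a \emph{constant} fraction of the uncolored structures (the paper retains only $\geq \lambda/100$ of the u-fans), so across $O(\sqrt{\log\Delta})$ levels only a $2^{-O(\sqrt{\log\Delta})}$ fraction survives. Consequently the outer loop must repeat $2^{O(\sqrt{\log\Delta})}\log\lambda$ times (this, plus the $O(\log\Delta)$-deep Euler recursion, is where the $\log n$ and an extra $2^{O(\sqrt{\log\Delta})}$ come from), rather than $O(\log n)$ near-lossless passes. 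Your claim that after sparsification the relevant alternating paths have length bounded by the number of types is also incorrect: individual paths can still be $\Omega(n)$ long; what is controlled is that all maximal alternating paths of a \emph{fixed} type have total length $O(n)$, so one can process a single most-frequent type at a time. Finally, the one-level mechanism you leave as ``a constructive charging / potential argument bounding how many times each already-colored edge is recolored'' is not the right potential: the paper runs $\eta$ rounds targeting the diagonal bucket pairs one at a time, keeps a separable collection of u-fans, and in round $k$ selects a batch of ``good'' (non-damaging) u-fans whose $k$-relevant alternating paths are pairwise edge-disjoint and hence flippable simultaneously. The deterministic counting charges each potential damage event to the \emph{social u-fan it would damage}, shows each such u-fan is responsible for only $O(\eta)$ damages (and a suitably chosen diagonal makes this $O(1)$ on average), and then extracts a good batch of size $\Omega(\lambda/\eta^2)$ by averaging over $O(\eta^2)$ bucket pairs. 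That charging structure, plus the edge-disjointness claim for the relevant paths, is the content of the lemma your sketch defers; without specifying it the argument has a genuine hole.
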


All recent randomized edge coloring algorithms that break the $\tilde{O}(m\sqrt{n})$ bound rely crucially on some form of a {\em sublinear time} algorithm or subroutine. For example, the algorithms in~\cite{BhattacharyaCCSZ24,BhattacharyaCSZ24,ABBC2025} use randomization to find one or many color-alternating paths in $\tilde{O}(n)$ time, and the algorithm of \cite{Assadi24} iteratively finds large matchings in $\tilde{O}(n)$ time. 
While sublinear time algorithms are a very powerful tool for designing randomized algorithms, they are almost always impossible to de-randomize. 

Our algorithm in~\Cref{thm:main:1} builds on the framework of \cite{ABBC2025} and de-randomizes their main ``color type reduction'' technique, by crucially replacing their sublinear time subroutine with a new ``gradual sparsification approach'' that we introduce in this paper. At a high level, this algorithm is considerably slower and runs in almost-linear time---compared to the sublinear (namely, $O(m/\Delta)$) time guarantee of its counterpart in~\cite{ABBC2025}---but can instead apply the required ``type reduction'' to an almost constant fraction of the graph---as opposed to only $\Theta(1/\Delta)$ fraction in~\cite{ABBC2025}. We elaborate more on our techniques as well as a technical comparison with prior work in~\Cref{sec:tech overview}.

\subsection{Previous Work}
\subsubsection{Recent Randomized Algorithms}

As mentioned, the state-of-the-art deterministic runtime for $(\Delta+1)$-coloring is $O(m\sqrt{n})$ due to~\cite{sinnamon2019fast}, which provides a more careful 
analysis of the $\tilde{O}(m\sqrt{n})$ time algorithm of~\cite{gabow1985algorithms} (see also~\cite{arjomandi1982efficient}). Moreover,~\cite{sinnamon2019fast} also presented a much simpler randomized algorithm with the same $O(m\sqrt{n})$ runtime that succeeds with
(exponentially) high probability. The first polynomial improvement over this longstanding $\tilde{O}(m\sqrt{n})$ time bound was obtained in two independent works \cite{Assadi24} and
\cite{BhattacharyaCCSZ24}, presenting two randomized algorithms with the incomparable time bounds of $\tilde{O}(n^2)$ 
and $\tilde{O}(mn^{1/3})$, respectively. 
In a follow-up work, the $\tilde{O}(mn^{1/3})$ randomized time bound of~\cite{BhattacharyaCCSZ24} was improved to $\tilde{O}(mn^{1/4})$ time in \cite{BhattacharyaCSZ24}. 
Finally, a randomized near-linear time algorithm was presented in \cite{ABBC2025}; more specifically, the algorithm of \cite{ABBC2025} achieves a time bound of $O(m \log \Delta)$ with high probability.

\subsubsection{Other Related Work}

There is a growing body of work on fast algorithms {for edge coloring} that use more than $\Delta + 1$ colors. A simple \emph{randomized} near-linear time algorithm that uses $\Delta+\tilde{O}(\sqrt{\Delta})$ colors
was given already in the 80s  \cite{karloff1987efficient}. More recently,~\cite{Assadi24} improved these bounds to a $\Delta + O(\log{n})$ coloring in $O(m\log{\Delta})$ expected time. There are also recent algorithms that run in near-linear (and sometimes linear) time for $(1+\epsilon)\Delta$-coloring~\cite{duan2019dynamic,BhattacharyaCPS24,elkin2024deterministic,bernshteyn2024linear,dhawan2024simple}, for a constant $\epsilon \in (0,1)$; while the algorithm of  \cite{elkin2024deterministic} is deterministic, all the others are randomized. 

The edge coloring problem has also been studied extensively in restricted graph classes. As mentioned, in bipartite graphs, a $\Delta$-coloring can be computed deterministically in $\tilde{O}(m)$ time~\cite{cole1982edge,combinatorica/ColeOS01,alon2003simple,goel2010perfect}; in particular, a deterministic time bound of $O(m \log \Delta)$ was achieved in \cite{combinatorica/ColeOS01}. In bounded degree graphs, one can deterministically compute a $(\Delta+1)$-coloring in $\tilde{O}(m \Delta)$ time \cite{gabow1985algorithms}, and a randomized algorithm generalizing this result for
bounded arboricity graphs was given in~\cite{BhattacharyaCPS24b}; refer also to \cite{BhattacharyaCPS24c,ChristiansenRV24,Kowalik24} for additional recent deterministic and randomized algorithms on edge coloring in bounded arboricity graphs.
There are also works on edge coloring in subfamilies of bounded arboricity graphs, and in
particular in bounded tree-width graphs,
planar graphs  and bounded genus graphs~\cite{chrobak1989fast,chrobak1990improved,cole2008new}. 

Furthermore, edge coloring has been studied extensively and intensively in various computational models over the past few years, including the dynamic setting~\cite{BarenboimM17,BhattacharyaCHN18,duan2019dynamic,Christiansen23,BhattacharyaCPS24,Christiansen24}, distributed computing ~\cite{panconesi2001some,elkin20142delta,fischer2017deterministic,ghaffari2018deterministic,GrebikP20,balliu2022distributed,ChangHLPU20,Bernshteyn22,Christiansen23,Davies23},
online algorithms ~\cite{CohenPW19,BhattacharyaGW21,SaberiW21,KulkarniLSST22,BilkstadSVW24,BlikstadOnline2025,dudeja2024randomizedgreedyonlineedge}, streaming algorithms ~\cite{BehnezhadDHKS19,behnezhad2023streaming,chechik2023streaming,ghosh2023low}, and sampling algorithms \cite{abdolazimi2022matrix,wang2024sampling,chen2025decay}.

\subsection{Organization} 
In Section \ref{sec:tech overview} we provide a detailed technical overview of our work, which also includes comparison with previous work. In \Cref{sec:prelim} we give the necessary notation and preliminaries. In \Cref{sec:main} we demonstrate how our main result relies on three subroutines; we state three lemmas that describe the behaviour of these subroutines and use them to prove \Cref{thm:main:1}. The rest of the paper is devoted to presenting these subroutines and formally analyzing them by proving these lemmas.

\section{Technical Overview}\label{sec:tech overview}

To highlight the main ideas behind our approach, in this section we instantiate our algorithm on bipartite graphs. Let $G = (V, E)$ be the input bipartite graph with maximum degree $\Delta$. Consider any {\bf partial $\Delta$-coloring} $\chi : E \longrightarrow [\Delta] \cup \{ \bot \}$ of $G$, where $\{ e \in E : \chi(e) = \bot \}$ is the set of {\bf uncolored} edges. We say that $\chi$ is {\bf valid} iff no two edges incident on the same vertex receive the same color (from $[\Delta]$) under $\chi$. Unless explicitly stated otherwise, we will only consider valid partial colorings. Consider any subset $S \subseteq \{ e \in E : \chi(e) = \bot \}$ of uncolored edges. The phrase {\bf extending the coloring $\chi$ to $S$} refers to the following operation: Change the colors (under $\chi$)  of some edges in $\{ e \in E : \chi(e) \neq \bot\}$ and assign a color  $\chi(e) \in [\Delta]$ to every edge  $e \in S$, while ensuring that $\chi$  remains a valid partial coloring of $G$.

In the rest of this section, we present an overview of the proof of the theorem below; in the runtime bound we will not optimize the logarithmic factors.

\begin{theorem}
\label{thm:bipartite}
There is a deterministic algorithm that,
given as input a {\em bipartite} graph $G = (V, E)$ with $n$ vertices, $m$ edges and maximum degree $\Delta$, and a valid partial coloring $\chi : E \rightarrow [\Delta] \cup \{ \bot \}$ such that the set $U = \{ e \in E : \chi(e) = \bot\}$ of uncolored edges forms a matching, extends $\chi$ to the edges in $U$ in time $\tilde{O}\left(m \cdot 2^{\Theta(\sqrt{\log \Delta})}\right)$.
\end{theorem}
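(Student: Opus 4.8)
The plan is to prove \Cref{thm:bipartite} by a recursive color-type reduction on the palette, mirroring the randomized framework of \cite{ABBC2025} but replacing the sublinear-time subroutine with a deterministic almost-linear-time gradual sparsification procedure. Fix a valid partial $\Delta$-coloring $\chi$ whose uncolored set $U$ is a matching. First I would observe that in a bipartite graph we have Kempe/alternating paths that are genuinely paths (no odd cycles), so for any uncolored edge $uv$ and any two colors $\alpha,\beta$, the $\alpha\beta$-alternating path starting at $u$ ends at a vertex other than $v$; flipping it frees a common missing color at the endpoints of $uv$, and this costs $O(\Delta)$ time per edge, i.e.\ $O(m\sqrt n)$ overall if done naively. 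The point of the theorem is to beat this by batching. The natural parameter is the number of \emph{color types}: group the $\Delta$ colors into $t \approx 2^{\Theta(\sqrt{\log\Delta})}$ buckets of size $\Delta/t$ each, and think of each uncolored edge as wanting a color from some bucket. The recursion will have $\Theta(\sqrt{\log\Delta})$ levels; at each level we reduce the effective palette size by a factor of roughly $2^{\Theta(\sqrt{\log\Delta})}$, so after all levels the palette is constant-sized and the problem is solved directly. The total cost telescopes to $m \cdot 2^{\Theta(\sqrt{\log\Delta})} \cdot \polylog(n)$.

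**The sparsification step.** The heart of the argument is the following: given the current coloring, I want to make the \emph{distribution of color types} around each vertex ``balanced'' — so that an almost-constant fraction of uncolored edges can be colored by a short alternating path confined to a single color class, and the alternating paths stay short. Concretely, I would process the graph in phases; in each phase I take all currently-uncolored edges and, using a Hungarian-style / Hopcroft--Karp-style augmenting argument restricted to the bipartite structure of ``edges $\times$ free color types,'' I simultaneously route an almost-constant fraction of them along vertex-disjoint augmenting structures. The deterministic near-linear bipartite $\Delta$-edge-coloring algorithms of \cite{combinatorica/ColeOS01,cole1982edge,alon2003simple,goel2010perfect} are exactly the tool here: I would invoke them as a black box on an auxiliary bounded-degree bipartite graph (the ``type graph'') whose degree is $\poly(t) = 2^{O(\sqrt{\log\Delta})}$, pay $\tilde O(m)$ per phase, and color a $1 - 1/\poly(t)$ fraction. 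Since the fraction colored per phase is almost-constant, $O(\log \Delta)$ phases — actually $\poly(t)\cdot\log n$ — suffice to exhaust $U$ at this recursion level, but critically the \emph{work is concentrated on few levels} because each level shrinks the palette multiplicatively, not additively.

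**Combining across levels and the runtime bookkeeping.** After the sparsification step at a given level has colored most uncolored edges, I recurse on the (small) leftover set, which is again a matching, but now each leftover edge is restricted to a palette that is a factor $\approx t$ smaller — so I apply the same lemma with $\Delta$ replaced by $\Delta/t$, and so on. Choosing $t = 2^{\Theta(\sqrt{\log\Delta})}$ balances the two sources of cost: the branching factor $\poly(t)$ per level versus the number of levels $\log_t \Delta = \Theta(\sqrt{\log\Delta})$, giving $\poly(t)^{\Theta(\sqrt{\log\Delta})} = 2^{\Theta(\sqrt{\log\Delta})}$ as the overhead multiplying $m\,\polylog(n)$. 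I would organize this with a clean potential/recursion inequality $T(\Delta) \le \poly(t)\cdot\log n \cdot \big(\tilde O(m) + T(\Delta/t)\big)$ and unroll it; the base case $\Delta = O(1)$ is handled by flipping $O(1)$-length alternating paths directly.

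**Main obstacle.** The genuinely hard part — and where I expect to spend the most care — is the sparsification lemma itself: arguing that a \emph{deterministic}, \emph{almost-linear-time} procedure can color an almost-constant fraction of the uncolored matching while (a) keeping all alternating paths short enough that the per-edge recoloring cost is bounded, and (b) preserving the invariant that the leftover uncolored set is still a matching with a shrunken palette. The randomized version in \cite{ABBC2025} gets away with a sublinear subroutine because it only needs to handle a $1/\Delta$ fraction at a time and can afford random sampling to guarantee short paths; derandomizing this without blowing up the path lengths requires the new gradual-sparsification idea, essentially a deterministic ``round-robin over color types with amortized path-length accounting.'' Making the path-length bound and the disjointness of the augmenting structures compatible in a deterministic schedule is the crux. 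Everything else — the recursion setup, the bipartite black boxes, the runtime telescoping — is routine by comparison.
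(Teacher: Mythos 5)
Your high-level recursive skeleton matches the paper: partition the palette into $t = 2^{\Theta(\sqrt{\log\Delta})}$ buckets, recurse with depth $\Theta(\sqrt{\log\Delta})$ on edge-disjoint subgraphs, and balance the branching factor $\poly(t)$ against the depth to get $m\cdot 2^{\Theta(\sqrt{\log\Delta})}\cdot \polylog(n)$. You also correctly identify that everything hinges on a deterministic, almost-linear-time sparsification lemma that makes a constant fraction of $U$ ``respect'' the partition into buckets while preserving the matching invariant. So far so good.

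The gap is in the sparsification lemma itself, which you concede is the crux but do not actually prove. Your proposed mechanism---reduce to an auxiliary ``type graph'' of degree $\poly(t)$ and call one of the known deterministic near-linear bipartite $\Delta$-coloring algorithms as a black box---is not well defined, and I do not see how to make it so. The obstacle is precisely the one you name but leave unresolved: a recoloring that makes one uncolored edge's type land in a diagonal block is realized by flipping alternating paths in $G$, and two such flips for different uncolored edges are not independent (they can collide, or one can un-align an edge that was already aligned). A matching in a bipartite ``edges $\times$ color-types'' graph does not encode these interactions, so producing it tells you nothing about whether the corresponding flips can be performed simultaneously. The paper's actual argument solves exactly this: it fixes a concrete family of ``$j$-relevant'' types via the bijections $\phi_{i\to j}:C_i\to C_j$ (so only $O(\Delta)$ types per round can ever be flipped), proves that two characteristic alternating paths of edges in the same batch have types that are either equal or disjoint (\Cref{cl:batch:flip:bipartite} and \Cref{cl:batch:flip:bipartite:101}), hence can be flipped in any order with the same effect, and establishes by a counting argument that at every moment a good batch of size $\Omega(\lambda/\eta^2)$ exists (\Cref{cl:iteration:bipartite:1}, \Cref{cl:good:batch:bipartite}). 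Your other key intuition---keeping alternating paths short so that per-edge work is bounded---is also not how the runtime is controlled: individual paths can have length $\Theta(n)$; the paper instead amortizes via the observation (\Cref{cl:iteration:bipartite:0}) that for a fixed round $k$ there are only $O(\Delta)$ relevant types, and paths of a single type have total length at most $n$, so all relevant paths together have total length $O(\Delta n)=O(m)$. Without a concrete construction playing the role of the $\phi_{i\to j}$ bijections and the simultaneous-flip/disjointness argument, your proposal does not yield a proof of the sparsification step, and hence not of \Cref{thm:bipartite}.
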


We note that using a standard {\em Euler partitioning} technique, \Cref{thm:bipartite} immediately implies an $\tilde{O}\left(m \cdot 2^{\Theta(\sqrt{\log \Delta})} \right)$ time algorithm for $\Delta$-coloring a bipartite graph; see the last two paragraphs of \Cref{sec:forward:pointer} for a detailed discussion on this technique. Moreover, in this technical overview section, we ignore low-level implementation details of the supporting data structures, and hide polylogarithmic factors in runtime inside the $\tilde{O}(\cdot)$ notation. Finally, for simplicity of exposition, we assume that the input graph $G$ is almost-regular, as described below.

\begin{assumption}
\label{assume:regular}
In the input graph $G = (V, E)$, every vertex $v \in V$ has degree at least $\Omega(\Delta)$. Thus, as a corollary, we have $m = \Theta(n\Delta)$.
\end{assumption}

Before proceeding with the proof-sketch of \Cref{thm:bipartite}, we make two important remarks. First, the algorithmic framework developed in this section almost seamlessly extends to $(\Delta+1)$-coloring in general graphs, via the machinery of {\em u-fans} (see \Cref{sec:u-fans def}). Thus, although \Cref{thm:bipartite} in itself does not give us any new result, for it was known since the 1980s how to compute a $\Delta$-coloring in a bipartite graph in deterministic near-linear time~\cite{cole1982edge}, {\em our approach} for deriving \Cref{thm:bipartite} contains all the key conceptual insights that eventually lead to \Cref{thm:main:1}. Second, we use \Cref{assume:regular} purely for ease of exposition in this technical overview; this assumption does not take anything away from the key insights underpinning our approach.

\medskip
\noindent {\bf Roadmap:} In \Cref{sec:bipartite:prelim}, we introduce some key notations and concepts. \Cref{sec:bipartite:framework} presents \Cref{th:bipartite:sparsification}, which summarizes a {\em type sparsification} framework that underpins our entire algorithm. Immediately after stating \Cref{th:bipartite:sparsification}, we explain how it implies \Cref{thm:bipartite}. In \Cref{sec:randomized:bipartite}, we present a  {\em randomized} algorithm for type sparsification. In \Cref{sec:compare:bipartite}, we compare this new randomized algorithm against the algorithm of \cite{ABBC2025}, and point out the barrier towards efficiently derandomizing the latter.  \Cref{sec:deterministic:bipartite} demonstrates how the new randomized algorithm from \Cref{sec:randomized:bipartite} helps us overcome this barrier. Specifically, we derandomize the algorithm from \Cref{sec:randomized:bipartite}, which leads us to the proof of \Cref{th:bipartite:sparsification}.

\subsection{Notations and Preliminaries} 
\label{sec:bipartite:prelim}
Let $C = \{1, \ldots, \Delta\}$ be the palette of available colors. Moreover, let $\miss_{\chi}(u) := \{ \alpha \in C : \chi(u, v) \neq \alpha \text{ for all } (u, v) \in E\}$ denote the set of {\bf missing colors} at a vertex $u \in V$, w.r.t.~$\chi$. A {\bf type} is an unordered pair of distinct colors from $C$. For any two subsets $A, B \subseteq C$, we slightly abuse the notation and define $A \times B := \{ \{\alpha, \beta \} : \alpha \in A, \beta \in B \}$ to be the set of types with one color in $A$ and the other color in $B$. Given any type $\tau = \{ \alpha, \beta \} \in C \times C$
an {\bf alternating path} $P$ of type $\tau$ is a {\em maximal} path in $G = (V, E)$ whose edges are alternatively colored with $\alpha$ and $\beta$. We use the phrase {\bf flipping the alternating path $P$} to denote an operation where every edge on $P$ with color $\alpha$ (resp.~$\beta$) changes its color to $\beta$ (resp.~$\alpha$). It is easy to verify that the underlying partial coloring continues to remain valid even after we flip an alternating path. We say that {\bf an uncolored edge $(u, v) \in E$ is of type $\{\alpha, \beta\}$ iff $\alpha \in \miss_{\chi}(u)$ and $\beta \in \miss_{\chi}(v)$} (or vice versa). We say that an {\bf  alternating path (resp.~uncolored edge) is of type $A \times B$ iff it is of type $\tau$ for some $\tau \in A \times B$}.

\subsection{The Framework: Type Sparsification}
\label{sec:bipartite:framework}

The theorem below encapsulates our main technical contribution in this paper. 

\begin{theorem}
\label{th:bipartite:sparsification}
Consider any palette $C$ of colors, and  any  valid partial coloring $\chi : E \longrightarrow C \cup \{ \bot \}$ of the input graph $G = (V, E)$, such that the set $U = \{ e \in E : \chi(e) = \bot\}$ of uncolored edges forms a matching. Define $\lambda = |U|$. Fix any parameter $\eta \in [|C|]$ such that $|C|/(2\eta)$ is an integer, and any partition of the palette $C$ into $\eta$ subsets $\C_1, \ldots, \C_{\eta} \subseteq C$, such that $|{\C_i}| = |C|/\eta$ for all $i \in [\eta]$.

Then, in $\tilde{O}(m \cdot \poly(\eta))$ time, we can change the colors assigned to some of the edges in $E \setminus U$, and ensure that after these changes a constant fraction of the edges in $U$ have types in $\bigcup_{k=1}^{\eta} (\C_k \times \C_k)$.
\end{theorem}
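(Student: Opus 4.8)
The plan is to establish \Cref{th:bipartite:sparsification} by first giving a randomized algorithm with the claimed guarantee and then derandomizing it. Throughout, I work with the combinatorial core of the statement. Since $U$ is a matching, each of its edges $(u,v)$ has two distinct endpoints, each incident to exactly one uncolored edge, and the only data that matters at an endpoint $x$ is the missing set $\miss_\chi(x)$; the edge $(u,v)$ is \emph{good} (has a type in $\bigcup_k \C_k \times \C_k$) precisely when some part $\C_k$ meets both $\miss_\chi(u)$ and $\miss_\chi(v)$. The only recoloring move I allow myself on $E \setminus U$ is flipping a maximal $\alpha\beta$-alternating path, which transposes $\alpha$ and $\beta$ inside the missing sets of the path's two endpoints and leaves all other missing sets untouched. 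One regime is trivial: if $|C|/\eta > \Delta$ then $|\miss_\chi(x) \cap \C_k| \geq |\C_k| - \deg(x) > 0$ for every $x$ and $k$, so every edge of $U$ is already good; hence assume $|C| \leq \eta \Delta$.

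For the randomized algorithm, assign each $(u,v) \in U$ an independent uniformly random target $k(u,v) \in [\eta]$; the aim is to make $\miss_\chi(u)$ and $\miss_\chi(v)$ both meet $\C_{k(u,v)}$ for as many edges as possible. For an endpoint $x$ of $(u,v)$ with $\miss_\chi(x) \cap \C_{k(u,v)} = \emptyset$ --- so every color of $\C_{k(u,v)}$ is present at $x$ --- pick $c_x \in \miss_\chi(x)$ and $\gamma_x \in \C_{k(u,v)}$ and flip the maximal $c_x\gamma_x$-alternating path $P_x$ at $x$ (it starts with the $\gamma_x$-edge at $x$), after which $x$ misses $\gamma_x \in \C_{k(u,v)}$. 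The difficulty is that these flips interfere when executed simultaneously: flipping $P_x$ also rewrites the missing set at the far endpoint of $P_x$, which may destroy the good property just established elsewhere. I would control this by (i) only ever flipping short paths --- truncating $P_x$ at some length $T = \poly(\eta)$, and when no admissible path at $x$ is that short, abandoning $(u,v)$ (affordable, since only a constant fraction of $U$ must end up good), with the number of abandoned edges bounded as part of the analysis; (ii) first computing all the truncated paths and then executing only a pairwise vertex-disjoint subfamily of them, so that their flips commute and each still does its job; and (iii) using the randomness of $k(u,v)$ and $\gamma_x$ to bound the expected number of abandoned and of ``collided'' edges, so that a constant fraction of $U$ survives as good. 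Handling the truncation boundaries and the batched treatment of the few long alternating chains cleanly --- in the spirit of the $O(m\sqrt n)$ algorithms~\cite{gabow1985algorithms} --- is the technically heaviest ingredient. (Equivalently, one may view the whole step as computing a ``balanced'' reassignment of the colored edges into the $\eta$ parts so that for a constant fraction of $(u,v)$ the part in which $u$ has slack coincides with the one in which $v$ does --- a degree-constrained-subgraph/flow computation that is near-linear on bipartite graphs.)

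To derandomize, I would fix the random choices --- the targets $k(u,v)$ and the colors $\gamma_x$ --- one at a time by the method of conditional expectations, guided by a pessimistic estimator: a cheaply computable lower bound on the number of edges of $U$ that the procedure renders good, obtained from the tentative-path structure minus a charge for abandonments and collisions. The ``gradual'' feature is that I would not draw $k(u,v) \in [\eta]$ in one shot but determine its $\log_2 \eta$ bits over $\log_2 \eta$ stages (refining the palette partition accordingly), which keeps each conditional-expectation update local and lets the whole derandomization run within $\tilde O(m \cdot \poly(\eta))$ time with the supporting data structures. I expect the main obstacles to be exactly (a) the interference control above --- proving a length/collision bound strong enough that a constant fraction of $U$ survives, which seems to genuinely need the truncation-and-batching machinery rather than a one-line Markov bound --- and (b) designing a pessimistic estimator that is simultaneously a valid bound and efficiently maintainable under the conditioning updates; a secondary nuisance is vertices far from regular (\Cref{assume:regular} notwithstanding), where $\miss_\chi(x)$ may be a single color that one cannot afford to waste, for which I would first preprocess to free a controlled number of additional colors at such vertices.
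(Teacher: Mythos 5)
Your proposal is missing the single most important structural device in the paper's argument, and without it the rest of the plan does not go through. The paper does not flip arbitrary $\{c_x,\gamma_x\}$-alternating paths with $c_x \in \miss_\chi(x)$ and $\gamma_x \in \C_k$: it fixes a bijection $\phi_{i\to j}\colon C_i \to C_j$ between equally sized color blocks and only ever flips paths whose type is $\{\alpha, \phi_{i\to j}(\alpha)\}$ for some $\alpha$, the so-called $j$-relevant types. This restriction does three things simultaneously that your plan has no substitute for. First, for each $j$ there are only $O(\Delta)$ $j$-relevant types, so the total length of all paths the round could ever touch is $O(\Delta n) = O(m)$ (\Cref{cl:iteration:bipartite:0}); you allow $\Theta(\Delta^2)$ types, so you get no such bound and are forced into truncation, which you yourself flag as the ``heaviest'' unresolved ingredient. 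Second, and more critically for correctness, it bounds \emph{damage}: when a flip ends at the endpoint of an already-aligned edge $e'$, $e'$ is only ruined if the flipped path is one of the few relevant types consistent with $e'$'s colors, so each previously aligned edge is responsible for $O(1)$ or $O(\eta)$ bad edges (\Cref{ob:bipartite:100}, \Cref{ob:bipartite:101}). With your free choice of $c_x$ and $\gamma_x$, an already-good edge can in principle be reached (and destroyed) by many more paths, and your collision-control step (``execute only a pairwise vertex-disjoint subfamily'') only protects against interference among the new flips, not against destroying previously placed edges across iterations. Third, for edges of a fixed type $C_i\times C_j$, the $\phi$-structure makes the characteristic paths either identical or pairwise color-disjoint (\Cref{cl:batch:flip:bipartite}, \Cref{cl:batch:flip:bipartite:101}); that is what lets the paper flip an entire batch simultaneously and deterministically. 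You instead invoke a vertex-disjoint subfamily whose existence and size you do not establish.

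Your derandomization route is also genuinely different from the paper's and, as written, is not a proof. The paper does not use conditional expectations at all: it deterministically computes the set of good edges in $\tilde O(m)$ time (\Cref{cl:deterministic:bipartite:1}), observes by averaging that one of the $O(\eta^2)$ type-classes contains $\Omega(\lambda/\eta^2)$ good edges (\Cref{cl:good:batch:bipartite}), and flips that batch in one shot; repeating $O(\eta)$ times per round over $\eta$ rounds gives the theorem. You propose to fix the targets $k(u,v)$ and colors $\gamma_x$ by a pessimistic estimator whose form you leave open and which you acknowledge as an obstacle. Since the estimator must dominate interference and abandonment terms that you have not bounded even in the randomized setting, this is a gap rather than a detail. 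Finally, the truncation idea introduces a separate correctness concern you do not address: flipping a prefix of an alternating path does not preserve validity of the coloring, so ``truncating $P_x$'' must mean ``abandon if $P_x$ is long,'' and then you need an argument that abandonments are rare; the paper sidesteps all of this because its total-path-length bound makes full maximal flips affordable.
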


\medskip
\noindent {\bf Remark.} Unless explicitly specified otherwise, throughout the rest of \Cref{sec:tech overview} we will use the symbol $C = \{1, \ldots, \Delta\}$ to denote a palette of $\Delta$ colors, where $\Delta$ is the maximum degree of the input graph $G = (V, E)$. However, \Cref{th:bipartite:sparsification} holds even if $|C| < \Delta$, as long as $\chi : E \longrightarrow C \cup \{\bot\}$ remains a valid partial coloring. 

We say that \Cref{th:bipartite:sparsification} achieves {\bf type sparsification}, for the following reason. Initially, the types of the edges in $U$ are chosen from the set $C \times C = [\Delta] \times [\Delta]$, and there are $O(\Delta^2)$ many such types. However, after we run the deterministic algorithm guaranteed by this theorem, the types of a constant fraction of the edges in $U$ are chosen from  $\bigcup_{k=1}^\eta (\C_k \times \C_k)$, and there are $\eta \cdot O((\Delta/\eta)^2) = O(\Delta^2/\eta)$ many such types. So, for a constant fraction of the edges in $U$, the set of possible types shrinks by a factor of $\Theta(\eta)$; we say that such edges in $U$ {\bf survive} this type sparsification.

Intuitively, imagine that we have a $\eta\times \eta$ matrix representing a heatmap, where the $(i, j)$-th entry is the number of edges in $U$ whose type belongs to $\C_i\times \C_j$. Suppose that the number of edges with types in each $\C_i\times \C_j$ initially is more or less balanced, so that, for each $1\leq i, j\leq \eta$, the number of edges in $U$ with types in $\C_i\times \C_j$ is at most $|U| / \eta^2$. Then the heatmap initially looks quite uniform. The goal of \Cref{th:bipartite:sparsification} is to modify the partial coloring $\chi$ so that this matrix looks more diagonal. See \Cref{diag} for an illustration.

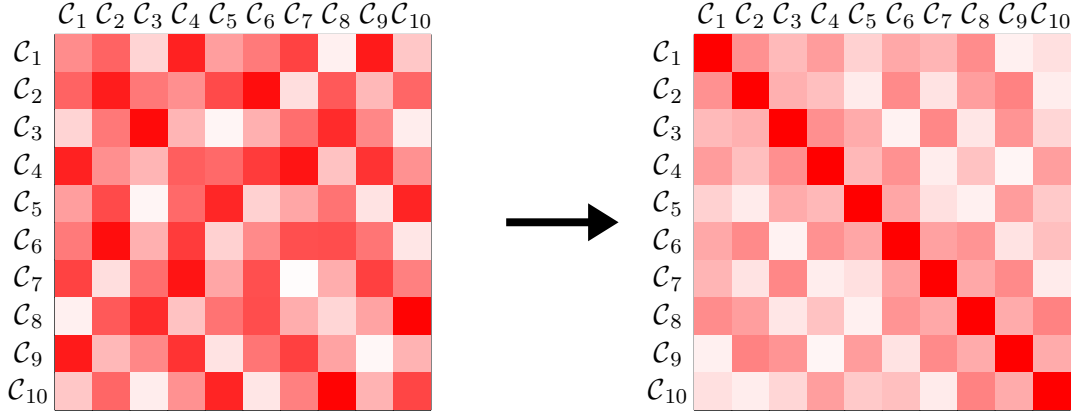
\begin{figure}
	\centering
	\begin{tikzpicture}[scale=0.5]
		\def\rows{10}
		\def\cols{10}
		
		\def\matrixdata{
			{0.45, 0.61, 0.17, 0.87, 0.38, 0.52, 0.74, 0.06, 0.90, 0.22},
			{0.61, 0.89, 0.53, 0.44, 0.71, 0.95, 0.13, 0.65, 0.28, 0.60},
			{0.17, 0.53, 0.96, 0.29, 0.04, 0.31, 0.57, 0.83, 0.47, 0.07},
			{0.87, 0.44, 0.29, 0.63, 0.59, 0.77, 0.92, 0.24, 0.80, 0.43},
			{0.38, 0.71, 0.04, 0.59, 0.85, 0.18, 0.35, 0.55, 0.11, 0.86},
			{0.52, 0.95, 0.31, 0.77, 0.18, 0.46, 0.69, 0.70, 0.54, 0.10},
			{0.74, 0.13, 0.57, 0.92, 0.35, 0.69, 0.01, 0.32, 0.75, 0.50},
			{0.06, 0.65, 0.83, 0.24, 0.55, 0.70, 0.32, 0.16, 0.36, 0.99},
			{0.90, 0.28, 0.47, 0.80, 0.11, 0.54, 0.75, 0.36, 0.03, 0.30},
			{0.22, 0.60, 0.07, 0.43, 0.86, 0.10, 0.50, 0.99, 0.30, 0.73}
		}
		
		\draw (0,0) grid (\cols, \rows);
		
		\foreach \i in {1,...,\rows} {
			\foreach \j in {1,...,\cols} {
				\pgfmathsetmacro{\val}{1-{\matrixdata}[\i-1][\j-1]}
	
				\definecolor{cellcolor}{rgb}{1, \val, \val} 
				\fill[cellcolor] (\j-1, \rows - \i) rectangle ++(1,1);
			}
		}
		
		\foreach \j in {1,...,\cols} {
			\node at (\j - 0.5, \rows + 0.5) {$\C_{\j}$};
		}
		
		\foreach \i in {1,...,\rows} {
			\node at (-0.5, \rows - \i + 0.5) {$\C_{\i}$ \,};
		}
		
		\draw[->,  >={Triangle}, thick, line width=3] (12, 5) -- (15, 5);
		
		\draw (17,0) grid (\cols+17, \rows);
		\def\diagmatrix{
			{1.00, 0.43, 0.27, 0.39, 0.18, 0.34, 0.29, 0.45, 0.06, 0.12},
			{0.43, 1.00, 0.31, 0.25, 0.08, 0.46, 0.11, 0.38, 0.49, 0.07},
			{0.27, 0.31, 1.00, 0.44, 0.33, 0.05, 0.47, 0.10, 0.42, 0.16},
			{0.39, 0.25, 0.44, 1.00, 0.28, 0.43, 0.07, 0.24, 0.04, 0.38},
			{0.18, 0.08, 0.33, 0.28, 1.00, 0.35, 0.12, 0.06, 0.39, 0.21},
			{0.34, 0.46, 0.05, 0.43, 0.35, 1.00, 0.37, 0.42, 0.11, 0.25},
			{0.29, 0.11, 0.47, 0.07, 0.12, 0.37, 1.00, 0.34, 0.46, 0.08},
			{0.45, 0.38, 0.10, 0.24, 0.06, 0.42, 0.34, 1.00, 0.33, 0.49},
			{0.06, 0.49, 0.42, 0.04, 0.39, 0.11, 0.46, 0.33, 1.00, 0.33},
			{0.12, 0.07, 0.16, 0.38, 0.21, 0.25, 0.08, 0.49, 0.33, 1.00}
		}
		
		\foreach \i in {1,...,\rows} {
			\foreach \j in {1,...,\cols} {
				\pgfmathsetmacro{\val}{1-{\diagmatrix}[\i-1][\j-1]}
				\definecolor{cellcolor}{rgb}{1, \val, \val} 
				\fill[cellcolor] (\j-1+17, \rows - \i) rectangle ++(1,1);
			}
		}
		
		Add column labels
		\foreach \j in {1,...,\cols} {
			\node at (\j - 0.5 + 17, \rows + 0.5) {$\C_{\j}$};
		}
		
		\foreach \i in {1,...,\rows} {
			\node at (-0.5 + 17, \rows - \i + 0.5) {$\C_{\i}$ \,};
		}
		
\end{tikzpicture}
	\caption{In this example, we have $\eta = 10$. The matrix heatmap is initially the left one. After we run the algorithm of \Cref{th:bipartite:sparsification}, most weights are concentrated along the diagonal.}
	\label{diag}
\end{figure}

We now explain why \Cref{th:bipartite:sparsification} implies \Cref{thm:bipartite}.
Fix a small $\epsilon \in (0, 1)$. For now, think of $\epsilon$ as being a small constant. Set $\eta := \Delta^\epsilon$.
For simplicity, suppose that $\Delta$ (the maximum degree of the input graph) is $2$ times an integer power $\eta$, i.e. $\Delta = 2 \cdot \eta^{q}$ for some $q \in \mathbb N$.\footnote{We only make this assumption in the technical overview to simplify the technical details, since this leads to a clean recursion tree. We do not need such an assumption in the actual proof.}

Next, we apply the type sparsification algorithm from \Cref{th:bipartite:sparsification} on the input graph $G = (V, E)$. This takes $\tilde{O}\left(m \cdot \Delta^{\Theta(\epsilon)}\right)$ time. Once the algorithm finishes execution, let $U^{(s)} \subseteq U$ be the set of surviving uncolored edges. Consider the following natural partition of $U^{(s)}$ into subsets $U_1, \dots, U_{\eta} \subseteq U^{(s)}$, where $U_i := \{ e \in U^{(s)} : e \text{ is of type } \C_i \times \C_i\}$ for each $i \in [\eta]$. \Cref{th:bipartite:sparsification} guarantees that $|U^{(s)}| \geq \lambda/\kappa$, where $\lambda = |U|$ and $\kappa > 0$ is a sufficiently large absolute constant. Now, for each $i \in [\eta]$, define the subgraph $\mathcal{G}_i = (V, \mathcal{E}_i)$, where $\mathcal{E}_i := U_i \cup \{e \in E : \chi(e) \in \mathcal{C}_i\}$. Let $\chi_i : \mathcal{E}_i \longrightarrow \C_i \cup \{ \bot \}$ denote the restriction of $\chi$ in $\mathcal{G}_i$, so that we have $\chi_i(e) = \chi(e)$ for all $e \in \mathcal{E}_i$. Note that $\chi_i$ is a valid partial coloring with the palette $\C_i$ in $\mathcal{G}_i$, and that $|{\C_i}| = \Delta/\eta$. Furthermore, the edge-sets $\{ \mathcal{E}_i\}_{i \in [\eta]}$ and the palettes $\{ \mathcal{C}_i \}_{i \in [\eta]}$ are both mutually disjoint.

We now recursively invoke the type sparsification algorithm from \Cref{th:bipartite:sparsification} on $(\mathcal{G}_i, \chi_i, \mathcal{C}_i)$, for all $i \in [\eta]$. We refer to the quantity  $|\mathcal{C}_i|$ as the {\bf palette-size} of the recursive call of $(\mathcal{G}_i, \chi_i, \mathcal{C}_i)$. The base-case of this recursive algorithm occurs when the palette-size becomes small, and in particular $2$ (by our assumption that $\Delta$ is $2$ times an integer power of $\eta$);\footnote{Without this assumption, the base case occurs when the palette size becomes $O(\eta)$. 
It is easy to extend the argument of this section to a base case with a palette size of $O(\eta)$, as we demonstrate in the actual proof.} at that point we do not make any further recursive calls.
Finally, the parameter $\eta$ remains  equal to $\Delta^\epsilon$, where $\Delta$ is the maximum degree of the initial input graph, {\em across all the recursive calls}.

It is easy to verify the following facts: (i) Since the palette-size decays by a factor of $\eta = \Delta^\epsilon$ at each recursion layer,
the recursion tree of this procedure has depth $1/\epsilon$. (ii) Across any two consecutive layers of this recursion tree, the number of surviving edges in $U$ drops by at most a factor of $\kappa$. Thus, across the very last layer of this recursion tree, the total number of surviving uncolored edges is $\Omega\left(\lambda/\kappa^{1/\epsilon} \right)$. (iii) Since the edge-sets and color palettes of different subgraphs at each layer are mutually disjoint, the time spent on each layer of this recursion tree is $\tilde{O}(m \cdot \text{poly}(\eta)) = \tilde{O}\left(m \cdot \Delta^{\Theta(\epsilon)} \right)$. So, the overall runtime across all the layers is $\tilde{O}\left(\epsilon^{-1} \cdot m \cdot \Delta^{\Theta(\epsilon)} \right)$.

Now, consider a ``node''\footnote{Not to be confused with a vertex in the input graph.} $\left( \mathcal{G}' = (V, \mathcal{E}'), \chi', \mathcal{C}'\right)$ at the last layer of this recursion tree. Let $\mathcal{U}' \subseteq U \cap \mathcal{E}'$ denote the surviving uncolored edges in $\mathcal{G}'$, under $\chi'$. Since $|{\C'}| = 2$, $\chi'$ is a valid partial coloring of $\mathcal{G}'$ with palette $\mathcal{C}'$, every edge in $\mathcal{U}'$ is of type $\mathcal{C}' \times \mathcal{C}'$, and the edges in $\mathcal{U}'$ form a matching, it is easy to verify that the subgraph $\mathcal{G}'$ has maximum degree $\leq 2$. Thus, in a straightforward manner, we extend the partial coloring $\chi'$ to a constant fraction of the edges in $\mathcal{U}'$ in deterministic $\tilde{O}\left(|\mathcal{E}'| \right)$ time by flipping alternating paths of at most one type, without impacting the other  ``nodes'' at the last layer of the recursion tree (as their palettes have no overlap with $\mathcal{C}'$).

To summarize, we infer that in deterministic $\tilde{O}\left(\epsilon^{-1} \cdot m \cdot \Delta^{\Theta(\epsilon)} \right)$ time, the above procedure extends the partial coloring $\chi$ to $(1/\kappa^{1/\epsilon})$-fraction of the initial set $U$ of uncolored edges. Repeating this procedure $\tilde{O}\left(\kappa^{1/\epsilon}\right)$ many times, we obtain a deterministic algorithm for extending the initial partial coloring $\chi$ to all the edges in $U$. This algorithm runs in $\tilde{O}\left(\kappa^{1/\epsilon} \cdot \epsilon^{-1} \cdot m \cdot \Delta^{\Theta(\epsilon)} \right)$ time. Now, \Cref{thm:bipartite} follows if we balance the relevant terms by setting $\epsilon := \Theta(1 / \sqrt{ \log \Delta})$.

\subsection{A Randomized Algorithm for Type Sparsification}
\label{sec:randomized:bipartite}

We now present a {\em randomized} algorithm that performs the same task as specified in the statement of \Cref{th:bipartite:sparsification}.  \Cref{th:round:bipartite:randomized} summarizes this result. For simplicity of exposition, throughout this section we assume that $C = \{1, \ldots, \Delta\}$ is a palette of $\Delta$ colors (see the remark immediately after \Cref{th:bipartite:sparsification}). We start by introducing a few more useful notations and terminologies.

For every $k \in [\eta]$, we further  partition (arbitrarily) the set $\C_k$ into two equally sized subsets $C_{2k-1} \subseteq \C_k$ and $C_{2k} = \C_k \setminus C_{2k-1}$, so that $|C_{2k-1}| = |C_{2k}| = \Delta/(2\eta) = r$ (say). Thus, the palette $C = [\Delta]$ is partitioned into subsets $C_1,  \ldots, C_{2\eta}$, where $\C_k = C_{2k-1} \cup C_{2k}$ for all $k \in [\eta]$. By relabeling the colors, w.l.o.g.~we assume that $C_i = [(i-1)r+1, ir]$ for all $i \in [2\eta]$. We say that a type $\tau \in C \times C$
is {\bf uniform} iff $\tau \in C_i \times C_i$ for some $i \in [2\eta]$, and {\bf aligned} iff $\tau \in C_{2k-1} \times C_{2k}$ for some $k \in [\eta]$. A type $\tau$ is {\bf social} if it is either uniform or aligned. Finally, we say that an edge $e \in U$ is {\bf social} if $e$ has a type that is social. In words, \Cref{th:bipartite:sparsification} asks us to change the colors of some edges in $E \setminus U$, so as to ensure that a constant fraction of the edges in $U$ become social.

At the start of our type sparsification algorithm, let $U_{\texttt{uni}} \subseteq U$ denote the collection of uncolored edges that have uniform types, and let $U^\star := U \setminus U_{\texttt{uni}}$ denote the remaining set of uncolored edges. By scanning through the edges in $U$, we can identify the subsets $U_{\texttt{uni}}$ and $U^\star$ in $\tilde{O}(|U| \cdot \Delta) = \tilde{O}(n\Delta) = \tilde{O}(m)$ time (see \Cref{assume:regular}). If $|U_{\texttt{uni}}| \geq |U|/2$, then we are done, since a constant fraction of the edges are already social. Thus, for the rest of this section, we assume that $|U^\star| \in [|U|/2, |U|] = [\lambda/2, \lambda]$. Our goal will be to ensure that $\Omega(\lambda)$ many  edges in $U^\star$ are of aligned types (by changing the colors of some of the edges in $E \setminus U$).

We will change the colors of some edges in $E \setminus U$, by flipping only some {\bf relevant} alternating paths. To be more specific, for any two distinct indices $i, j \in [2 \eta]$, define the mapping $\phi_{i \to j} : C_i \longrightarrow C_j$, as follows. For all $t \in [r]$, we have $\phi_{i \to j}((i-1)r+t) = (j-1)r+t$. In words,  $\phi_{i \to j}$ is a one-to-one function which maps the $t^{th}$ color in $C_i$ to the $t^{th}$ color in $C_j$. Observe that if $\phi_{i \to j}(\alpha) = \beta$, then $\phi_{j \to i}(\beta) = \alpha$; i.e., $\phi^{-1}_{i \to j} = \phi_{j \to i}$. We say that a {\bf  type $\tau$ is $j$-relevant, for some index $j \in [2\eta]$ iff $\tau = \{\alpha, \phi_{i \to j}(\alpha)\}$ for some $i \in [2\eta] \setminus \{j\}$ and $\alpha \in C_i$}; we say that {\bf an alternating path $P$ is $j$-relevant iff it is of a $j$-relevant type}; for convenience, define $\Gamma_j$ to be the set of all $j$-relevant types. Check \Cref{fig:color-match} for an illustration.

\begin{figure}
	\centering
	\begin{tikzpicture}[
		square/.style={draw, minimum size=1mm, fill=#1}, 
		node distance=0.5cm, 
		scale=0.8
		]
		
		\node[square=red] (1) at (0, 0) {};
		\node[square=OrangeRed] (2) at (0, 1) {};
		\node[square=BrickRed] (3) at (0, 2) {};
		\node[square=VioletRed] (4) at (0, 3) {};
		\node[square=Salmon] (5) at (0, 4) {};
		
		\draw[draw, thick] (0, 2) ellipse (1.5cm and 3cm);
		\node at (0,-1.5) {$C_i$};
		
		\node[square=cyan] (11) at (7, 0) {};
		\node[square=TealBlue] (12) at (7, 1) {};
		\node[square=MidnightBlue] (13) at (7, 2) {};
		\node[square=SkyBlue] (14) at (7, 3) {};
		\node[square=Cerulean] (15) at (7, 4) {};
		
		\draw[draw, thick] (7, 2) ellipse (1.5cm and 3cm);
		\node at (7,-1.5) {$C_j$};
				
		\draw[
		decorate, 
		decoration={snake, amplitude=0.8mm, segment length=3mm}, 
		thick, 
		gray 
		] (1) -- (11);
		
		\draw[
		decorate, 
		decoration={snake, amplitude=0.8mm, segment length=3mm}, 
		thick, 
		gray 
		] (2) -- (12);
		
		\draw[
		decorate, 
		decoration={snake, amplitude=0.8mm, segment length=3mm}, 
		thick, 
		gray 
		] (3) -- (13);
		
		\draw[
		decorate, 
		decoration={snake, amplitude=0.8mm, segment length=3mm}, 
		thick, 
		gray 
		] (4) -- (14);
		
		\draw[
		decorate, 
		decoration={snake, amplitude=0.8mm, segment length=3mm}, 
		thick, 
		gray 
		] (5) -- (15);
		
\end{tikzpicture}
	\caption{The mapping $\phi_{i \to j}$ defines a matching between colors in $C_i, C_j$. In this example we have drawn $5$ colors from color sets $C_i, C_j$, and colors connected by strings are matched together.}
	\label{fig:color-match}
\end{figure}

\begin{observation}
\label{obs:num:relevant:types}
For all $j \in [2 \eta]$, there are at most $O(\Delta)$ many types in $C \times C$ that are $j$-relevant.
\end{observation}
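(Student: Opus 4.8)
The plan is a direct counting argument straight from the definition. Recall that, by definition, a type $\tau$ is $j$-relevant exactly when $\tau = \{\alpha, \phi_{i \to j}(\alpha)\}$ for some index $i \in [2\eta] \setminus \{j\}$ and some color $\alpha \in C_i$. So I would first observe that every element of $\Gamma_j$ is the image of a pair $(i, \alpha)$ with $i \in [2\eta]\setminus\{j\}$ and $\alpha \in C_i$ under the map $(i,\alpha) \mapsto \{\alpha, \phi_{i\to j}(\alpha)\}$; hence the number of such pairs is an upper bound on $|\Gamma_j|$. There are $2\eta - 1$ choices for $i$, and for each fixed $i$ there are exactly $|C_i| = r = \Delta/(2\eta)$ choices for $\alpha$. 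Therefore $|\Gamma_j| \leq (2\eta - 1)\cdot \Delta/(2\eta) < \Delta = O(\Delta)$, which is exactly the claim.

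If a slightly sharper or more transparent bound is preferred, I would instead note that since $\phi_{i\to j}$ maps $C_i$ into $C_j$, every $j$-relevant type $\{\alpha, \phi_{i\to j}(\alpha)\}$ has exactly one of its two colors in $C_j$ and the other in $C \setminus C_j$; consequently $|\Gamma_j|$ is at most $|C \setminus C_j| = \Delta - r < \Delta$. Either way there is no genuine obstacle: the statement follows immediately from the definition of $j$-relevance together with the facts that there are only $2\eta$ color blocks and each has size $\Delta/(2\eta)$. The only minor point worth a remark is that two distinct pairs $(i,\alpha) \neq (i',\alpha')$ could in principle produce the same type, but since we only need an upper bound on $|\Gamma_j|$ this overcounting is harmless.
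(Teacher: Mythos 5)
Your proof is correct and essentially the same counting argument as the paper's: the paper fixes the color $\beta \in C_j$ and counts at most $2\eta - 1$ partners $\alpha$, obtaining $|C_j|\cdot(2\eta-1) = r(2\eta-1)$, while you fix the pair $(i,\alpha)$ and obtain the same product $(2\eta-1)\cdot r$. Your alternative observation that the non-$C_j$ color of a $j$-relevant type determines the type (hence $|\Gamma_j| \le |C\setminus C_j| = \Delta - r$) is a marginally cleaner packaging of the same count.
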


\begin{proof}
Suppose that $\tau = (\alpha, \beta)$ is a $j$-relevant type, and we are told that $\beta \in C_{j}$. Then, there are at most $(2\eta -1)$ choices for $\alpha$. Indeed, once we fix the index $i \in [2\eta] \setminus \{j\}$ such that $\alpha \in C_i$, there is only one unique choice for $\alpha$ which makes the type $\{\alpha, \beta\}$ $j$-relevant. Hence, there are at most $|C_{j}| \cdot (2\eta -1) = O((\Delta/\eta) \cdot \eta) = O(\Delta)$ many $j$-relevant types. 
\end{proof}

\subsubsection{Algorithm Description}
\label{sec:algo:describe:bipartite}

Our algorithm will consist of $\eta$ {\bf rounds}. In round $k \in [\eta]$, every alternating path we flip will be either $(2k-1)$-relevant or $(2k)$-relevant. In addition, we will satisfy the invariant stated below.

\begin{invariant}
\label{inv:round:bipartite} At the start of round $k \in [\eta]$, we have a collection of mutually-disjoint subsets $U_1, \ldots, U_{k-1} \subseteq U^\star$. For each $k' \in [k-1]$, the following two conditions hold.
\begin{enumerate}
\item $|U_{k'}| = \lambda/(100 \eta)$.
\item Every edge $e \in U_{k'}$ is of type $C_{2k'-1} \times C_{2k'}$. Thus, the set $U_{k'}$ consists only of social edges.
\end{enumerate}
\end{invariant}

The above invariant guarantees that at the end of the last (i.e., $\eta^{th}$ round), there are $\eta \cdot \lambda/(100 \eta) = \lambda/100$ social edges in $U^\star$, and so the algorithm terminates at this point. It remains to show how to implement a given round. Accordingly,  fix any index $k \in [\eta]$. {\bf We will now explain what happens in round $k$}.

At the start of the round, we have the sets $U_1, \ldots, U_{k-1} \subseteq U^\star$ satisfying \Cref{inv:round:bipartite}, and we initialize $U_k \leftarrow \emptyset$. Subsequently, we perform a sequence of {\bf iterations}. In a given iteration, we sample an edge $e = (u, v) \in U^\star \setminus (U_1 \cup \cdots \cup U_k)$ uniformly at random. If the iteration {\bf succeeds}, then we manage to ensure that the edge $e$ becomes of type $C_{2k-1} \times C_{2k}$ by changing the colors of some of the edges in $E \setminus U$, and we add the edge $e$ to the set $U_k$. Otherwise, the iteration {\bf fails}, and the set $U_k$, along with the colors assigned to the edges in $E \setminus U$, remains unchanged. In addition, we ensure that a successful iteration does not {\bf damage} any edge in $U_1 \cup \cdots \cup U_{k}$. More formally, during a successful iteration the sets $U_1, \ldots, U_{k-1}$ remain unchanged, and the only change that occurs in  $U_k$ is that the edge $e$ gets added to it. Furthermore, for each $k' \in [k]$, every edge $e' \in U_{k'}$ continues to be of type $C_{2k'-1} \times C_{2k'}$.  The round terminates after $\lambda/(100 \eta)$ successful iterations. It is easy to verify that \Cref{inv:round:bipartite} continues to hold at the start of the next round $(k+1)$.

{\bf We now focus on explaining how a given iteration works}. Let $e = (u, v) \in U^\star \setminus (U_1 \cup \cdots \cup U_k)$ be the edge we sample u.a.r.~at the start of the iteration, and suppose that $(u, v)$ is of type $\{\alpha, \beta\} \in C_i \times C_j$,  $i, j \in [2\eta]$,  $\alpha \in \miss_\chi(u) \cap C_i$, and $\beta \in \miss_\chi(v) \cap C_j$. To highlight the main idea, we focus on the most non-trivial scenario, which occurs when $\{i, j\} \cap \{2k-1, 2k\} = \emptyset$ and $\alpha \neq \beta$. W.l.o.g., suppose that $\alpha < \beta$.\footnote{Recall that each color is an integer in $[\Delta]$.} We designate the vertex $u$ as being the {\bf left endpoint} of $e$, and the vertex $v$ as being the {\bf right endpoint} of $e$. Let $\alpha_{2k-1} := \phi_{i \to (2k-1)}(\alpha)$ and $\beta_{2k} := \phi_{j \to 2k}(\beta)$. Let $P_u(e)$ (resp.~$P_v(e)$) be the alternating path starting from $u$ (resp.~$v$) that is of type $\{\alpha, \alpha_{2k-1}\}$ (resp.~$\{\beta, \beta_{2k}\}$).   Note that the paths $P_u(e)$ and $P_v(e)$ are respectively $(2k-1)$-relevant and $(2k)$-relevant (see \Cref{obs:relevant:bipartite}). {\bf We will use the notations $P_u(e)$ and $P_v(e)$ throughout the rest of this section.}

Ideally, we would like to flip the alternating paths $P_u(e)$ and $P_v(e)$: After these flips, we have $\alpha_{2k-1} \in \miss_\chi(u)$ and $\beta_{2k} \in \miss_{\chi}(v)$. So,  the edge $(u, v)$ becomes of type $C_{2k-1} \times C_{2k}$, and we can add the edge $(u, v)$ to the set $U_k$. There is, however, one serious problem with this strategy. Specifically, it might be the case that the path $P_u(e)$  ends at a vertex $u'$ that is the endpoint of an edge (say) $(u', v') \in U_{k'}$, for some $k' \in [k]$, and if we flip the path $P_u(e)$ then the edge $(u', v')$ will no longer be of type $C_{2k'-1} \times C_{2k'}$. (A similar situation can occur with the path $P_v(e)$.) In other words, flipping the path $P_u(e)$ would {\em damage} the edge $(u', v')$. In this case, we say that the edge $(u, v)$ is {\bf bad}, and furthermore, the edge $(u', v')$ is {\bf responsible} for the edge $(u, v)$ being bad. Check \Cref{fig:overview-flip} for an illustration.

\begin{figure}
	\centering
	\begin{tikzpicture}[thick,scale=0.8]
	\draw (1, 0) node(1)[circle, draw, color=cyan, fill=black!50,
	inner sep=0pt, minimum width=10pt, label = $v$] {};
	
	\draw (-1, 0) node(2)[circle, draw, color=teal, fill=black!50,
	inner sep=0pt, minimum width=10pt, label = $u$] {};
	
	\draw (3, 0) node(3)[circle, draw, fill=black!50,
	inner sep=0pt, minimum width=6pt] {};
	
	\draw (5, 0) node(4)[circle, draw, fill=black!50,
	inner sep=0pt, minimum width=6pt] {};
	
	\draw (7, 0) node(5)[circle, draw, fill=black!50,
	inner sep=0pt, minimum width=6pt] {};
	
	\draw (-3, 0) node(6)[circle, draw, fill=black!50,
	inner sep=0pt, minimum width=6pt] {};
	
	\draw (-5, 0) node(7)[circle, draw, fill=black!50,
	inner sep=0pt, minimum width=6pt] {};
	
	\draw (-7, 0) node(8)[circle, draw, fill=black!50,
	inner sep=0pt, minimum width=6pt] {};
	
	\draw (-9, 0) node(9)[circle, draw, color=orange, fill=black!50,
	inner sep=0pt, minimum width=10pt, label = $u'$] {};
	\draw (-11, 0) node(10)[circle, draw, color=red, fill=black!50,
	inner sep=0pt, minimum width=10pt, label = $v'$] {};

    \draw (-8.5, -1.2) node[label={${\color{orange}\alpha_{2k-1}}\in C_{2k-1}$}] {};
    \draw (-11.5, -1.2) node[label={${\color{red}\beta_{2k}}\in C_{2k}$}] {};
    \draw (-1, -1.2) node[label={${\color{teal}\alpha}\in C_i$}] {};
    \draw (1, -1.2) node[label={${\color{cyan}\beta}\in C_j$}] {};
    \draw (-10, 0) node[label={\tiny aligned}] {};
	
	\draw [line width = 0.5mm, dashed] (1) to (2);
	\draw [line width = 0.5mm, color=red] (1) to node[above] {$\beta_{2k}$} (3);
	\draw [line width = 0.5mm, color=cyan] (3) to node[above] {$\beta$} (4);
	\draw [line width = 0.5mm, color=red] (4) to node[above] {$\beta_{2k}$} (5);
	\draw [line width = 0.5mm, color=orange] (2) to node[above] {$\alpha_{2k-1}$} (6);
	\draw [line width = 0.5mm, color=teal] (6) to node[above] {$\alpha$} (7);
	\draw [line width = 0.5mm, color=orange] (7) to node[above] {$\alpha_{2k-1}$} (8);
	\draw [line width = 0.5mm, color=teal] (8) to node[above] {$\alpha$} (9);
	\draw [line width = 0.5mm, dashed] (9) to (10);
	
    \draw[->, >={Triangle}, thick, line width = 0.9mm] (-4, -1) to (-4, -2);
	
	\draw (1, -3) node(11)[circle, draw, color=red, fill=black!50,
	inner sep=0pt, minimum width=10pt, label = $v$] {};
	
	\draw (-1, -3) node(12)[circle, draw, color=orange, fill=black!50,
	inner sep=0pt, minimum width=10pt, label = $u$] {};
	
	\draw (3, -3) node(13)[circle, draw, fill=black!50,
	inner sep=0pt, minimum width=6pt] {};
	
	\draw (5, -3) node(14)[circle, draw, fill=black!50,
	inner sep=0pt, minimum width=6pt] {};
	
	\draw (7, -3) node(15)[circle, draw, fill=black!50,
	inner sep=0pt, minimum width=6pt] {};
	
	\draw (-3, -3) node(16)[circle, draw, fill=black!50,
	inner sep=0pt, minimum width=6pt] {};
	
	\draw (-5, -3) node(17)[circle, draw, fill=black!50,
	inner sep=0pt, minimum width=6pt] {};
	
	\draw (-7, -3) node(18)[circle, draw, fill=black!50,
	inner sep=0pt, minimum width=6pt] {};
	
	\draw (-9, -3) node(19)[circle, draw, color=teal, fill=black!50,
	inner sep=0pt, minimum width=10pt, label = $u'$] {};
	\draw (-11, -3) node(20)[circle, draw, color=red, fill=black!50,
	inner sep=0pt, minimum width=10pt, label = $v'$] {};

	
	\draw [line width = 0.5mm, dashed] (11) to (12);
	\draw [line width = 0.5mm, color=cyan] (11) to node[above] {$\beta$} (13);
	\draw [line width = 0.5mm, color=red] (13) to node[above] {$\beta_{2k}$} (14);
	\draw [line width = 0.5mm, color=cyan] (14) to node[above] {$\beta$} (15);
	\draw [line width = 0.5mm, color=teal] (12) to node[above] {$\alpha$} (16);
	\draw [line width = 0.5mm, color=orange] (16) to node[above] {$\alpha_{2k-1}$} (17);
	\draw [line width = 0.5mm, color=teal] (17) to node[above] {$\alpha$} (18);
	\draw [line width = 0.5mm, color=orange] (18) to node[above] {$\alpha_{2k-1}$} (19);
	\draw [line width = 0.5mm, dashed] (19) to (20);
	
\end{tikzpicture}
	\caption{In this example, we attempt to align edge $(u, v)$ by flipping the $\{\alpha, \alpha_{2k-1}\}$-alternating path $P_u(e)$ from $u$ and the $\{ \beta, \beta_{2k}\}$-alternating path $P_v(e)$ from $v$. However, flipping the $\{\alpha, \alpha_{2k-1} \}$-alternating path from $u$ damages a previously aligned edge $(u', v')$ as $u'$ will not miss color $\alpha_{2k-1}$ anymore. In this case, $(u', v')$ is responsible for $(u, v)$.}
	\label{fig:overview-flip}
\end{figure}
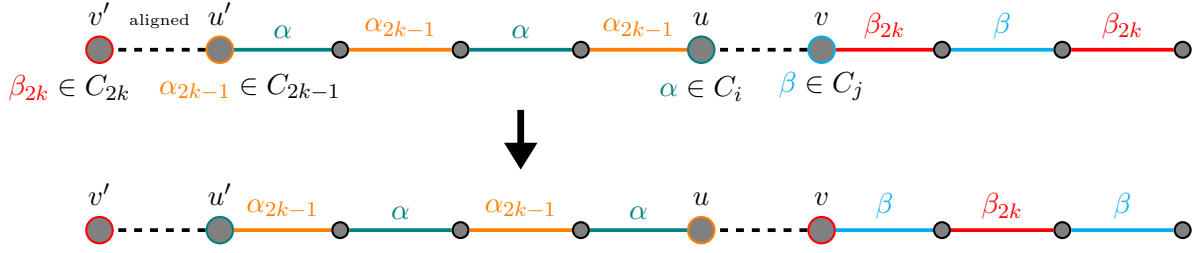

Otherwise, if flipping the paths $P_u(e)$ and $P_v(e)$ does not damage any edge in $U_1 \cup \cdots \cup U_{k}$, then we say that the edge $(u, v)$ is {\bf good}. To summarize, we implement the concerned iteration as follows. We {\bf traverse} the two alternating paths $P_u(e)$ and $P_v(e)$, and confirm whether or not the edge $(u, v)$ is good. If the edge $(u, v)$ is good, then we {\bf flip} the paths $P_u(e)$ and $P_v(e)$, add the edge $(u, v)$ to the set $U_k$, and then proceed towards the next iteration.

\subsubsection{Analysis}
\label{sec:algo:analyze:bipartite}

 From the description in \Cref{sec:algo:describe:bipartite}, it immediately follows that when the algorithm terminates, a constant fraction of the edges in $U$ have types in $\bigcup_{k=1}^\eta (\C_k \times \C_k)$, as desired by \Cref{th:bipartite:sparsification}.  It remains to bound the expected time complexity of this algorithm. Towards this end, we start with a couple of simple observations. Subsequently, \Cref{cl:iteration:bipartite:0} and \Cref{cl:iteration:bipartite} bound the expected runtime of a given iteration within a round $k \in [\eta]$, whereas \Cref{cl:iteration:bipartite:1} and \Cref{cor:iteration:bipartite:1} bound the probability that a given iteration is successful. Putting everything together, \Cref{lem:round:bipartite} bounds the expected runtime of a given round. This, in turn, leads us to \Cref{th:round:bipartite:randomized}.

 \begin{observation}
\label{obs:relevant:bipartite} During any given  iteration within a round $k \in [\eta]$, we traverse and/or flip at most two alternating paths $P_u(e)$ and $P_v(e)$, where $e = (u,v) \in U^\star \setminus (U_1 \cup \cdots \cup U_k)$ is the edge we sample at the start of the iteration and $u$ (resp.~$v$) is the left (resp.~right) endpoint of $e$. The paths $P_u(e)$ and $P_v(e)$ are $(2k-1)$-relevant and $(2k)$-relevant, respectively.
 \end{observation}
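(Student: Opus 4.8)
The plan is to verify the statement directly by unwinding the description of a single iteration given just above (in Section~\ref{sec:algo:describe:bipartite}), together with the definition of a ``$j$-relevant'' type. First I would recall that an iteration samples an uncolored edge $e = (u,v) \in U^\star \setminus (U_1 \cup \cdots \cup U_k)$ of some type $\{\alpha,\beta\} \in C_i \times C_j$ with $\alpha \in \miss_\chi(u) \cap C_i$ and $\beta \in \miss_\chi(v) \cap C_j$, and that — by construction — the \emph{only} alternating paths the algorithm traverses, and a fortiori the only ones it flips, during this iteration are $P_u(e)$, the (maximal) $\{\alpha,\alpha_{2k-1}\}$-alternating path from $u$, and $P_v(e)$, the (maximal) $\{\beta,\beta_{2k}\}$-alternating path from $v$, where $\alpha_{2k-1} := \phi_{i \to (2k-1)}(\alpha)$ and $\beta_{2k} := \phi_{j \to 2k}(\beta)$. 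This immediately gives the ``at most two alternating paths'' part of the claim in the main scenario $\{i,j\} \cap \{2k-1,2k\} = \emptyset$, $\alpha \neq \beta$; in the remaining degenerate sub-cases (where some index of $e$ already lies in $\{2k-1,2k\}$, or $\alpha = \beta$) one or both of $P_u(e), P_v(e)$ is empty or unnecessary, so the bound of two still holds trivially.

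Next I would check relevance against the definition: a type $\tau$ is $(2k-1)$-relevant iff $\tau = \{\gamma, \phi_{\ell \to (2k-1)}(\gamma)\}$ for some $\ell \in [2\eta] \setminus \{2k-1\}$ and $\gamma \in C_\ell$. In the main scenario $i \notin \{2k-1,2k\}$, so taking $\ell := i \neq 2k-1$ and $\gamma := \alpha \in C_i$ yields $\phi_{i \to (2k-1)}(\alpha) = \alpha_{2k-1}$, hence the type $\{\alpha,\alpha_{2k-1}\}$ of $P_u(e)$ is $(2k-1)$-relevant, and therefore $P_u(e)$ is a $(2k-1)$-relevant path. Symmetrically, taking $\ell := j \neq 2k$ and $\gamma := \beta \in C_j$ shows $\{\beta,\beta_{2k}\}$ is $(2k)$-relevant, so $P_v(e)$ is $(2k)$-relevant. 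In the degenerate sub-cases the relevant path(s) actually flipped are of a type of the form $\{\gamma, \phi_{\ell \to (2k-1)}(\gamma)\}$ or $\{\gamma, \phi_{\ell \to 2k}(\gamma)\}$ by the same reasoning, so the relevance conclusion persists.

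I do not expect a genuine obstacle: the statement is essentially a bookkeeping restatement of how $P_u(e)$ and $P_v(e)$ are defined, plus an unfolding of the definition of $j$-relevant. The only mild care needed is to confirm that the case distinction underlying the iteration (the ``most non-trivial scenario'' versus the degenerate ones) is exhaustive and that in each branch the paths touched are exactly the two described — or fewer — and are of the claimed relevant types; this is the part I would spell out carefully in the actual proof, but it requires no new ideas.
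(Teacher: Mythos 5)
Your proposal is correct and matches the paper's approach: the paper's proof is the one-line remark ``Follows from the description of the algorithm in Section~2.3.1,'' and you have simply spelled out that unwinding — recalling that the iteration only ever touches $P_u(e)$ and $P_v(e)$, and checking their types against the definition of $j$-relevance by taking $\ell=i$, $\gamma=\alpha$ (resp.\ $\ell=j$, $\gamma=\beta$). The extra care about degenerate sub-cases is a reasonable elaboration but does not constitute a different route.
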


 \begin{proof}
Follows from the description of the algorithm in \Cref{sec:algo:describe:bipartite}.
 \end{proof}

\begin{observation}
\label{obs:size:bipartite} Throughout the duration of round $k \in [\eta]$, we have $|U^\star \setminus (U_1 \cup \cdots \cup U_k)| \in [\lambda/4, \lambda]$.
\end{observation}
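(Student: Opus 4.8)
The plan is to track the size of $U^\star \setminus (U_1 \cup \cdots \cup U_k)$ from the start of round $k$ through its termination. At the start of round $k$, the sets $U_1, \dots, U_{k-1}$ are already constructed and $U_k = \emptyset$; by \Cref{inv:round:bipartite}, each $U_{k'}$ with $k' \in [k-1]$ has size exactly $\lambda/(100\eta)$, and these sets are mutually disjoint subsets of $U^\star$. Since we are in the regime $|U^\star| \in [\lambda/2, \lambda]$ (as assumed immediately before the algorithm description), the removed portion $U_1 \cup \cdots \cup U_{k-1}$ has size at most $(k-1)\cdot \lambda/(100\eta) \le \eta \cdot \lambda/(100\eta) = \lambda/100$. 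During round $k$, the only way the set $U^\star \setminus (U_1 \cup \cdots \cup U_k)$ changes is that successful iterations add edges to $U_k$, and the round stops after exactly $\lambda/(100\eta)$ successful iterations, so at most an additional $\lambda/(100\eta) \le \lambda/100$ edges are removed during the round.

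Putting these together: throughout round $k$, the total number of edges removed from $U^\star$ is at most $|U_1 \cup \cdots \cup U_k| \le k \cdot \lambda/(100\eta) \le \eta \cdot \lambda/(100\eta) = \lambda/100$. Hence
\[
|U^\star \setminus (U_1 \cup \cdots \cup U_k)| \;\ge\; |U^\star| - \frac{\lambda}{100} \;\ge\; \frac{\lambda}{2} - \frac{\lambda}{100} \;\ge\; \frac{\lambda}{4},
\]
which gives the lower bound. For the upper bound, $U^\star \setminus (U_1 \cup \cdots \cup U_k) \subseteq U^\star \subseteq U$, so its size is at most $|U| = \lambda$. This establishes $|U^\star \setminus (U_1 \cup \cdots \cup U_k)| \in [\lambda/4, \lambda]$ throughout round $k \in [\eta]$.

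There is no substantial obstacle here — the argument is just bookkeeping that combines \Cref{inv:round:bipartite} (fixing the sizes of the completed $U_{k'}$'s), the standing assumption $|U^\star| \ge \lambda/2$, the bound $k \le \eta$, and the termination condition of a round. The only point requiring a little care is to note that the quantity we are bounding only decreases during a round (edges are only added to $U_k$, never removed, and the colors of edges in $E \setminus U$ are changed only in ways that, by design of a successful iteration, do not remove edges from $U^\star$), so it suffices to bound it at the very end of the round, i.e.\ after all $\lambda/(100\eta)$ successful iterations have occurred.
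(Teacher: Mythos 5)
Your proof is correct and takes essentially the same route as the paper's: both bound $|U_1\cup\cdots\cup U_k|\le k\cdot\lambda/(100\eta)\le\lambda/100$ via \Cref{inv:round:bipartite} and the round-termination condition, then combine with the standing assumption $|U^\star|\in[\lambda/2,\lambda]$ to get the lower and upper bounds. Your write-up is more verbose but adds no new ideas beyond the paper's two-line argument.
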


\begin{proof}
By \Cref{inv:round:bipartite}, we have $|U_1 \cup \cdots \cup U_k| \leq k \cdot \lambda/(100 \eta) \leq \eta \cdot \lambda/(100 \eta) = \lambda/100$. The observation follows since we have already assumed that  $|U^\star| \in [\lambda/2, \lambda]$,
\end{proof}


\begin{claim}
\label{cl:iteration:bipartite:0} At any given point in time within a round $k \in [\eta]$, define 
\begin{eqnarray*}
\mathcal{P}_{\texttt{left}} & := & \{ P_{u^\star}(e^\star) : e^\star \in U^\star \setminus (U_1 \cup \cdots \cup U_k), u^\star \text{ is the left endpoint of } e^\star \}, \text{ and } \\ \mathcal{P}_{\texttt{right}} & := & \{ P_{v^\star}(e^\star) : e^\star \in U^\star \setminus (U_1 \cup \cdots \cup U_k), v^\star \text{ is the right endpoint of } e^\star \}.
\end{eqnarray*}
Let $|P|$ denote the length of an alternating path $P$. Then, we have 
$\sum_{P \in \mathcal{P}_{\texttt{left}} \cup \mathcal{P}_{\texttt{right}}} |P| = O(m)$.
\end{claim}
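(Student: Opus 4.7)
The plan is to bound the total length of all paths in $\mathcal{P}_{\texttt{left}}$ via a charging argument over types, and then the bound for $\mathcal{P}_{\texttt{right}}$ follows by an identical argument. The starting point is \Cref{obs:relevant:bipartite}, which tells us that every path $P \in \mathcal{P}_{\texttt{left}}$ is a maximal alternating path of some $(2k-1)$-relevant type, and by \Cref{obs:num:relevant:types} there are only $O(\Delta)$ such types in total. So it suffices to control, for each fixed $(2k-1)$-relevant type $\tau$, the combined contribution to the sum of paths of type $\tau$.

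Next, I would fix a $(2k-1)$-relevant type $\tau = \{\alpha, \beta\}$ and consider the family $\mathcal{M}_\tau$ of \emph{all} maximal alternating paths of type $\tau$ in $G$. These are the connected components of the subgraph spanned by edges colored $\alpha$ or $\beta$ under $\chi$ (ignoring alternating cycles, which have no endpoints and so cannot appear in $\mathcal{P}_{\texttt{left}}$). The paths in $\mathcal{M}_\tau$ are edge-disjoint, and each edge in any of them is colored $\alpha$ or $\beta$; since each color class is itself a matching, we get $\sum_{P \in \mathcal{M}_\tau} |P| \le n$.

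The key observation for relating $\mathcal{P}_{\texttt{left}}$ to $\mathcal{M}_\tau$ is that any path $P^* \in \mathcal{M}_\tau$ can appear as some $P_{u^\star}(e^\star)$ only if $u^\star$ is an endpoint of $P^*$, and each maximal path has exactly two endpoints. Therefore, each element of $\mathcal{M}_\tau$ is counted at most twice in $\mathcal{P}_{\texttt{left}}$. Combining everything,
\[
\sum_{P \in \mathcal{P}_{\texttt{left}}} |P| \;\le\; 2 \sum_{\tau \in \Gamma_{2k-1}} \sum_{P \in \mathcal{M}_\tau} |P| \;\le\; 2 \cdot O(\Delta) \cdot n \;=\; O(\Delta \cdot n) \;=\; O(m),
\]
where the final equality uses $m = \Theta(n\Delta)$ from \Cref{assume:regular}. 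The bound on $\sum_{P \in \mathcal{P}_{\texttt{right}}} |P|$ is obtained by repeating the argument verbatim for $(2k)$-relevant types, and summing the two bounds proves the claim.

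There is no genuine obstacle here; the proof is a clean application of the type-count bound from \Cref{obs:num:relevant:types}. The only subtlety to be careful about is the double-counting factor: one must remember that each maximal $(2k-1)$-relevant path in $G$ may serve as $P_{u^\star}(e^\star)$ for edges $e^\star$ anchored at either of its two endpoints, which is where the factor of $2$ (and not more) comes from.
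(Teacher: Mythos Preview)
Your proof is correct and follows essentially the same approach as the paper: both use \Cref{obs:num:relevant:types} to bound the number of relevant types by $O(\Delta)$, note that the total length of all maximal alternating paths of a fixed type is at most $n$, and then invoke \Cref{assume:regular} to conclude $O(\Delta n)=O(m)$. Your treatment is slightly more careful about the potential double-counting (the factor of~$2$), whereas the paper simply uses that $\mathcal{P}_{\texttt{left}}$ is a set of paths and directly bounds $\sum_{P\in\mathcal{P}_{\texttt{left}}}|P|\le |\Gamma_{2k-1}|\cdot n$.
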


\begin{proof}
Recall that the edge-set $U^\star \setminus (U_1 \cup \cdots \cup U_k) \subseteq U$ forms a matching (see \Cref{th:bipartite:sparsification}). By \Cref{obs:num:relevant:types}, we have $|\Gamma_{2k-1}| = O(\Delta)$;
recall that $\Gamma_{2k-1}$ is the set of all $(2k-1)$-relevant types.
Since the total length of all alternating paths of a given type is at most $n$, it follows that $\sum_{P \in \mathcal{P}_{\texttt{left}}} |P| \leq |\Gamma_{2k-1}| \cdot n = O(\Delta n) = O(m)$ (see \Cref{assume:regular}). 
Using an analogous argument, we get $\sum_{P \in \mathcal{P}_{\texttt{right}}} |P| = O(m)$. This concludes the proof.
\end{proof}

\begin{corollary}
\label{cl:iteration:bipartite}
Each iteration within a round $k \in [\eta]$  takes $\tilde{O}(m/\lambda)$ time in expectation.
\end{corollary}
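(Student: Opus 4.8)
The plan is to charge the running time of an iteration to the total length of the two alternating paths it traverses, and then average this bound over the uniformly random choice of the sampled edge, using \Cref{cl:iteration:bipartite:0} and \Cref{obs:size:bipartite}.

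First I would fix a round $k \in [\eta]$ and an iteration, and let $e = (u,v)$ be the edge sampled uniformly at random from $R := U^\star \setminus (U_1 \cup \cdots \cup U_k)$. By \Cref{obs:relevant:bipartite}, the only alternating paths the iteration ever looks at are $P_u(e)$ and $P_v(e)$: we walk along each of them once to test whether $e$ is good (i.e.\ whether either path terminates at an endpoint of an edge currently in $U_1 \cup \cdots \cup U_k$), and, if $e$ is good, we walk along each of them once more to flip it. Each elementary step of such a walk --- reading the color of the next edge, checking a vertex's missing-color set, testing membership in $U_1 \cup \cdots \cup U_k$, or recoloring an edge --- is a constant number of queries/updates to the supporting data structures, and hence costs $\polylog(n)$ time. (In the degenerate cases where $\{i,j\}\cap\{2k-1,2k\}\neq\emptyset$ or $\alpha=\beta$ the work is only smaller.) Therefore the iteration runs in $\tilde{O}\big(|P_u(e)| + |P_v(e)| + 1\big)$ time, where the additive $1$ absorbs the $O(1)$ overhead of sampling $e$ and inspecting its type.

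Next I would take expectations over $e$. Since each $e \in R$ contributes exactly one path $P_u(e) \in \mathcal{P}_{\texttt{left}}$ and one path $P_v(e) \in \mathcal{P}_{\texttt{right}}$, we have $\sum_{e\in R}\big(|P_u(e)| + |P_v(e)|\big) = \sum_{P \in \mathcal{P}_{\texttt{left}} \cup \mathcal{P}_{\texttt{right}}} |P|$, which is $O(m)$ by \Cref{cl:iteration:bipartite:0}. Using $|R| \ge \lambda/4$ from \Cref{obs:size:bipartite} and $|R| \le |U| = \lambda \le m$, the expected running time of the iteration is at most
\[
\frac{1}{|R|}\sum_{e\in R}\tilde{O}\big(|P_u(e)| + |P_v(e)| + 1\big)\;\le\;\frac{4}{\lambda}\Big(\tilde{O}(m) + \tilde{O}(|R|)\Big)\;=\;\tilde{O}\!\left(\frac{m}{\lambda}\right),
\]
as claimed.

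I do not expect a real obstacle here; the one place that needs care is the first step --- guaranteeing that every action taken inside an iteration (traversing a path, testing the good/bad condition at its endpoints, performing a flip) is accounted for by the path length up to $\polylog(n)$ factors. This is precisely what the supporting data structures are designed to deliver, so I would simply invoke them (with the low-level details deferred to the formal treatment) and treat the averaging above as routine.
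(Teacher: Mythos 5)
Your proposal is correct and takes essentially the same approach as the paper: both charge the iteration's running time (up to polylogarithmic data-structure overhead) to the lengths of $P_u(e)$ and $P_v(e)$, then bound the expectation by dividing $\sum_{P \in \mathcal{P}_{\texttt{left}} \cup \mathcal{P}_{\texttt{right}}} |P| = O(m)$ (from \Cref{cl:iteration:bipartite:0}) by $|U^\star \setminus (U_1 \cup \cdots \cup U_k)| = \Omega(\lambda)$ (from \Cref{obs:size:bipartite}).
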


\begin{proof}
Let $e = (u, v) \in U^\star \setminus (U_1 \cup \cdots \cup U_k)$ be the edge we sample u.a.r.~at the start of the iteration. 
 Using appropriate supporting data structures, it is easy to ensure that the time spent during the given iteration is proportional to the total length of the paths $P_u(e)$ and $P_v(e)$. Thus, from now on, we focus on bounding the expected {\bf length} of each of these two paths.

The edge-set $U^\star \setminus (U_1 \cup \cdots \cup U_k) \subseteq U$ forms a matching (see \Cref{th:bipartite:sparsification}). W.l.o.g., we assume that $u$ (resp.~$v$) is the left (resp.~right)  endpoint of the edge $e = (u, v)$. Accordingly, we have
\begin{equation}
\label{eq:length}
\mathbb{E}\left[|P_u(e)|\right] = \frac{\sum_{P \in \mathcal{P}_{\texttt{left}}} |P| }{|U^\star \setminus (U_1 \cup \cdots \cup U_k)|} \text{ and } \mathbb{E}\left[|P_v(e)|\right] = \frac{\sum_{P \in \mathcal{P}_{\texttt{right}}} |P| }{|U^\star \setminus (U_1 \cup \cdots \cup U_k)|}.
\end{equation}



The corollary now follows from \Cref{obs:size:bipartite} and \Cref{cl:iteration:bipartite:0}.
\end{proof}

\begin{claim}
\label{cl:iteration:bipartite:1} At the start of each iteration within round $k \in [\eta]$, at least a constant fraction of the edges in $U^\star \setminus (U_1 \cup \cdots \cup U_k)$ are good.
\end{claim}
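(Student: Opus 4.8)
The plan is to bound the number of \emph{bad} edges in $U^\star \setminus (U_1 \cup \cdots \cup U_k)$ and then invoke \Cref{obs:size:bipartite}. Recall that $(u,v)$ is bad iff some edge $f \in U_1 \cup \cdots \cup U_k$ is \emph{responsible} for it, which (since flipping an alternating path only alters the missing‑color status at its two endpoints) can happen only if $P_u(e)$ or $P_v(e)$ \emph{ends} at an endpoint of $f$ and the flip changes whether that endpoint misses the color that currently witnesses $f$'s type. So it suffices to show: (i) every $f \in U_{k'}$ with $k' < k$ is responsible for $O(1)$ bad edges; and (ii) every $f \in U_k$ is responsible for $O(\eta)$ bad edges. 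Granting (i)--(ii), since $|U_1\cup\cdots\cup U_{k-1}| \le \lambda/100$ and $|U_k| \le \lambda/(100\eta)$ by \Cref{inv:round:bipartite}, the total number of bad edges is $O(\lambda/100) + O(\eta)\cdot\lambda/(100\eta) = O(\lambda/100)$; tracking the constants, this is at most $\tfrac{2\lambda}{25}$, so of the $\ge \lambda/4$ edges in $U^\star \setminus (U_1\cup\cdots\cup U_k)$ (by \Cref{obs:size:bipartite}) at least $\lambda/4 - 2\lambda/25 = 17\lambda/100$ are good — a constant fraction of the $\le \lambda$ edges there.

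To prove (i)--(ii), fix $f=(u',v') \in U_{k'}$; by \Cref{inv:round:bipartite}, one endpoint of $f$, say $u'$, misses a color $\gamma' \in C_{2k'-1}$ and the other, $v'$, misses a color $\delta' \in C_{2k'}$. Consider an edge $e = (u,v) \in U^\star \setminus (U_1\cup\cdots\cup U_k)$ for which $f$ is responsible, and suppose (the representative sub‑case) that $e$ is of type $\{\alpha,\beta\} \in C_i \times C_j$ with $\{i,j\}\cap\{2k-1,2k\}=\emptyset$ and $\alpha\ne\beta$, so the two paths that might be flipped are $P_u(e)$, of type $\{\alpha,\phi_{i\to(2k-1)}(\alpha)\}$, and $P_v(e)$, of type $\{\beta,\phi_{j\to(2k)}(\beta)\}$ (\Cref{obs:relevant:bipartite}). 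For $f$ to be damaged, one of $u',v'$ must be the far endpoint of one of these two paths and the corresponding color of that path must be $\gamma'$ or $\delta'$. Going through the four combinations, a quick parity check — comparing the block index $2k'-1$ or $2k'$ of $\gamma',\delta'$ against $2k-1,2k$ and against the blocks of $\alpha$ or $\beta$ — shows: if $k'\ne k$, then in each combination the relevant color of the flipped path must equal $\alpha$ or $\beta$, which \emph{uniquely determines} the type of that path from $f$ alone (e.g.\ if $\gamma'=\alpha$ then $i$ is the block of $\gamma'$, so $P_u(e)$ has type $\{\gamma',\phi_{i\to(2k-1)}(\gamma')\}$); a type and a designated endpoint pin down at most one maximal alternating path, whose other end is $u$, which (as $U$ is a matching) determines $e$ — giving $O(1)$ bad edges and hence (i). If instead $k'=k$, the only new possibility is $\gamma' = \phi_{i\to(2k-1)}(\alpha)$ (or symmetrically $\delta' = \phi_{j\to(2k)}(\beta)$), which pins $\gamma'$ (resp.\ $\delta'$) but lets $\alpha$ (resp.\ $\beta$) range over the $\le 2\eta$ colors $\{\phi_{(2k-1)\to i'}(\gamma') : i' \in [2\eta]\setminus\{2k-1,2k\}\}$; each choice again fixes a type and hence $\le 1$ bad edge, so $f$ is responsible for $O(\eta)$ bad edges, giving (ii).

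Finally, the remaining (``degenerate'') sub‑cases for $e$ — when $\alpha=\beta$, or when $\{i,j\}$ meets $\{2k-1,2k\}$, or when $e$ is already of a social type — only ever require flipping at most one of $P_u(e),P_v(e)$ (or none at all), so they can only produce a subset of the bad edges already counted above, and the bound is unaffected. I expect the main obstacle to be the bookkeeping around the $k'=k$ case: one must notice that even though a single edge of $U_k$ can be responsible for $\Theta(\eta)$ bad edges, $U_k$ contains only $\lambda/(100\eta)$ edges, so this term still contributes only $O(\lambda/100)$ — and, relatedly, one must check that the constant $\tfrac{1}{100\eta}$ in \Cref{inv:round:bipartite} together with the standing bound $|U^\star|\ge\lambda/2$ are exactly what makes the final arithmetic leave a constant fraction of good edges.
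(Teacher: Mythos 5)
Your proof is correct and follows essentially the same approach as the paper: it bounds the number of bad edges by the same responsibility argument, with the same two counts (at most $4$ per edge of $U_{k'}$ for $k'<k$, at most $4\eta$ per edge of $U_k$), combines them with \Cref{inv:round:bipartite}, and finishes with \Cref{obs:size:bipartite}; the paper merely phrases the counting from the side of the responsible edge $e'$ (at most two relevant paths leave each endpoint of $e'$) rather than from the side of $e$, which makes the degenerate sub-cases vacuous rather than needing a remark.
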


\begin{proof}
The claim holds because of the following two key observations.

\begin{observation}
\label{ob:bipartite:100}
Each edge  $e' \in U_1 \cup \cdots \cup U_{k-1}$ is responsible for at most $4$ bad edges.
\end{observation}

\begin{observation}
\label{ob:bipartite:101}
Each edge  $e' \in U_{k}$ is responsible for at most  $4\eta$ bad edges. 
\end{observation}

Taken together, \Cref{ob:bipartite:100} and \Cref{ob:bipartite:101}  (along with \Cref{inv:round:bipartite}) imply that the number of bad edges is at most 
$$\sum_{k'=1}^{k-1} 4 \cdot |U_{k'}| + 4\eta \cdot |U_k| \leq 4k \cdot \lambda/(100 \eta) + 4\eta \cdot \lambda/(100 \eta) \leq 4\eta \cdot \lambda/(100 \eta) + \lambda/25 \leq (2/25) \cdot \lambda.$$
Since $|U^\star \setminus (U_1 \cup \cdots \cup U_k)| \in [\lambda/4, \lambda]$ by \Cref{obs:size:bipartite}, {\em at most} a constant fraction of the edges in $U^\star \setminus (U_1 \cup \ldots \cup U_k)$ are bad; or equivalently, {\em at least} a constant fraction of the edges in $U^\star \setminus (U_1 \cup \ldots \cup U_k)$ are good. It now remains to explain why the two key observations hold.

For \Cref{ob:bipartite:100}, consider any edge $e' = (u', v') \in U_{k'}$, for some $k' \in [k-1]$. W.l.o.g., let $e'$ be of type $\{\alpha', \beta'\}$, where $\alpha' \in \miss_{\chi}(u') \cap C_{2k'-1}$ and $\beta' \in \miss_{\chi}(v') \cap C_{2k'}$ (see \Cref{inv:round:bipartite}). Now, if $e'$ is responsible for some bad edge $e^\star = \left(u^\star, v^\star \right)$, then the following condition must hold.  Either  there exists an $i \in \{2k-1, 2k\}$ and an $x \in \{u^\star, v^\star\}$ such that the  alternating path of type $\{\alpha', \phi_{(2k'-1) \to i}(\alpha')\}$ starting from $u'$ ends at $x$; or  there exists an $i \in \{2k-1, 2k\}$ and an $x \in \{u^\star, v^\star\}$ such that the  alternating path of type $\{\beta', \phi_{(2k') \to i}(\beta')\}$ starting from $v'$ ends at $x$. It is easy to verify that there are at most two such alternating paths that start from $u'$ (resp.~$v'$), one for each  $i \in \{2k-1, 2k\}$. So, the edge $e'$ can be responsible for at most $2 \times 2 = 4$ bad edges. Check \Cref{fig:overview-socialize-damage1} for an illustration.

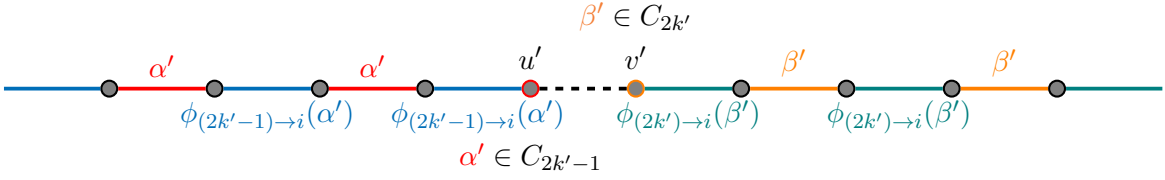
\begin{figure}
	\centering
	\begin{tikzpicture}[thick,scale=0.7]
	\draw (-7, 0) node(0)[circle, draw, color = red, fill=black!50,
	inner sep=0pt, minimum width=6pt, label = $u'$] {};
	\draw (-5, 0) node(1)[circle, draw, color=orange, fill=black!50,
	inner sep=0pt, minimum width=6pt, label = $v'$] {};
	
	\draw (-7, -2) node[label={${\color{red}\alpha'}\in C_{2k'-1}$}] {};
	\draw (-5, 0.7) node[label={${\color{Orange}\beta'}\in C_{2k'}$}] {};
	
	\draw (-3, 0) node(23)[circle, draw, fill=black!50,
	inner sep=0pt, minimum width=6pt] {};
	\draw (-1, 0) node(24)[circle, draw, fill=black!50,
	inner sep=0pt, minimum width=6pt] {};
	\draw (1, 0) node(25)[circle, draw, fill=black!50,
	inner sep=0pt, minimum width=6pt] {};
	\draw (3, 0) node(26)[circle, draw, fill=black!50, 
	inner sep=0pt, minimum width=6pt] {};	
		
	\draw [line width = 0.5mm, color=teal] (1) to node[below] {$\phi_{(2k')\to i}(\beta')$} (23);
	\draw [line width = 0.5mm, color=orange] (23) to node[above] {$\beta'$} (24);
	\draw [line width = 0.5mm, color=teal] (24) to node[below] {$\phi_{(2k')\to i}(\beta')$} (25);
	\draw [line width = 0.5mm, color=orange] (25) to node[above] {$\beta'$} (26);
	\draw [line width = 0.5mm, color=teal] (26) to (5, 0);
		
	\draw (-9, 0) node(73)[circle, draw, fill=black!50,
	inner sep=0pt, minimum width=6pt] {};
	\draw (-11, 0) node(74)[circle, draw, fill=black!50,
	inner sep=0pt, minimum width=6pt] {};
	\draw (-13, 0) node(75)[circle, draw, fill=black!50,
	inner sep=0pt, minimum width=6pt] {};
	\draw (-15, 0) node(76)[circle, draw, fill=black!50, 
	inner sep=0pt, minimum width=6pt] {};
		
	\draw [line width = 0.5mm, color=NavyBlue] (0) to node[below] {$\phi_{(2k'-1)\to i}(\alpha')$} (73);
	\draw [line width = 0.5mm, color=red] (73) to node[above] {$\alpha'$} (74);
	\draw [line width = 0.5mm, color=NavyBlue] (74) to node[below] {$\phi_{(2k'-1)\to i}(\alpha')$} (75);
	\draw [line width = 0.5mm, color=red] (75) to node[above] {$\alpha'$} (76);
    \draw [line width = 0.5mm, color=NavyBlue] (76) to (-17, 0);
	
	\draw [line width = 0.5mm, dashed] (0) to (1);
		
\end{tikzpicture}
	\caption{In this picture, $e' = (u', v') \in U_{k'}$, for some $k'<k$. We have drawn two alternating paths of types $\{\alpha', \phi_{(2k'-1) \to i}(\alpha')\}$ and $\{\beta', \phi_{(2k') \to i}(\beta')\}$ from $u', v'$ which could damage some edges $e^\star$.}
	\label{fig:overview-socialize-damage1}
\end{figure}

For \Cref{ob:bipartite:101}, consider any edge $e' = (u', v') \in U_{k}$. W.l.o.g., let $e'$ be of type $\{\alpha', \beta'\}$, where $\alpha' \in \miss_{\chi}(u') \cap C_{2k-1}$ and $\beta' \in \miss_\chi(v') \cap C_{2k}$ (see \Cref{inv:round:bipartite}). Now, if $e'$ is responsible for a bad edge $e^\star = \left(u^\star, v^\star \right)$, then the following condition must hold.  Either  there exists an $i \in [2\eta] \setminus \{2k-1, 2k\}$ and an $x \in \{u^\star, v^\star\}$  such that the  alternating path of type $\{\alpha', \phi_{(2k-1) \to i}(\alpha')\}$ starting from $u'$ ends at $x$; or  there exists an $i \in [2\eta] \setminus \{2k-1, 2k\}$ and an $x \in \{u^\star, v^\star\}$ such that the  alternating path of type $\{\beta', \phi_{(2k) \to i}(\beta')\}$ starting from $v'$ ends at $x$. It is easy to verify that there are at most $2\eta -2$ such alternating paths that start from $u'$ (resp.~$v'$), one for each  $i \in [2\eta] \setminus \{2k-1, 2k\}$. So, the edge $e'$ can be responsible for at most $2 \cdot (2\eta - 2) \leq 4\eta$ bad edges. Check \Cref{fig:overview-socialize-damage2} for an illustration.
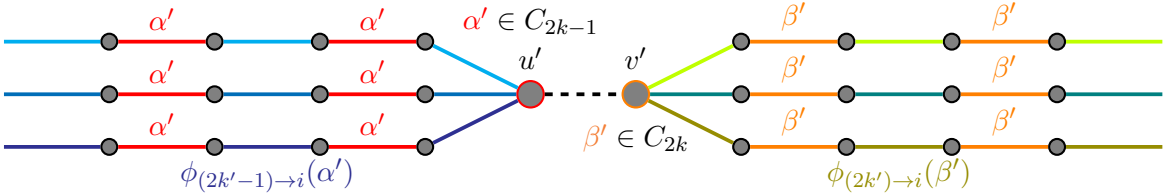
\begin{figure}
	\centering
	\begin{tikzpicture}[thick,scale=0.7]
	\draw (-7, 0) node(0)[circle, draw, color = red, fill=black!50,
	inner sep=0pt, minimum width=10pt, label = $u'$] {};
	\draw (-5, 0) node(1)[circle, draw, color=orange, fill=black!50,
	inner sep=0pt, minimum width=10pt, label = $v'$] {};
	
	 \draw (-7, 0.7) node[label={${\color{red}\alpha'}\in C_{2k-1}$}] {};
	 \draw (-5, -1.5) node[label={${\color{Orange}\beta'}\in C_{2k}$}] {};
	
	\draw (-3, -1) node(3)[circle, draw, fill=black!50,
	inner sep=0pt, minimum width=6pt] {};
	\draw (-1, -1) node(4)[circle, draw, fill=black!50,
	inner sep=0pt, minimum width=6pt] {};
	\draw (1, -1) node(5)[circle, draw, fill=black!50,
	inner sep=0pt, minimum width=6pt] {};
	\draw (3, -1) node(6)[circle, draw, fill=black!50, 
	inner sep=0pt, minimum width=6pt] {};
	
	\draw (-3, 1) node(13)[circle, draw, fill=black!50,
	inner sep=0pt, minimum width=6pt] {};
	\draw (-1, 1) node(14)[circle, draw, fill=black!50,
	inner sep=0pt, minimum width=6pt] {};
	\draw (1, 1) node(15)[circle, draw, fill=black!50,
	inner sep=0pt, minimum width=6pt] {};
	\draw (3, 1) node(16)[circle, draw, fill=black!50, 
	inner sep=0pt, minimum width=6pt] {};
	
	\draw (-3, 0) node(23)[circle, draw, fill=black!50,
	inner sep=0pt, minimum width=6pt] {};
	\draw (-1, 0) node(24)[circle, draw, fill=black!50,
	inner sep=0pt, minimum width=6pt] {};
	\draw (1, 0) node(25)[circle, draw, fill=black!50,
	inner sep=0pt, minimum width=6pt] {};
	\draw (3, 0) node(26)[circle, draw, fill=black!50, 
	inner sep=0pt, minimum width=6pt] {};	
	
	\draw [line width = 0.5mm, color=olive] (1) to (3);
	\draw [line width = 0.5mm, color=orange] (3) to node[above] {$\beta'$} (4);
	\draw [line width = 0.5mm, color=olive] (4) to node[below] {$\phi_{(2k')\to i}(\beta')$} (5);
	\draw [line width = 0.5mm, color=orange] (5) to node[above] {$\beta'$} (6);
	\draw [line width = 0.5mm, color=olive] (6) to (5, -1);
	
	\draw [line width = 0.5mm, color=teal] (1) to (23);
	\draw [line width = 0.5mm, color=orange] (23) to node[above] {$\beta'$} (24);
	\draw [line width = 0.5mm, color=teal] (24) to (25);
	\draw [line width = 0.5mm, color=orange] (25) to node[above] {$\beta'$} (26);
	\draw [line width = 0.5mm, color=teal] (26) to (5, 0);

	\draw [line width = 0.5mm, color=lime] (1) to (13);
	\draw [line width = 0.5mm, color=orange] (13) to node[above] {$\beta'$} (14);
	\draw [line width = 0.5mm, color=lime] (14) to (15);
	\draw [line width = 0.5mm, color=orange] (15) to node[above] {$\beta'$} (16);
		\draw [line width = 0.5mm, color=lime] (16) to (5, 1);

	\draw (-9, 1) node(63)[circle, draw, fill=black!50,
	inner sep=0pt, minimum width=6pt] {};
	\draw (-11, 1) node(64)[circle, draw, fill=black!50,
	inner sep=0pt, minimum width=6pt] {};
	\draw (-13, 1) node(65)[circle, draw, fill=black!50,
	inner sep=0pt, minimum width=6pt] {};
	\draw (-15, 1) node(66)[circle, draw, fill=black!50, 
	inner sep=0pt, minimum width=6pt] {};
	
	\draw (-9, 0) node(73)[circle, draw, fill=black!50,
	inner sep=0pt, minimum width=6pt] {};
	\draw (-11, 0) node(74)[circle, draw, fill=black!50,
	inner sep=0pt, minimum width=6pt] {};
	\draw (-13, 0) node(75)[circle, draw, fill=black!50,
	inner sep=0pt, minimum width=6pt] {};
	\draw (-15, 0) node(76)[circle, draw, fill=black!50, 
	inner sep=0pt, minimum width=6pt] {};
	
	\draw (-9, -1) node(83)[circle, draw, fill=black!50,
	inner sep=0pt, minimum width=6pt] {};
	\draw (-11, -1) node(84)[circle, draw, fill=black!50,
	inner sep=0pt, minimum width=6pt] {};
	\draw (-13, -1) node(85)[circle, draw, fill=black!50,
	inner sep=0pt, minimum width=6pt] {};
	\draw (-15, -1) node(86)[circle, draw, fill=black!50, 
	inner sep=0pt, minimum width=6pt] {};
	
	\draw [line width = 0.5mm, color=cyan] (0) to (63);
	\draw [line width = 0.5mm, color=red] (63) to node[above] {$\alpha'$} (64);
	\draw [line width = 0.5mm, color=cyan] (64) to (65);
	\draw [line width = 0.5mm, color=red] (65) to node[above] {$\alpha'$} (66);
	\draw [line width = 0.5mm, color=cyan] (66) to (-17, 1);	
	
	\draw [line width = 0.5mm, color=NavyBlue] (0) to (73);
	\draw [line width = 0.5mm, color=red] (73) to node[above] {$\alpha'$} (74);
	\draw [line width = 0.5mm, color=NavyBlue] (74) to (75);
	\draw [line width = 0.5mm, color=red] (75) to node[above] {$\alpha'$} (76);
	\draw [line width = 0.5mm, color=NavyBlue] (76) to (-17, 0);
	
	\draw [line width = 0.5mm, color=Blue] (0) to (83);
	\draw [line width = 0.5mm, color=red] (83) to node[above] {$\alpha'$} (84);
	\draw [line width = 0.5mm, color=Blue] (84) to node[below] {$\phi_{(2k'-1)\to i}(\alpha')$} (85);
	\draw [line width = 0.5mm, color=red] (85) to node[above] {$\alpha'$} (86);
	\draw [line width = 0.5mm, color=Blue] (86) to (-17, -1);
	
	\draw [line width = 0.5mm, dashed] (0) to (1);	
	
\end{tikzpicture}
	\caption{In this picture, $e' = (u', v') \in U_{k}$. We have drawn some alternating paths of types $\{\alpha', \phi_{(2k'-1) \to i}(\alpha')\}$ and $\{\beta', \phi_{(2k') \to i}(\beta')\}$ from $u', v'$ for three choices of $i\in [2\eta]\setminus \{2k-1, 2k\}$.}
	\label{fig:overview-socialize-damage2}
\end{figure}
\end{proof}

\noindent{\bf Remark.~} Note that the proofs of \Cref{obs:size:bipartite}, \Cref{cl:iteration:bipartite:0} and \Cref{cl:iteration:bipartite:1} do not rely on randomness in any way; we will use these statements also in the deterministic setting of \Cref{sec:deterministic:bipartite}.

\begin{corollary}
\label{cor:iteration:bipartite:1} Each iteration within a round $k \in [\eta]$ succeeds with constant probability.
\end{corollary}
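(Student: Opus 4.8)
The plan is to obtain this corollary as an immediate consequence of \Cref{cl:iteration:bipartite:1}. Recall from the algorithm description in \Cref{sec:algo:describe:bipartite} that an iteration within round $k$ begins by sampling an edge $e = (u,v)$ uniformly at random from $U^\star \setminus (U_1 \cup \cdots \cup U_k)$, then traverses the two relevant alternating paths $P_u(e)$ and $P_v(e)$ (which are well-defined and are $(2k-1)$- and $(2k)$-relevant by \Cref{obs:relevant:bipartite}), and finally flips them and adds $e$ to $U_k$ exactly when $e$ is \emph{good} --- that is, when flipping $P_u(e)$ and $P_v(e)$ damages no edge of $U_1 \cup \cdots \cup U_k$. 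In particular, the iteration succeeds whenever the sampled edge $e$ is good, so it suffices to lower bound $\Pr[e \text{ is good}]$.

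Since $e$ is drawn uniformly at random, $\Pr[e \text{ is good}]$ equals the fraction of good edges among $U^\star \setminus (U_1 \cup \cdots \cup U_k)$. By \Cref{cl:iteration:bipartite:1}, at least a constant fraction of these edges are good; concretely, its proof bounds the number of bad edges by $(2/25)\lambda$, while \Cref{obs:size:bipartite} guarantees $|U^\star \setminus (U_1 \cup \cdots \cup U_k)| \ge \lambda/4$, so at least $\lambda/4 - (2/25)\lambda \ge (17/100)\lambda$ of these (at most $\lambda$) edges are good, i.e.\ a fraction at least $17/100$. Hence $\Pr[\text{iteration succeeds}] \ge \Pr[e \text{ is good}] \ge 17/100 = \Omega(1)$, as claimed.

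I do not expect a genuine obstacle here; the only thing to double-check is that the implication ``$e$ good $\Rightarrow$ iteration succeeds'' really covers all cases handled in \Cref{sec:algo:describe:bipartite}, including the easy sub-cases where $\{i,j\} \cap \{2k-1,2k\} \ne \emptyset$ or $\alpha = \beta$. In each such sub-case only (a subset of) the same two relevant alternating paths is traversed and flipped, so the same accounting of ``responsible'' edges from the proof of \Cref{cl:iteration:bipartite:1} applies verbatim; and since we only need a lower bound on the success probability, it would in fact be enough to restrict attention to the main sub-case alone.
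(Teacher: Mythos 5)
Your proposal is correct and takes essentially the same approach as the paper: the iteration succeeds exactly when the uniformly sampled edge $e$ is good, and \Cref{cl:iteration:bipartite:1} (together with \Cref{obs:size:bipartite}) guarantees that a constant fraction of $U^\star \setminus (U_1 \cup \cdots \cup U_k)$ is good. The only difference is that you unpack the constant to $17/100$, whereas the paper simply cites \Cref{cl:iteration:bipartite:1} without working out the number.
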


\begin{proof}
At the start of the iteration, we sample an edge $e \in U^\star \setminus (U_1 \cup \cdots \cup U_k)$. The iteration succeeds if the sampled edge $e$ is good. The corollary now follows from \Cref{cl:iteration:bipartite:1}.
\end{proof}

\begin{lemma}
\label{lem:round:bipartite}
In the algorithm from \Cref{sec:algo:describe:bipartite}, each round takes $\tilde{O}(m/\eta)$ expected time.
\end{lemma}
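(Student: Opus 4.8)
The plan is to combine the two per-iteration bounds already in hand: by \Cref{cor:iteration:bipartite:1} every iteration of round $k$ succeeds with some absolute constant probability $p>0$, and by \Cref{cl:iteration:bipartite} every iteration runs in $\tilde O(m/\lambda)$ expected time. The first point I would stress is that \emph{both} of these bounds hold conditioned on an \emph{arbitrary} history of the round so far, not merely in some averaged sense: their proofs only invoke \Cref{cl:iteration:bipartite:0}, \Cref{cl:iteration:bipartite:1}, and \Cref{obs:size:bipartite}, all of which hold ``at any given point in time within round $k$'' regardless of how the algorithm reached that point. Writing $\mathcal H_i$ for the history before the $i$-th iteration and $t_i$ for its running time (and $t_i=0$ if the $i$-th iteration never takes place), this says: whenever iteration $i$ takes place, $\mathbb E[t_i \mid \mathcal H_i] \le c$ for a fixed $c = \tilde O(m/\lambda)$, and $\Pr[\text{iteration } i \text{ succeeds} \mid \mathcal H_i] \ge p$.

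Next I would bound the total number of iterations $N$ in round $k$. The round halts after $s := \lambda/(100\eta)$ successful iterations, and since each iteration succeeds with conditional probability at least $p$ regardless of the past, the number of successes among the first $i$ iterations stochastically dominates $\mathrm{Binomial}(i,p)$; a routine coupling then shows $N$ is stochastically dominated by $\mathrm{NegBin}(s,p)$, the number of i.i.d. probability-$p$ trials needed to collect $s$ successes, and hence $\mathbb E[N] \le s/p = O(\lambda/\eta)$.

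Finally I would put the two pieces together \emph{without} ever multiplying ``expected time per iteration'' by ``expected number of iterations'' directly. Instead, write $T = \sum_{i\ge 1} t_i \cdot \mathbf 1[\text{iteration } i \text{ occurs}]$, observe that $\mathbf 1[\text{iteration } i \text{ occurs}]$ is a function of $\mathcal H_i$ (it just records whether fewer than $s$ successes happened before iteration $i$), condition on $\mathcal H_i$, and bound
\[
\mathbb E[T] \;=\; \sum_{i\ge 1}\mathbb E\!\left[\mathbf 1[\text{iter } i \text{ occurs}]\cdot \mathbb E[t_i\mid\mathcal H_i]\right] \;\le\; c\sum_{i\ge 1}\Pr[\text{iter } i \text{ occurs}] \;=\; c\cdot\mathbb E[N] \;=\; \tilde O(m/\lambda)\cdot O(\lambda/\eta)\;=\;\tilde O(m/\eta).
\]
Any per-round bookkeeping (initializing $U_k \leftarrow \emptyset$ and the supporting data structures) is $\tilde O(n)$, and since $\eta \le \Delta$ and $m=\Theta(n\Delta)$ by \Cref{assume:regular} we have $\tilde O(n) = \tilde O(m/\Delta) = \tilde O(m/\eta)$, so it is absorbed into the claimed bound.

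\textbf{Main obstacle.} The only delicate point is the correlation just alluded to: the running time of an iteration (governed by the lengths of $P_u(e)$ and $P_v(e)$) and the event that the iteration succeeds (governed by whether the sampled $e$ is good) are both determined by the \emph{same} random edge $e$, so one cannot treat the per-iteration cost and the number of iterations as independent. The resolution is to condition on $\mathcal H_i$ — with respect to which the occurrence of iteration $i$ is measurable, so that only the uniformly bounded conditional expectation $\mathbb E[t_i\mid\mathcal H_i]$ enters the computation — and then control $\mathbb E[N]$ separately via the stochastic-domination argument above, which uses only that each iteration's conditional success probability is bounded below by a constant.
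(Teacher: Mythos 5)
Your proof is correct and follows the same route as the paper's: it bounds each round's cost using \Cref{cl:iteration:bipartite} for the per-iteration expected time and \Cref{cor:iteration:bipartite:1} for the per-iteration success probability, then combines them. The only difference is that the paper combines these informally (``w.h.p.\ the number of iterations is $\tilde O(\lambda/\eta)$, so the expected time is $\tilde O(\lambda/\eta)\cdot\tilde O(m/\lambda)$''), whereas you handle the correlation between per-iteration cost and iteration count rigorously via conditioning on the history $\mathcal H_i$ and a Wald-type decomposition, which is a cleaner justification of the same bound.
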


\begin{proof}
By \Cref{cl:iteration:bipartite}, each iteration within a round takes $\tilde{O}(m/\lambda)$ expected time. By \Cref{cor:iteration:bipartite:1}, each iteration succeeds with $O(1)$ probability, hence w.h.p.\ the  number of iterations per  round is $\tilde{O}(\lambda/\eta)$. So, the expected time complexity of a round is  $\tilde{O}(\lambda/\eta) \cdot \tilde{O}(m/\lambda) = \tilde{O}(m/\eta)$. 
\end{proof}

\begin{theorem}
\label{th:round:bipartite:randomized}
The algorithm from \Cref{sec:algo:describe:bipartite} performs the same task as specified in the statement of \Cref{th:bipartite:sparsification}, and runs in expected $\tilde{O}(m)$ time.
\end{theorem}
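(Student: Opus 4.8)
The plan is to assemble \Cref{th:round:bipartite:randomized} directly from the building blocks already established for the algorithm of \Cref{sec:algo:describe:bipartite}. The \emph{correctness} half is essentially immediate from the algorithm description: the algorithm runs $\eta$ rounds, and by \Cref{inv:round:bipartite} each completed round $k$ produces a fresh set $U_k \subseteq U^\star$ of size $\lambda/(100\eta)$ consisting entirely of edges of type $C_{2k-1} \times C_{2k}$, and the sets $U_1,\dots,U_\eta$ are mutually disjoint; hence at termination we have $\sum_{k=1}^\eta |U_k| = \eta \cdot \lambda/(100\eta) = \lambda/100$ edges of $U^\star \subseteq U$ whose type lies in $\bigcup_{k=1}^\eta (C_{2k-1}\times C_{2k}) \subseteq \bigcup_{k=1}^\eta (\C_k \times \C_k)$. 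Together with the edges of $U_{\texttt{uni}}$ (which are already social, i.e.\ have uniform types in $\bigcup_i C_i \times C_i \subseteq \bigcup_k \C_k \times \C_k$) this gives a constant fraction of $U$ with types in $\bigcup_{k=1}^\eta(\C_k\times\C_k)$, which is exactly what \Cref{th:bipartite:sparsification} demands. One should also note the degenerate case handled up front in \Cref{sec:algo:describe:bipartite}: if $|U_{\texttt{uni}}| \geq |U|/2$ we stop immediately, so we may assume $|U^\star| \in [\lambda/2,\lambda]$ as the analysis requires.

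For the \emph{runtime}, I would invoke \Cref{lem:round:bipartite}, which already states that each round takes $\tilde{O}(m/\eta)$ expected time, and sum over the $\eta$ rounds to get $\eta \cdot \tilde{O}(m/\eta) = \tilde{O}(m)$ expected time. I would additively fold in the $\tilde{O}(m)$ cost of the initial scan that identifies $U_{\texttt{uni}}$ and $U^\star$ (this is argued in \Cref{sec:algo:describe:bipartite} using \Cref{assume:regular}), which does not change the bound. The one subtlety worth spelling out is that \Cref{lem:round:bipartite} is a \emph{per-round} expectation bound obtained by multiplying the expected per-iteration cost $\tilde{O}(m/\lambda)$ (\Cref{cl:iteration:bipartite}) by the high-probability iteration count $\tilde{O}(\lambda/\eta)$ (which follows from \Cref{cor:iteration:bipartite:1}); to chain these cleanly across all $\eta$ rounds one argues that with high probability the total number of iterations across the whole algorithm is $\tilde{O}(\lambda)$, and on that high-probability event the total work is $\tilde{O}(\lambda)\cdot\tilde{O}(m/\lambda) = \tilde{O}(m)$, while the low-probability complement contributes negligibly (or one simply sums the per-round expectations by linearity of expectation, since \Cref{lem:round:bipartite} is already phrased as an expectation). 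Either route is routine.

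The main obstacle — and really the only place where care is needed — is making the termination argument airtight: one must verify that a round genuinely \emph{does} terminate, i.e.\ that there are always enough good edges to sample so that $\lambda/(100\eta)$ successful iterations occur. This is where \Cref{cl:iteration:bipartite:1} (at least a constant fraction of $U^\star \setminus (U_1\cup\cdots\cup U_k)$ is good) combines with \Cref{obs:size:bipartite} ($|U^\star \setminus (U_1\cup\cdots\cup U_k)| \in [\lambda/4,\lambda]$): since a sampled edge succeeds whenever it is good, each iteration succeeds with constant probability regardless of how many successes have already happened within the round, so the number of iterations needed to reach $\lambda/(100\eta)$ successes is stochastically dominated by a sum of geometric random variables and is $\tilde{O}(\lambda/\eta)$ with high probability. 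I would state this as a one-line consequence of \Cref{cor:iteration:bipartite:1} rather than re-deriving a Chernoff/Chebyshev bound. With these pieces in place the theorem follows by combining correctness with the $\tilde{O}(m)$ expected runtime.
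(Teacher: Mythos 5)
Your proposal is correct and takes the same route as the paper: the paper's own proof is a single line observing that the algorithm runs for $\eta$ rounds and invoking \Cref{lem:round:bipartite} for the per-round $\tilde{O}(m/\eta)$ expected time, while correctness is left implicit in the invariant. You have simply spelled out the pieces — \Cref{inv:round:bipartite} for the counting, the $U_{\texttt{uni}}$ early-exit, the sum over rounds, and the termination check via \Cref{cl:iteration:bipartite:1} and \Cref{obs:size:bipartite} — that the paper treats as already established.
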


\begin{proof}
Since the algorithm consists of $\eta$ rounds, the theorem follows from \Cref{lem:round:bipartite}.
\end{proof}


\subsection{Comparison with the Randomized Algorithm of \cite{ABBC2025}}
\label{sec:compare:bipartite}

We now explain how to derive the result of \cite{ABBC2025} from the randomized algorithm presented in \Cref{sec:randomized:bipartite}. Set $C = \{1, \ldots, \Delta\}$ to be the palette of $\Delta$ colors, and set $\eta := \Delta/2$. Thus, the palette $C = [\Delta]$ is  partitioned into subsets $\C_1, \ldots, \C_\eta \subseteq C$, where $|{\C_k}| = 2$ for all $k \in [\eta]$. W.l.o.g., suppose that $\C_k = \{2k-1, 2k\}$ for all $k \in [\eta]$.

Next, execute only the {\em first} round of the algorithm from \Cref{sec:algo:describe:bipartite}. By \Cref{lem:round:bipartite}, this takes $\tilde{O}(m/\eta) = \tilde{O}(n\Delta/\eta) = \tilde{O}(n)$ {\em expected time}. By \Cref{inv:round:bipartite}, at the end of this round we have a subset $U_1 \subseteq U$ of uncolored edges such that $|U_1| = \Omega(\lambda/\Delta)$, and every edge in $U_1$ is of type $\{1, 2\} \times \{1, 2\}$. At this point, \cite{ABBC2025} extends the partial coloring $\chi$ to a constant fraction of the edges in $U_1$ in {\em deterministic} $\tilde{O}(n)$ time, by flipping some  alternating paths of type $\{1, 2\} \times \{1, 2\}$. 


To summarize, the above procedure extends the initial partial coloring to $\Omega(1/\Delta)$ fraction of the uncolored edges, in $\tilde{O}(n)$ expected time. Repeating this procedure $\tilde{O}(\Delta)$ times, we get an algorithm that runs in $\tilde{O}(n \Delta) = \tilde{O}(m)$ expected time (see \Cref{assume:regular}), and extends the initial partial coloring to {\em all} the uncolored edges in $U$. Thus, the only part where randomization is used in this procedure is for this {\bf key task}: Implement round $k \in [\eta]$ of the algorithm from \Cref{sec:algo:describe:bipartite} in $\tilde{O}(n)$ time, when $\eta = \Delta/2$. We crucially require  that this key task gets implemented deterministically in {\em sublinear} $\hat{O}(n)$ time\footnote{$\hat{O}$ hides sub-polynomial factors.}, if we are to use it to design an almost-linear time deterministic algorithm for $\Delta$-coloring. This requirement is the {\em only} bottleneck towards our ultimate goal of derandomizing the algorithm of \cite{ABBC2025}. Nevertheless, the bottleneck seems insurmountable; just like many other computational problems in the sublinear setting, we do not see how to design a deterministic algorithm for this task that does {\em not}  read the entire input graph.

Alternatively, to implement the above algorithm from \cite{ABBC2025} in a way that is more compatible with our framework discussed in \Cref{sec:randomized:bipartite}, we could also execute all $\eta = \Delta/2$ rounds of socialization steps before doing any color extensions, and then we would end up with a constant fraction of uncolored edges in $U$ whose types are from
$\{1, 2\} \times \{1, 2\}, \{3, 4\} \times \{3, 4\},\ldots,
\{\Delta-1, \Delta\} \times \{\Delta-1, \Delta\}$.
At this point, we can easily 
extend the partial coloring to 
a constant fraction of the  edges in $U$
in $\tilde{O}(\eta\cdot n) = \tilde{O}(m)$ time.
Alas, de-randomizing this alternative approach would basically run into the same problem. 
The core difficulty in a direct derandomization of the extreme case when $\eta = \Delta/2$ is that the total number of possible types of alternating paths the algorithm could potentially flip is $\Omega(\Delta\eta) = \Omega(\Delta^2)$. Hence, since we are aiming at a total time bound of $\tilde{O}(m) = \tilde{O}(n\Delta)$, the average time spent on each type of alternating paths should be $\tilde{O}(n/\Delta)$. 
However, in the deterministic setting, for a single type of alternating paths, we cannot find and process a  $1/\Delta$-fraction of these paths in $\tilde{O}(n/\Delta)$ time,
since we can only exploit the property that their total length is bounded by $O(n)$.
In our deterministic algorithm, we will set $\eta$ to be much smaller than $\Delta/2$, namely $\eta = \Delta^\epsilon$ for a small $\epsilon$, with the advantage that we will only flip $O(\Delta^{1+\epsilon})$ different types of alternating paths.

Our main conceptual contribution in this paper is to formulate the notion of {\em type sparsification}, as captured in the statement of \Cref{th:bipartite:sparsification}, and derive the new randomized algorithm presented in \Cref{sec:randomized:bipartite}. Since \Cref{th:bipartite:sparsification} asks for a time complexity of $\tilde{O}(m \cdot \text{poly}(\eta))$ and the randomized algorithm from \Cref{sec:randomized:bipartite} has $\eta$ rounds, this gives us an added flexibility of  implementing a given round in  $\tilde{O}(m \cdot \text{poly}(\eta))$ time, as opposed to the sublinear $\tilde{O}(m/\eta)$ time guarantee of \Cref{lem:round:bipartite}. In \Cref{sec:deterministic:bipartite}, we show how to exploit this flexibility; and indeed present a {\em deterministic} $\tilde{O}(m \cdot \text{poly}(\eta))$ time algorithm for implementing a given round. This leads us to \Cref{th:bipartite:sparsification}, which, as explained in \Cref{sec:bipartite:framework}, implies \Cref{thm:bipartite}.

\subsection{Derandomization: Proof of \Cref{th:bipartite:sparsification}}
\label{sec:deterministic:bipartite}

We now show how to derandomize the algorithm from \Cref{sec:randomized:bipartite}. Recall the description of the randomized algorithm from \Cref{sec:algo:describe:bipartite}. Specifically, all we need is a deterministic subroutine for the following task: Implement a given round $k \in [\eta]$. We devote the rest of this section towards explaining how to perform this task in deterministic $\tilde{O}(m \cdot \text{poly}(\eta))$ time. Since the algorithm from \Cref{sec:algo:describe:bipartite} consists of $\eta$ rounds, this leads us to \Cref{th:bipartite:sparsification}. 


At any stage during a round $k \in [\eta]$, we say that a set $B \subseteq U^\star \setminus (U_1 \cup \cdots \cup U_k)$ of edges  is a {\bf batch} iff there exist two {\em distinct} indices $i, j \in [2\eta]$ such that every edge $e \in B$ is of type $C_i \times C_j$. The {\bf size} of the batch is given by $|B|$. We say that the batch is {\bf good} iff every edge $e \in B$ is good. Our deterministic implementation of a given round $k \in [\eta]$ is based on three key insights.

The first insight is summarized in \Cref{cl:good:batch:bipartite}, which guarantees that at any point in time during round $k \in [\eta]$, either (i) there exist $\Omega(\lambda)$ uncolored edges that are of uniform types, or (ii) there exists a good batch of $\Omega(\lambda/\eta^2)$ edges.  Under case (i), these $\Omega(\lambda)$ uncolored edges of uniform types are already social, and so we simply terminate the algorithm at this point. Thus, from now on we assume  the existence of a good batch $B$ of $\Omega(\lambda/\eta^2)$ uncolored edges.

The second insight is summarized in \Cref{cl:deterministic:bipartite:1}, which shows that we can compute the set of good edges (say) $U^{(g)} \subseteq U^\star \setminus (U_1 \cup \cdots \cup U_k)$ in deterministic $\tilde{O}(m)$ time. Once we have computed the set $U^{(g)}$, we  explicitly scan through the edges in $U^{(g)}$ to identify the largest good batch $B \subseteq U^{(g)}$. From the preceding discussion, we are guaranteed that $|B| = \Omega(\lambda/\eta^2)$. So, we have managed to identify a good batch $B$ of $\Omega(\lambda/\eta^2)$ edges in deterministic $\tilde{O}(m)$ time. 

To appreciate the third insight, consider any edge $e = (u, v) \in B$. Since $B$ is a good batch, the edge $e$ is good and has a type that is {\em not} aligned. Accordingly, if we flip the two alternating paths $P_u(e)$ and $P_v(e)$, then the edge $e$ would become of type $C_{2k-1} \times C_{2k}$ and get added to the set $U_k$, without damaging any existing edge in $U_1 \cup \cdots \cup U_k$. Let us refer to $P_u(e)$ and $P_v(e)$ as the {\bf characteristic alternating paths} of the edge $e$. We also say that $P_u(e)$ and $P_v(e)$ are {\bf characteristic alternating paths of the batch $B$} (since $e \in B$). Now, \Cref{cl:batch:flip:bipartite} and \Cref{cl:batch:flip:bipartite:101} allow us to conclude that the types of any two characteristic alternating paths of $B$ are either equal or mutually disjoint. Since the edges in $B \subseteq U$ form a matching (see \Cref{th:bipartite:sparsification}), this has the following implications: We can {\em simultaneously}\footnote{When we say that we can flip these paths \emph{simultaneously}, we mean that we can flip them in any arbitrary order and still have the same effect on the coloring.} flip all the characteristic alternating paths of $B$, without damaging any existing edge in $U^\star \setminus (U_1 \cup \cdots \cup U_k)$. Once we flip these paths, {\em all} the edges in $B$ get added to the set $U_k$. Furthermore, the time taken to perform this operation is proportional (up to polylogarithmic factors) to the total length of the characteristic alternating paths (summed over all the edges in $B$);  this, in turn, is at most $O(m)$ as per \Cref{cl:iteration:bipartite:0}.

To summarize, what we have described above is a deterministic procedure that runs in $\tilde{O}(m)$ time, and performs the following task: It adds $\Omega(\lambda/\eta^2)$ edges to the set $U_k$, without damaging any existing edge in $U_1 \cup \cdots \cup U_k$. Thus, we can implement round $k \in [\eta]$ by repeating this procedure $O(\eta)$ times (see \Cref{inv:round:bipartite}). In other words, there is a deterministic procedure for implementing a given round in $\tilde{O}(m \cdot \text{poly}(\eta))$ time. This implies \Cref{th:bipartite:sparsification}, since the  algorithm from \Cref{sec:algo:describe:bipartite} consists of $\eta$ rounds.

\begin{claim}
\label{cl:good:batch:bipartite}
At any stage during a round $k \in [\eta]$, let $\hat{U}_{\texttt{uni}}$ denote the set of  edges in $U^\star \setminus (U_1 \cup \cdots \cup U_k)$ that are of uniform types. Then, either $|\hat{U}_{\texttt{uni}}| = \Omega(\lambda)$, or there exists a good batch $B \subseteq U^\star \setminus (U_1 \cup  \cdots \cup U_k)$ consisting of $\Omega(\lambda/\eta^2)$  edges. 
\end{claim}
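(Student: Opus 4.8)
The plan is a counting/pigeonhole argument over the $\eta \times \eta$ type-grid. Recall that the edges in $U^\star \setminus (U_1 \cup \cdots \cup U_k)$ all have non-uniform types by definition of $U^\star$ (uniform-type edges were placed in $U_{\texttt{uni}}$, which is disjoint from $U^\star$), except that during the round some edges may have been moved into $U_1, \ldots, U_k$; what remains in $U^\star \setminus (U_1 \cup \cdots \cup U_k)$ still has types in $C \times C$ but we do not yet know whether they are uniform. So first I would split $U^\star \setminus (U_1 \cup \cdots \cup U_k)$ into those of uniform type (call this set $\hat U_{\texttt{uni}}$) and the rest. If $|\hat U_{\texttt{uni}}| = \Omega(\lambda)$ we are in the first case and done, so assume $|\hat U_{\texttt{uni}}| = o(\lambda)$, hence the number of non-uniform-type edges in $U^\star \setminus (U_1 \cup \cdots \cup U_k)$ is still $\Omega(\lambda)$, using $|U^\star \setminus (U_1 \cup \cdots \cup U_k)| \ge \lambda/4$ from \Cref{obs:size:bipartite}.

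Next I would bin each non-uniform-type edge $e$ according to the \emph{unordered} pair of blocks $\{i', j'\}$ with $i' \ne j'$ such that $e$ has type in $\C_{i'} \times \C_{j'}$ (well-defined since $e$ has a type $\{\alpha,\beta\}$ with $\alpha \in \C_{i'}$, $\beta \in \C_{j'}$; if $i' = j'$ the type would be uniform up to the finer $C$-partition — here I'd be a touch careful and bin instead by the finer blocks $C_a, C_b$ with $a\neq b$ among the $2\eta$ blocks, since that is what a batch is defined against). There are at most $\binom{2\eta}{2} = O(\eta^2)$ such bins, so by pigeonhole some bin, i.e. some pair of \emph{distinct} indices $i, j \in [2\eta]$, contains $\Omega(\lambda/\eta^2)$ edges of type $C_i \times C_j$; call this set $B_0$. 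By definition $B_0$ is a batch. Finally, invoke \Cref{cl:iteration:bipartite:1}: at least a constant fraction of the edges in $U^\star \setminus (U_1 \cup \cdots \cup U_k)$ are good, equivalently the bad edges number at most $(2/25)\lambda$. Hence discarding the bad edges from $B_0$ leaves a good batch $B \subseteq B_0$ with $|B| \ge |B_0| - (2/25)\lambda$; as long as $|B_0| = \Omega(\lambda/\eta^2)$ this bound alone is not enough, so instead I would argue globally: the good edges among the non-uniform ones number $\Omega(\lambda)$, re-bin \emph{only the good ones} into the $O(\eta^2)$ batches, and pigeonhole again to get a single bin that is entirely good (being a subset of the good edges) and has size $\Omega(\lambda/\eta^2)$. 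That bin is the desired good batch $B$.

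The main obstacle — really the only subtlety — is getting the order of operations right so that the pigeonholed batch is genuinely \emph{good}: one must restrict to good edges \emph{before} binning, not after, otherwise a dominant bin could be mostly bad. Everything else is bookkeeping: checking that after removing $\hat U_{\texttt{uni}}$ and the $\le (2/25)\lambda$ bad edges we still have $\Omega(\lambda)$ edges (true since $\lambda/4 - o(\lambda) - (2/25)\lambda = \Omega(\lambda)$), and confirming that a set of edges all sharing one ordered block-pair $(C_i, C_j)$ with $i \ne j$ meets the definition of a batch verbatim. No randomness is used, consistent with the remark that this claim feeds the deterministic algorithm of \Cref{sec:deterministic:bipartite}.
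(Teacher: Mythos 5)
Your proof is correct and follows essentially the same route as the paper's: restrict to the set of good, non-uniform edges (which is $\Omega(\lambda)$ by combining \Cref{obs:size:bipartite}, \Cref{cl:iteration:bipartite:1}, and the assumption $|\hat U_{\texttt{uni}}| = o(\lambda)$), then pigeonhole over the $O(\eta^2)$ batches indexed by unordered pairs $\{i,j\} \subseteq [2\eta]$, $i \ne j$. Your mid-paragraph self-correction — noting that one must restrict to good edges \emph{before} binning rather than after — is exactly the subtlety the paper's one-line phrase ``This set $S$ is partitioned into a collection of good batches'' is quietly handling.
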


\begin{proof}
Fix a sufficiently large absolute constant $\kappa > 1$. If $|\hat{U}_{\texttt{uni}}| \geq \lambda/\kappa$, then the claim clearly holds. For the rest of the proof, we assume that $|\hat{U}_{\texttt{uni}}| < \lambda/\kappa$. By \Cref{obs:size:bipartite} and \Cref{cl:iteration:bipartite:1}, there are $\Omega(\lambda)$ good edges in $U^\star \setminus (U_1 \cup \cdots \cup U_k)$. Since $\kappa$ is a sufficiently large constant and $|\hat{U}_{\texttt{uni}}| < \lambda/\kappa$, there exists a set $S \subseteq U^\star \setminus (U_1 \cup \cdots \cup U_k)$ of $\Omega(\lambda)$ good edges that are {\em not} uniform. This set $S$ is partitioned into a collection of good batches. The total number of such batches is at most $O(\eta^2)$, since each batch is defined by two distinct indices $i, j \in [2\eta]$. Hence, by a simple averaging argument, there must exist a good batch $B$ of size $\Omega(\lambda/\eta^2)$. 
\end{proof}

\begin{claim}
\label{cl:deterministic:bipartite:1}
At any stage during a round $k \in [\eta]$, we can compute the set of good edges in $U^\star \setminus (U_1 \cup \cdots \cup U_k)$ in deterministic $\tilde{O}(m)$ time.
\end{claim}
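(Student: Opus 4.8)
The plan is to compute, for each edge $e = (u,v) \in U^\star \setminus (U_1 \cup \cdots \cup U_k)$, the two characteristic alternating paths $P_u(e)$ and $P_v(e)$, and then mark $e$ as bad precisely when one of these two paths ends at a vertex that is an endpoint of some edge currently in $U_1 \cup \cdots \cup U_k$ (in a way that would damage that edge). The total work will be bounded by \Cref{cl:iteration:bipartite:0}, which asserts that the combined length of all the paths in $\mathcal{P}_{\texttt{left}} \cup \mathcal{P}_{\texttt{right}}$ is $O(m)$, together with a bookkeeping step at the endpoints.

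Concretely, I would proceed as follows. First, for each $e = (u,v)$, determine its type $\{\alpha,\beta\} \in C_i \times C_j$ and the target colors $\alpha_{2k-1} = \phi_{i \to (2k-1)}(\alpha)$ and $\beta_{2k} = \phi_{j \to 2k}(\beta)$ (handling the degenerate sub-cases where $\{i,j\} \cap \{2k-1,2k\} \neq \emptyset$ or $\alpha = \beta$ exactly as in the algorithm description, which only shortens or eliminates a path). Second, traverse $P_u(e)$ starting from $u$ following the $\{\alpha,\alpha_{2k-1}\}$-alternating path, and $P_v(e)$ from $v$ following the $\{\beta,\beta_{2k}\}$-alternating path, recording the other endpoint of each path and the color that newly becomes missing there after a hypothetical flip. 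Since $U^\star \setminus (U_1 \cup \cdots \cup U_k)$ is a matching and, by \Cref{obs:num:relevant:types}, $|\Gamma_{2k-1}|, |\Gamma_{2k}| = O(\Delta)$, each relevant type contributes alternating paths of total length $O(n)$, so the whole traversal costs $\tilde O(\Delta n) = \tilde O(m)$ time under \Cref{assume:regular} — this is exactly the accounting already carried out in the proof of \Cref{cl:iteration:bipartite:0}. Third, build a lookup structure over the edges in $U_1 \cup \cdots \cup U_k$: for each such edge $(u',v')$ of type $C_{2k'-1} \times C_{2k'}$ with $\alpha' \in \miss_\chi(u') \cap C_{2k'-1}$ and $\beta' \in \miss_\chi(v') \cap C_{2k'}$, store the pair $(u', \alpha')$ and $(v', \beta')$ in a dictionary keyed by (vertex, color). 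Fourth, for each $e$, check whether either characteristic path ends at a vertex $x$ with an associated newly-missing color that collides with a stored key; $e$ is good iff neither path collides. The dictionary has $O(\lambda) = O(n)$ entries and each of the $O(\lambda)$ lookups takes $\tilde O(1)$ time, so this step is $\tilde O(n) = \tilde O(m)$ as well. The output set $U^{(g)}$ is then all edges flagged good.

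The main subtlety — and the step I would treat most carefully — is verifying that this static "does the endpoint collide with a stored (vertex, color) key" test correctly captures the dynamic notion of a \emph{bad} edge as defined in \Cref{sec:algo:describe:bipartite}: an edge $e$ is bad iff flipping $P_u(e)$ or $P_v(e)$ would cause some $(u',v') \in U_1 \cup \cdots \cup U_k$ to stop having type $C_{2k'-1} \times C_{2k'}$. Flipping an alternating path only changes missing-color sets at its two endpoints (interior vertices see a swap of two present colors and are unaffected), so the only way to damage $(u',v')$ is for a characteristic path of $e$ to terminate at $u'$ or $v'$ and remove the color ($\alpha'$ or $\beta'$) that witnesses $(u',v')$'s good type; this is precisely the collision test, so correctness follows once one checks the endpoint-effect claim and that the two characteristic paths of a single $e$ do not interfere with each other at their endpoints (which holds because $\alpha_{2k-1} \in C_{2k-1}$ and $\beta_{2k} \in C_{2k}$ lie in disjoint color classes, mirroring the reasoning behind \Cref{cl:batch:flip:bipartite} and \Cref{cl:batch:flip:bipartite:101}). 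Summing the three contributions gives a deterministic $\tilde O(m)$-time algorithm, as claimed.
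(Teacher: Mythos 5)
Your approach matches the paper's essentially exactly: for each candidate edge traverse the two characteristic alternating paths, test at the far endpoints whether a flip would damage a previously aligned edge, and charge the total traversal cost to $\sum_{P \in \mathcal{P}_{\texttt{left}} \cup \mathcal{P}_{\texttt{right}}} |P| = O(m)$ via \Cref{cl:iteration:bipartite:0}. The paper is terse about the ``appropriate supporting data structures''; you instantiate them concretely as a dictionary keyed by (vertex, color) over $U_1 \cup \cdots \cup U_k$, which is a perfectly valid way to make the endpoint test $\tilde O(1)$. (You also implicitly correct what looks like a typo in the paper's proof text, which writes $U^\star \setminus (U_1 \cup \cdots \cup U_k)$ where $U_1 \cup \cdots \cup U_k$ is clearly meant: an edge is good iff flipping its characteristic paths damages no edge \emph{already placed} in $U_1, \ldots, U_k$.)

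There is one small but real slip in your collision test. You propose to record, at the far endpoint $x$ of a characteristic path, ``the color that newly becomes missing there after a hypothetical flip,'' and then to look up $(x, \text{that color})$ in the dictionary whose entries are $(u', \alpha')$ with $\alpha' \in \miss_\chi(u')$. These two quantities do not match. If the path has type $\{\alpha, \alpha_{2k-1}\}$ and ends at $x$, then exactly one of those two colors, call it $c$, is currently in $\miss_\chi(x)$, and the other, $c'$, is the color of the last path edge at $x$. Flipping the path makes $c$ stop being missing and makes $c'$ newly missing. Damage to an aligned edge $(u', v')$ with stored witness $\alpha'$ at $x = u'$ occurs precisely when $\alpha'$ stops being missing, i.e.\ when $c = \alpha'$ --- not when $c' = \alpha'$. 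So the dictionary key you look up should be $(x, c)$, the color in the path's type that was already missing at $x$, not the newly-missing one $c'$. Your own prose later (``remove the color \dots that witnesses $(u',v')$'s good type'') describes the correct condition; the mechanism you wrote down just queries the wrong one of the two colors. With that one-word fix the argument is correct and delivers the stated $\tilde O(m)$ bound.
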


\begin{proof}
We scan through all the edges in $U^\star \setminus (U_1 \cup \cdots \cup U_k)$, and for each such edge, we check whether it is good or bad. Consider any edge $e = (u, v) \in U^\star \setminus (U_1 \cup \cdots \cup U_k)$, and let $u$ (resp.~$v$) be its left (resp.~right) endpoint.  To check whether $e$ is good or bad, all we need to do is traverse the two alternating paths $P_u(e)$ and $P_v(e)$, and confirm whether flipping any of these paths damages some edge in $U^\star \setminus (U_1 \cup \cdots \cup U_k)$. If we use appropriate supporting data structures, then the time taken to implement this step is dominated by the total lengths of the alternating paths $P_u(e)$ and $P_v(e)$. Thus, upto a polylogarithmic factor, the time complexity of the entire procedure is proportional to $\sum_{P \in \mathcal{P}_{\texttt{left}} \cup \mathcal{P}_{\texttt{right}}} |P| = O(m)$, as per \Cref{cl:iteration:bipartite:0}.
\end{proof}

\begin{claim}
\label{cl:batch:flip:bipartite} At any stage during a round $k \in [\eta]$, let $B \subseteq U^\star \setminus (U_1 \cup \cdots \cup U_k)$ be a batch of edges, such that every edge in $B$ is of type $C_i \times C_j$, for $i, j \in [2\eta]$, $i \neq j$. Consider any two distinct edges $e = (u, v) \in B$ and $e' = (u', v') \in B$, and any two vertices $x\in \{u, v\}$ and $x' \in \{u', v'\}$. Let $\tau$ and $\tau'$ respectively denote the types of the alternating paths $P_x(e)$ and $P_{x'}(e')$. 

Then, it must necessarily be the case that either $\tau = \tau'$ or $\tau \cap \tau' = \emptyset$.
\end{claim}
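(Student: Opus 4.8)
The plan is a short case analysis driven by the explicit form of the maps $\phi$. Recall that, after the relabeling, $C_\ell = [(\ell-1)r+1,\ \ell r]$ for each $\ell\in[2\eta]$, and $\phi_{\ell\to\ell'}$ sends the $t$-th color of $C_\ell$ to the $t$-th color of $C_{\ell'}$, i.e.\ $(\ell-1)r+t \mapsto (\ell'-1)r+t$. Hence every characteristic path has a type of the special form $\{(\ell-1)r+t,\ (\ell'-1)r+t\}$ for a single offset $t\in[r]$ and two distinct \emph{block indices} $\ell\ne\ell'$; two such types are \emph{equal} precisely when their offsets and their unordered block-index pairs both agree, and are \emph{disjoint} whenever their block-index pairs are disjoint (since $C_1,\dots,C_{2\eta}$ partition $C$). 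I will use these two observations repeatedly. For concreteness I present the argument in the principal scenario $\{i,j\}\cap\{2k-1,2k\}=\emptyset$; the remaining cases only make one or more characteristic paths degenerate (empty), and are handled identically and more easily.

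Fix the batch $B$, with every edge of type $C_i\times C_j$; by the orientation convention for edges (the left endpoint is the one whose distinguished missing color is the smaller integer) we may take $i<j$, so that for every $e=(a,b)\in B$ the left endpoint $a$ has its distinguished missing color in $C_i$ and the right endpoint $b$ in $C_j$. Writing such missing colors as $(i-1)r+s$ at a left endpoint and $(j-1)r+s'$ at a right endpoint, with $s,s'\in[r]$, the characteristic path from a left endpoint is of type $\{(i-1)r+s,\ (2k-2)r+s\}$ (it is $\{\alpha,\phi_{i\to(2k-1)}(\alpha)\}$ and $C_{2k-1}=[(2k-2)r+1,(2k-1)r]$), while the characteristic path from a right endpoint is of type $\{(j-1)r+s',\ (2k-1)r+s'\}$ (it is $\{\beta,\phi_{j\to 2k}(\beta)\}$ and $C_{2k}=[(2k-1)r+1,2k\,r]$).

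Now distinguish three cases according to the roles of $x$ and $x'$. If both $x$ and $x'$ are left endpoints, then $\tau$ and $\tau'$ both have block-index pair $\{i,2k-1\}$ (these indices differ since $i\notin\{2k-1,2k\}$) and offsets $s,s'$; hence $\tau=\tau'$ if $s=s'$ and $\tau\cap\tau'=\emptyset$ if $s\ne s'$, by the first paragraph. If both $x$ and $x'$ are right endpoints, the same argument applies with the pair $\{j,2k\}$. Finally, if w.l.o.g.\ $x$ is a left endpoint and $x'$ a right endpoint, then $\tau$ has block-index pair $\{i,2k-1\}$ and $\tau'$ has block-index pair $\{j,2k\}$; these pairs are disjoint, since $i\ne j$, $2k-1\ne 2k$, and $\{i,j\}\cap\{2k-1,2k\}=\emptyset$ rules out $i=2k$ and $j=2k-1$; therefore $\tau\cap\tau'=\emptyset$. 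This exhausts all cases and proves the claim.

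I do not expect a serious obstacle: the proof is essentially bookkeeping with the block/offset coordinates of colors. The only points needing care are that the batch hypothesis really pins down a single block $C_i$ for \emph{all} left endpoints (which is exactly what the orientation convention gives) and that the degenerate cases where $i$ or $j$ lies in $\{2k-1,2k\}$ — in which the corresponding characteristic path is empty and contributes no type — are not overlooked; both are immediate.
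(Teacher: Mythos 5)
Your proof is correct and follows essentially the same route as the paper's: a case analysis on whether $x$ and $x'$ play the roles of left or right endpoints, using the injectivity of the $\phi$ maps (for the same-role cases, your Cases 1 and 2 collapsed with the paper's Cases 1 and 2) and the disjointness of the target color blocks (for the mixed case, your third case matching the paper's Cases 3 and 4). The only cosmetic slip is that your stated ``two observations'' in the opening paragraph do not literally cover the same-block-pair, different-offset situation invoked in your first case; the disjointness there is nonetheless immediate from the explicit block/offset coordinates, so the argument stands.
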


\begin{proof}
Let $\{\alpha, \beta\}$ be the type of the edge $e = (u, v)$, with $\alpha \in \miss_\chi(u) \cap C_i$ and $\beta \in \miss_\chi(v) \cap C_j$. Similarly, let $\{\alpha', \beta'\}$ be the type of the edge $e' = (u', v')$, with $\alpha' \in \miss_\chi(u') \cap C_i$ and $\beta' \in \miss_\chi(v') \cap C_j$. W.l.o.g., suppose that $i < j$, i.e., $u$ and $u'$ (resp.~$v$ and $v'$) are the left (resp.~right) endpoints of $e$ and $e$. As in \Cref{sec:algo:describe:bipartite}, we focus on the most non-trivial scenario, where $\{i, j\} \cap \{2k-1, 2k\} = \emptyset$. Now, consider four possible cases.

\medskip
\noindent {\em Case 1: $x = u$ and $x' = u'$.} Here, both the types $\tau$ and $\tau'$ are $(2k-1)$-relevant. Specifically, we have $\tau = \{ \alpha, \phi_{i \to (2k-1)}(\alpha)\}$ and $\tau' = \{ \alpha', \phi_{i \to (2k-1)}(\alpha')\}$. Since $\phi_{i \to (2k-1)} : C_i \longrightarrow C_{2k-1}$ is a one-to-one mapping, we conclude that either $\tau = \tau'$ or $\tau \cap \tau' = \emptyset$.

\medskip
\noindent {\em Case 2: $x = v$ and $x' = v'$.} Here, both the types $\tau$ and $\tau'$ are $(2k)$-relevant, and the argument is analogous to the one in Case 1. 

\medskip
\noindent {\em Case 3: $x = u$ and $x' = v'$.} Here, the type $\tau$ is $(2k-1)$-relevant and the type $\tau'$ is $(2k)$-relevant. Specifically, we have $\tau \in C_i \times C_{2k-1}$ and $\tau' \in C_j \times C_{2k}$. Since $i \neq j$, we conclude that $\tau \cap \tau' = \emptyset$.

\medskip
\noindent {\em Case 4: $x = v$ and $x' = u'$.} Here, the type $\tau$ is $(2k)$-relevant and the type $\tau'$ is $(2k-1)$-relevant, and the argument is analogous to the one in Case 3.

\medskip
To summarize, from the above four cases, we infer that either $\tau = \tau'$ or $\tau \cap \tau' = \emptyset$.
\end{proof}

\begin{claim}
\label{cl:batch:flip:bipartite:101} At any stage during a round $k \in [\eta]$, consider an edge $e = (u, v) \in U^\star \setminus (U_1 \cup \cdots \cup U_k)$ whose type is {\em not}  aligned. Let $\tau$ and $\tau'$ respectively denote the types of the alternating paths $P_u(e)$ and $P_v(e)$. Then, it must necessarily be the case that  $\tau \cap \tau' = \emptyset$.
\end{claim}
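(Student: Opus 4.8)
The plan is to follow the same template as the proof of \Cref{cl:batch:flip:bipartite}. Write the type of $e = (u,v)$ as $\{\alpha, \beta\}$ with $\alpha \in \miss_\chi(u) \cap C_i$ and $\beta \in \miss_\chi(v) \cap C_j$, take $u$ (resp.~$v$) to be the left (resp.~right) endpoint of $e$, and reduce to the most non-trivial scenario $\{i, j\} \cap \{2k-1, 2k\} = \emptyset$. Here the hypothesis that the type of $e$ is \emph{not} aligned is exactly what rules out the configuration $\{i, j\} = \{2k-1, 2k\}$, in which $e$ already has the target type and the characteristic paths $P_u(e), P_v(e)$ as defined in \Cref{sec:algo:describe:bipartite} do not arise. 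The only configurations left outside the non-trivial scenario are the degenerate ones where exactly one of $i, j$ lies in $\{2k-1, 2k\}$ (so $e$ is uniform or has just one ``off'' color); there one of the two characteristic paths is empty because its endpoint already misses a color of the right class, and the claim is immediate. Hence the substance is the non-trivial scenario.

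For the non-trivial scenario, I would recall from the iteration description that $P_u(e)$ has type $\tau = \{\alpha, \alpha_{2k-1}\}$ with $\alpha_{2k-1} = \phi_{i \to (2k-1)}(\alpha) \in C_{2k-1}$, and $P_v(e)$ has type $\tau' = \{\beta, \beta_{2k}\}$ with $\beta_{2k} = \phi_{j \to 2k}(\beta) \in C_{2k}$; consequently $\tau \subseteq C_i \cup C_{2k-1}$ and $\tau' \subseteq C_j \cup C_{2k}$. The remainder is a short counting step using that $C_1, \dots, C_{2\eta}$ are pairwise disjoint together with $i, j \notin \{2k-1, 2k\}$. If $i \neq j$, the four indices $i, j, 2k-1, 2k$ are pairwise distinct, so $C_i \cup C_{2k-1}$ and $C_j \cup C_{2k}$ are disjoint and therefore $\tau \cap \tau' = \emptyset$. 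If $i = j$, then $(C_i \cup C_{2k-1}) \cap (C_i \cup C_{2k}) = C_i$, so any color common to $\tau$ and $\tau'$ would have to equal both $\tau \cap C_i = \{\alpha\}$ and $\tau' \cap C_i = \{\beta\}$ — impossible, since $\{\alpha, \beta\}$ is a type and hence $\alpha \neq \beta$. In either case $\tau \cap \tau' = \emptyset$, as desired.

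I do not expect a genuine obstacle here: the statement ultimately just tracks which classes of the fine palette partition the two characteristic paths draw their colors from, and rests only on the disjointness of $C_1, \dots, C_{2\eta}$, the inclusion $i, j \notin \{2k-1, 2k\}$ furnished by the non-trivial scenario, and $\alpha \neq \beta$. The only points that need a bit of care are invoking the ``not aligned'' hypothesis exactly where it is needed — to discard the configuration $\{i,j\} = \{2k-1,2k\}$ — and, for completeness, dispatching the degenerate configurations in which one of $P_u(e), P_v(e)$ is empty; both are routine.
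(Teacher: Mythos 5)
Your proposal is correct and takes the same route as the paper: in the non-trivial scenario ($\{i,j\}\cap\{2k-1,2k\}=\emptyset$) you observe $\tau\subseteq C_i\cup C_{2k-1}$ and $\tau'\subseteq C_j\cup C_{2k}$ and conclude disjointness from the pairwise disjointness of the $C_\ell$. The paper's proof simply states $i\neq j$ (which is supplied by the context in which the claim is applied, namely to edges in a batch of a fixed type $C_i\times C_j$ with $i\neq j$) and does not bother with the $i=j$ or degenerate cases; your extra casework covering $i=j$ via $\alpha\neq\beta$ and the degenerate configurations is harmless but not needed for the claim's intended use.
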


\begin{proof}
Let $\{ \alpha, \beta\}$ be the type of the edge $e = (u, v)$, where $\alpha \in \miss_\chi(u) \cap C_i$, $\beta \in \miss_\chi(v) \cap C_j$, and $i \neq j$. W.l.o.g., suppose that $i < j$. As in \Cref{sec:algo:describe:bipartite}, we consider the most nontrivial scenario, where $\{i, j\} \cap \{2k-1, 2k\} = \emptyset$. Now, it is easy to verify that $\tau \in C_i \times C_{2k-1}$ and $\tau' \in C_j \times C_{2k}$. Since $i \neq j$, we get $\tau \cap \tau' = \emptyset$.
\end{proof}

\section{Preliminaries}\label{sec:prelim}

In this section, we give the preliminaries and define the notation used throughout the paper. The notation and preliminaries of our paper are very similar to \cite{ABBC2025}.

\subsection{Basic Notation}\label{sec:prelim:notation}

Let $G = (V, E)$ be graph on $n$ vertices with $m$ edges and maximum degree $\Delta$ and let $\chi : E \longrightarrow C \cup \{ \bot\}$ be a partial $|C|$-coloring of $G$ with colors $C$. We refer to edges $e \in E$ with $\chi(e) = \bot$ as \emph{uncolored}. Given a vertex $u \in V$, we denote the set of colors that are not assigned to any edge incident on $u$ by $\miss_\chi(u)$. We sometimes refer to $\miss_\chi(u)$ as the \emph{palette} of $u$. We say that the colors in $\miss_\chi(u)$ are \emph{missing} (or \emph{available}) at $u$. 

Given a path $P = e_1,\dots,e_k$ in $G$, we say that $P$ is an \emph{$\{\alpha, \beta\}$-alternating path} if $\chi(e_i)= \alpha$ whenever $i$ is odd and $\chi(e_i) = \beta$ whenever $i$ is even (or vice versa). We say that the alternating path $P$ is \emph{maximal} if one of the colors $\alpha$ or $\beta$ is missing at each of the endpoints of $P$. We refer to the process of changing the color of each edge $e_i \in P$ with color $\alpha$ (resp.~$\beta$) to $\beta$ (resp.~$\alpha$) as \emph{flipping} the path $P$. We denote by  $|P|$ the length (i.e., the number of edges) of the alternating path $P$. We define the \emph{length $i$ prefix} of the path $P$ to be the path $P_{\leq i} := e_1,\dots,e_i$.

Consider a set $U \subseteq E$ of edges that are uncolored under $\chi$, i.e., $\chi(e) = \bot$ for all $e \in U$. We use the phrase {\em ``extending $\chi$ to $U$''} to mean the following: Modify $\chi$ so as to ensure that $\chi(e)\neq \bot$ for all $e \in U$, without creating any new uncolored edges. When the set $U$ consists of a single edge $e$ (i.e., when $U = \{e\}$), we use the phrase {\em ``extending $\chi$ to the edge $e$''} instead of {\em ``extending $\chi$ to $U$''}.

Our algorithms will always work by modifying a partial coloring $\chi$; unless explicitly specified otherwise, every new concept we define (such as u-fans and separable collections in \Cref{sec:u-fans def}) will be defined with respect to this particular partial coloring $\chi$.

\subsection{U-Fans and Separable Collections}\label{sec:u-fans def}

We begin by defining the notion of \emph{u-fans} that was used by \cite{gabow1985algorithms}.\footnote{The term `u-fan' was originally introduced by \cite{gabow1985algorithms} as an abbreviation for `uncolored fan'.}

\begin{definition}[u-fan, \cite{gabow1985algorithms}]\label{def:u-fan}
    A \emph{u-fan} is a tuple $\f = (u, v, w, c_{\f}(u), c_{\f}(v), c_{\f}(w))$ where $u$, $v$ and $w$ are distinct vertices and $c_{\f}(u)$, $c_{\f}(v)$ and $c_{\f}(w)$ are colors such that:
    \begin{enumerate}
        \item $(u,v)$ and $(u,w)$ are uncolored edges.
        \item $c_{\f}(u) \in \miss_\chi(u)$, $c_{\f}(v) \in \miss_\chi(v)$ and $c_{\f}(w) \in \miss_\chi(w)$.
        \item $c_{\f}(u) \neq c_{\f}(v)$ and $c_{\f}(v) = c_{\f}(w)$.\label{item:diff cols}
    \end{enumerate}
\end{definition}
\noindent
We say that $u$ is the \emph{center} of $\f$ and that $v$ and $w$ are the \emph{leaves} of $\f$.
Given a vertex $x \in \f$, we say that $c_{\f}(x)$ is the available color that $\f$ `assigns' to $x$. 
We define a \textbf{damaged u-fan} in the same way as a u-fan, except that we do not require Condition~\ref{item:diff cols} in \Cref{def:u-fan}. 

\medskip
\noindent \textbf{Color Types:} Given a set of colors $C$, we define a \textbf{type} as an unordered pair of colors in $C$. Given $A, B \subseteq C$, we abuse notation slightly and let $A \times B$ denote the set of types with one color in $A$ and one color in $B$, i.e.~the set $\{ \{\alpha, \beta\} \mid \alpha \in A, \beta \in B \}$.

Given a u-fan $\f$, we define \emph{the type of $\f$} as $\tau(\f) := \{c_{\f}(x)\}_{x \in \f}$. Note that $|\tau(\f)| = 2$, so $\tau(\f)$ satisfies the definition of a type. For a damaged u-fan $\f$, we define $\tau(\f)$ in the same way, but it is not necessarily a type since we might have $|\tau(\f)| = 1$ or $|\tau(\f)| = 3$. For notational convenience, we still refer to $\tau(\f)$ as the type of $\f$ even if $\f$ is a damaged u-fan.

\medskip
\noindent \textbf{Activating U-Fans:}
Let $\f$ be a u-fan with center $u$, leaves $v$ and $w$, and type $\tau(\f) = \{\alpha,\beta\}$ such that $c_{\f}(u) = \alpha$ and $c_{\f}(v) = c_{\f}(w) = \beta$. The \emph{key property} of u-fans is that at most one of the $\{\alpha, \beta\}$-alternating paths starting at $v$ or $w$ ends at $u$. Suppose that the $\{\alpha, \beta\}$-alternating path starting at $v$ does not end at $u$. Then, after flipping this $\{\alpha, \beta\}$-alternating path, both vertices $u$ and $v$ are missing color $\alpha$. Thus, we can extend the coloring $\chi$ by assigning $\chi(u, v)\leftarrow \alpha$.
We refer to this as \emph{activating} the u-fan $\f$.

\medskip
\noindent \textbf{Collections of U-Fans:} Following the approach of \cite{ABBC2025},
throughout this paper, we often consider collections of u-fans $\U$. We only use the term `collection' in this context, so we abbreviate `collection of u-fans' by just `collection'.
We will be particularly interested in collections satisfying the following useful property, which we refer to as \emph{separability}.

\begin{definition}[Separable Collection, \cite{ABBC2025}]
    Let $\chi$ be a partial $\mu$-coloring of $G$ with colors $C$ and $\mathcal U$ be a collection of u-fans.
    We say that the collection $\mathcal U$ is \emph{separable} if the following holds:
    \begin{enumerate}
        \item All u-fans in $\mathcal U$ are edge-disjoint.
        \item For each $x \in V$, the colors in the multi-set
        $C_{\mathcal U}(x) := \{ c_{\f}(x) \mid \f \in \mathcal U, \, x \in \f\}$ are distinct.
    \end{enumerate}
\end{definition}
As discussed in \cite{ABBC2025}, the second property of this definition is rather important since we need to ensure that different u-fans are not interfering with each other when they share common vertices. See \Cref{fig:separable} for an illustration.

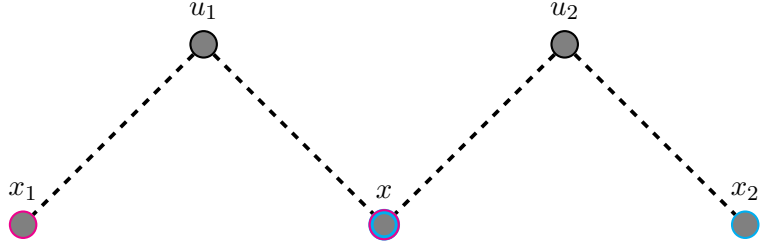
\begin{figure}
    \centering
    \begin{tikzpicture}[thick,scale=1.2]
	\draw (0, 0) node(x)[circle, draw, color=cyan, fill=black!50,
inner sep=0pt, minimum width=10pt, line width=2pt, label = $x$] {};
    \draw (0, 0) node(xx)[circle, draw, color=magenta, inner sep=0pt, minimum width=11pt] {};

	\draw (-2, 2) node(u1)[circle, draw, fill=black!50,
	inner sep=0pt, minimum width=10pt, label = $u_1$] {};
		
	\draw (-4, 0) node(x1)[circle, draw, color=magenta, fill=black!50,
	inner sep=0pt, minimum width=10pt, label = $x_1$] {};
		
	\draw (2, 2) node(u2)[circle, draw, fill=black!50,
	inner sep=0pt, minimum width=10pt, label=$u_2$] {};
	
	\draw (4, 0) node(x2)[circle, draw,  color=cyan, fill=black!50,
	inner sep=0pt, minimum width=10pt, label = $x_2$] {};
	
	\draw [line width = 0.5mm, dashed] (u1) to (x1);
	\draw [line width = 0.5mm, dashed] (u2) to (x2);
        \draw [line width = 0.5mm, dashed] (u1) to (x);
	\draw [line width = 0.5mm, dashed] (u2) to (x);
	
\end{tikzpicture}
    \caption{This picture shows two u-fans $(u_1, x_1, x, \cdot \,, {\color{magenta}\beta_1}, {\color{magenta}\beta_1})$ and $(u_2, x_2, x, \cdot \,, {\color{cyan}\beta_2}, {\color{cyan}\beta_2})$ sharing a common vertex $x$. The separability condition requires that $\beta_1\neq \beta_2$; for instance, $\beta_1$ and $\beta_2$ could be magenta and cyan as shown here.} 
    \label{fig:separable}
\end{figure}

\subsection{Constructing Separable Collections}

To construct separable collections of u-fans, we use a fast deterministic algorithm given by \cite{ABBC2025} which takes a set $U$ of $\lambda$ uncolored edges and either extends the coloring to a constant fraction of the edges in $U$ or shifts these uncolored edges around the graph in order to construct a separable collection of $\Omega(\lambda)$ u-fans. More specifically, we use the following lemma of \cite{ABBC2025} as a black box.

\begin{lemma}[\cite{ABBC2025}, Lemma 6.2]\label{lem:build u-fans}
    Given a graph $G$, a partial $(\Delta + 1)$-coloring $\chi$ of $G$ and a set of $\lambda$ uncolored edges $U$, there is a deterministic algorithm that does one of the following in $O((m + \Delta \lambda)\log \Delta)$ time:
    \begin{enumerate}
        \item  Extends the coloring to $\Omega(\lambda)$ uncolored edges.
        \item  Modifies $\chi$ to obtain a separable collection of $\Omega(\lambda)$ u-fans $\mathcal U$.
    \end{enumerate}
\end{lemma}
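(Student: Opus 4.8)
Since this lemma is imported verbatim from \cite{ABBC2025}, I describe how one would prove it from scratch. The plan is to replay Vizing's classical fan-rotation argument for each uncolored edge of $U$, \emph{but to stop just before the alternating-path flip}: what remains at that moment is exactly a u-fan awaiting activation. Producing one u-fan this way is routine; the real work is to run the argument for $\Omega(\lambda)$ edges of $U$ simultaneously, to guarantee that the resulting u-fans can be chosen to form a \emph{separable} collection while losing only a constant factor, and to keep the running time free of any dependence on alternating-path lengths (which is precisely why activation is deferred).

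First I would set up, for every vertex, a balanced-tree data structure over $\miss_\chi(\cdot)$ supporting membership and ``extract some element'' in $O(\log\Delta)$ time, at total cost $O(m\log\Delta)$. Then repeatedly color any uncolored edge $(u,v)$ with $\miss_\chi(u)\cap\miss_\chi(v)\neq\emptyset$; if this colors $\Omega(\lambda)$ edges of $U$ we output case~(1). Otherwise we are left with a set $U'$ of $\Omega(\lambda)$ uncolored edges whose endpoints miss disjoint color sets. For each $(u,v)\in U'$ pick $\alpha\in\miss_\chi(u)$ and grow a \emph{maximal Vizing fan} $v=v_0,v_1,\dots,v_k$ at $u$ (with $\chi(u,v_i)$ a not-yet-used missing color of $v_{i-1}$); since these colors are distinct, $k=O(\Delta)$ and one fan costs $O(\Delta\log\Delta)$. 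Maximality forces one of two outcomes: either some $v_j$ has a missing color also missing at $u$, in which case a fan rotation down to $v_j$ followed by directly coloring $(u,v_j)$ extends the coloring \emph{without touching any alternating path} (mark $(u,v)$ \emph{reducible}); or there is an index with $\chi(u,v_i)=\delta\in\miss_\chi(v_k)$, so $v_{i-1}$ and $v_k$ both miss $\delta$ with $\delta\neq\alpha$, and by rotating the fan so that one of the uncolored edges routed to the center $u$ sits at $v_{i-1}$, and combining it with a second uncolored edge of $U'$ routed to $v_k$ (this is the only point where a u-fan consumes more than one uncolored edge, which is consistent with obtaining $\Omega(\lambda)$ rather than $\lambda$ u-fans, since $2\cdot\Omega(\lambda)\le\lambda$), we read off a u-fan $\f$ with center $u$, leaves $v_{i-1},v_k$ and type $\{\alpha,\delta\}$, whose defining ``key property'' is exactly Vizing's observation that at most one of the two $\{\alpha,\delta\}$-alternating paths from the leaves reaches $u$. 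If $\Omega(\lambda)$ edges are reducible we perform the corresponding (pairwise-disjoint, distinct-center) extensions and output case~(1); otherwise we have a candidate u-fan attached to $\Omega(\lambda)$ edges of $U'$.

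The main obstacle is turning this pile of candidate u-fans into a \emph{separable} sub-collection of size still $\Omega(\lambda)$. A naive conflict-graph/greedy argument is too lossy: a u-fan has $\Theta(\Delta)$ leaf-edges, so its conflict degree with other fans can be $\Theta(\Delta)$ and greedy independent set would retain only $\Omega(\lambda/\Delta)$ of them; moreover edge-disjointness alone does not give the second condition of separability, since two fans may assign the same available color to a shared vertex. The resolution — the role played by the \ExpandStar\ and \MakeColor\ subroutines — is to \emph{interleave} the fan growth of the previous paragraph with the disjointness/colorfulness bookkeeping: process the edges of $U'$ in one pass, grow each fan only through edges and colors not already claimed (the flexibility to do so comes from having $\Delta+1$ colors, so several fan extensions are typically available at each step), charge the few fans that cannot avoid a collision to already-claimed edges, and, whenever a fan would reassign a vertex's available color, re-route it through a fresh missing color. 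Proving that this single pass still outputs $\Omega(\lambda)$ u-fans, and that all three invariants (pairwise edge-disjointness, distinctness of the assigned colors at each vertex, and validity of each individual u-fan) are maintained throughout, is the technical heart of the argument.

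Finally, the running time: the setup is $O(m\log\Delta)$; each of the $O(\lambda)$ fans touches $O(\Delta\log\Delta)$ elements and \emph{no} alternating path is ever traversed (activation being deferred), and the bookkeeping of the third step is dominated by the same quantities. Summing gives $O((m+\Delta\lambda)\log\Delta)$, as claimed.
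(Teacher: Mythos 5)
This lemma is not proved in the paper at all: it is imported verbatim from \cite{ABBC2025} (their Lemma~6.2) and used as a black box, as the surrounding text states explicitly. So there is no in-paper proof to compare against, and your reconstruction must be judged on its own merits.

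Your high-level plan --- run Vizing-fan rotations but stop short of the alternating-path flip, and extract u-fans as the residue --- is the right one, and you correctly identify the two real difficulties (separability and the $2{\cdot}\Omega(\lambda)\le\lambda$ accounting). But there is a genuine gap in the step that actually manufactures a u-fan. A u-fan, by \Cref{def:u-fan}, needs \emph{two uncolored edges $(u,v)$ and $(u,w)$ sharing the center $u$}. A single Vizing fan grown from one uncolored edge $(u,v_0)$ and then rotated still leaves only one uncolored edge at $u$, namely $(u,v_{i-1})$; the edge $(u,v_k)$ stays colored. Your phrase ``combining it with a second uncolored edge of $U'$ routed to $v_k$'' is not a well-defined operation: $v_k$ is determined by the fan, not by $U'$, and nothing forces a second uncolored edge of $U'$ to sit at $v_k$, nor is an uncolored edge at $(u,v_k)$ what the definition requires (both uncolored edges must be incident on $u$, and the two \emph{leaves} must share a missing color). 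The actual mechanism in the Gabow--Nishizeki--Kariv / \cite{ABBC2025} construction is the \ExpandStar\ step done \emph{first}: it shifts uncolored edges along the graph so that every vertex carrying an uncolored edge carries at least two, at which point each such ``star'' vertex is a candidate center and the Vizing-fan analysis (plus \MakeColor) is applied to align the leaf colors. You instead present \ExpandStar\ and \MakeColor\ as after-the-fact separability bookkeeping, which inverts their role and leaves the origin of the second uncolored edge at $u$ unexplained. The separability argument itself is also only gestured at (``the technical heart''), not carried out. So the proposal records the correct architecture and the correct running-time accounting, but the core combinatorial step that produces a u-fan is misdescribed and would need to be rebuilt around the star-expansion idea before the argument closes.
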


We remark that the algorithm described by \Cref{lem:build u-fans} crucially uses Vizing fans and chains. The rest of our algorithm does not use Vizing fans and chains.

\subsection{Alternating Paths in Separable Collections}

The main way that our algorithms modify the coloring $\chi$ of a graph is by flipping alternating paths. However, our algorithms will also often explicitly maintain a separable collection of u-fans $\U$ while modifying the coloring $\chi$, and we want to ensure that $\U$ remains separable at all times. Thus, whenever we flip an alternating path, we need to appropriately modify the missing colors assigned to (and hence also the types of) some u-fans.
We do this by using the following simple subroutine $\Flip$ to flip alternating paths.

\medskip
\noindent \textbf{The Algorithm $\Flip$:} As input, the subroutine $\Flip$ is given a graph $G$, a coloring $\chi$ of $G$ with colors $C$, a separable collection $\U$, and a maximal $\{\alpha,\beta\}$-alternating path $P$. 
The subroutine then flips the colors of the alternating path $P$ and modifies the missing colors assigned to the vertices of some u-fans in $\U$ to ensure that $\U$ remains separable.
The pseudocode in \Cref{alg:flip} gives a formal description of how we implement the procedure.

\begin{algorithm}[H]
    \SetAlgoLined
    \DontPrintSemicolon
    \SetKwRepeat{Do}{do}{while}
    \SetKwBlock{Loop}{repeat}{EndLoop}
    Let $x$ and $y$ be the endpoints of $P$\;
    Flip the colors of the $\{\alpha,\beta\}$-alternating path $P$\;
    \For{$z \in \{x,y\}$}{
        \If{\textnormal{\textbf{there exists}} $\f \in \U$ s.t.~$c_{\f}(z) \in \{\alpha,\beta\}$}{
            Let $c \in \{\alpha, \beta\} \setminus \{ c_{\f}(z) \}$\;
            Set $c_{\f}(z) \leftarrow c$\;
        }
    }
    \caption{$\Flip(P)$}
    \label{alg:flip}
\end{algorithm}

\begin{lemma}[\cite{ABBC2025}, Lemma 5.4]
    For any maximal alternating path $P$, the collection $\U$ remains separable after applying $\Flip(P)$. Furthermore, this damages at most $2$ u-fans.
\end{lemma}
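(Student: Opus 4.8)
The final statement to prove is the lemma about $\Flip(P)$ from \cite{ABBC2025}, Lemma 5.4: that after applying $\Flip(P)$ for any maximal alternating path $P$, the collection $\U$ remains separable, and at most $2$ u-fans are damaged.

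\textbf{Plan.} The plan is to argue directly from the pseudocode of \Cref{alg:flip}, tracking how each of the two defining properties of separability behaves under the flip. Let $P$ be a maximal $\{\alpha,\beta\}$-alternating path with endpoints $x$ and $y$. The only edges whose colors change lie on $P$, and these colors only swap between $\alpha$ and $\beta$; crucially, the \emph{set} of colors appearing on edges incident to any internal vertex of $P$ is unchanged (an internal vertex sees exactly one $\alpha$-edge and one $\beta$-edge of $P$, and after the flip it still sees one of each), so $\miss_\chi(\cdot)$ changes only possibly at $x$ and $y$. First I would record this observation, and note that at each of $x,y$ the flip either adds one of $\{\alpha,\beta\}$ to the missing set and removes the other (if the relevant endpoint is incident to exactly one edge of $P$), or leaves the missing set unchanged; in all cases $\miss_\chi(z)$ for $z\in\{x,y\}$ loses at most one color, and that color is in $\{\alpha,\beta\}$, while it may gain the other color of $\{\alpha,\beta\}$.

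\textbf{Key steps.} Step 1: edge-disjointness (first separability condition) is trivially preserved, since $\Flip$ neither creates nor destroys u-fans, nor changes which edges belong to which u-fan — it only edits the color labels $c_{\f}(\cdot)$. Step 2: verify that each u-fan $\f\in\U$ still satisfies the u-fan conditions of \Cref{def:u-fan}. The edges $(u,v),(u,w)$ of a u-fan remain uncolored (edges of $P$ are colored, hence not u-fan edges, so they are untouched; more precisely, a flipped edge has a color in $\{\alpha,\beta\}$ both before and after, so it was never one of the two uncolored u-fan edges). For condition 2, for any vertex $z\in\{x,y\}$ with a u-fan $\f$ such that $c_{\f}(z)\in\{\alpha,\beta\}$, the algorithm resets $c_{\f}(z)$ to the other color $c\in\{\alpha,\beta\}\setminus\{c_{\f}(z)\}$; I would check that this new color $c$ is indeed missing at $z$ after the flip. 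This is where the maximality of $P$ and the alternating structure enter: if $c_{\f}(z)$ was missing before the flip, then $z$ was not incident to an edge of color $c_{\f}(z)$; since $P$ is maximal, $z$ being an endpoint with $c_{\f}(z)\in\{\alpha,\beta\}$ missing means $P$ ``used up'' the other color at $z$, and after flipping, $c_{\f}(z)$ becomes present while $c$ becomes missing — so the reassignment is exactly the correction needed. For vertices $z\notin\{x,y\}$, or for u-fans with $c_{\f}(z)\notin\{\alpha,\beta\}$, nothing changes and the missing-color condition is untouched. Step 3: the second separability condition — distinctness of $C_{\U}(z)$ for each $z$. For $z$ not equal to an endpoint of any affected u-fan the multiset is unchanged. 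For $z\in\{x,y\}$: the \textbf{If} branch modifies $c_{\f}(z)$ for \emph{at most one} $\f$ (separability before the flip guarantees the colors $c_{\f}(z)$ over $\f\ni z$ are distinct, so at most one of them equals $\alpha$ and at most one equals $\beta$; but actually we need: at most one equals \emph{some} color in $\{\alpha,\beta\}$ — hmm, potentially one fan has $c_\f(z)=\alpha$ and another has $c_\f(z)=\beta$). I would handle this carefully: if two u-fans $\f_1,\f_2\ni z$ have $c_{\f_1}(z)=\alpha$, $c_{\f_2}(z)=\beta$, then the pseudocode's ``there exists $\f$'' picks only one of them, so only one gets relabeled. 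I would argue that in this two-fan situation, actually $z$ cannot be an endpoint of $P$ with a color of $P$ newly available — or more simply, re-examine whether the pseudocode handles this, and if the intended reading is that the loop body executes for the chosen $\f$ only, then after relabeling, $C_{\U}(z)$ still has distinct entries because we swapped within $\{\alpha,\beta\}$ and the other entries are distinct from both. Step 4: count damaged u-fans — the only u-fans whose data $c_{\f}(\cdot)$ changed are those incident to $x$ or $y$ and having a color in $\{\alpha,\beta\}$ at that vertex; by the separability-before argument this is at most one u-fan per endpoint, hence at most $2$ total, and ``damaged'' here means condition 3 of \Cref{def:u-fan} ($c_{\f}(v)=c_{\f}(w)$ with $c_{\f}(u)\ne c_{\f}(v)$) may fail — which is exactly what the definition of damaged u-fan allows.

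\textbf{Main obstacle.} The subtle point — and the step I expect to be the real obstacle — is Step 2/Step 3 at the endpoints: showing that after the reassignment $c_{\f}(z)\leftarrow c$, the new color $c$ really is missing at $z$, and that the multiset $C_{\U}(z)$ stays free of repeats even when two distinct u-fans at $z$ carry the two colors $\alpha$ and $\beta$. This requires carefully using the maximality of $P$ (which color of $\{\alpha,\beta\}$ is missing at each endpoint pre-flip versus post-flip) and the pre-existing separability (distinctness of assigned colors at $z$), and dovetailing these with the precise control flow of \Cref{alg:flip}. Since this lemma is quoted verbatim from \cite{ABBC2025} as an established result, I would either cite it directly or, if a self-contained proof is wanted, spell out this endpoint bookkeeping in full; everything else (edge-disjointness, untouched vertices, the damage count) is immediate from the structure of the algorithm.
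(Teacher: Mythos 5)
This lemma is quoted as a black box from \cite{ABBC2025} (their Lemma 5.4); the present paper supplies no proof of its own, so there is nothing internal to compare your attempt against. Your proposed argument is, however, sound, and the one step you flagged as a potential obstacle resolves cleanly. Edge-disjointness is trivially preserved, the missing-color sets change only at the two endpoints $x,y$ of $P$ (internal vertices of an alternating path see both $\alpha$ and $\beta$ before and after the flip), only u-fans assigning a color in $\{\alpha,\beta\}$ to $x$ or $y$ can be touched, and the pseudocode reassigns exactly those.

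The ``two-fan'' worry you raised in Step 3 — two u-fans $\f_1,\f_2\ni z$ with $c_{\f_1}(z)=\alpha$ and $c_{\f_2}(z)=\beta$ at an endpoint $z$ of $P$ — cannot occur for a non-trivial path $P$. It would force $\alpha,\beta\in\miss_\chi(z)$ simultaneously before the flip, but an endpoint of a non-trivial maximal $\{\alpha,\beta\}$-alternating path is by definition incident to exactly one edge of $P$, and that edge carries one of $\alpha,\beta$, so exactly one of those colors (not both) is missing at $z$. Consequently, at each endpoint at most one u-fan has an assigned color in $\{\alpha,\beta\}$, which simultaneously resolves your Step~3 concern, shows the reassigned color is genuinely missing after the flip (the endpoint's missing set swaps $\alpha\leftrightarrow\beta$), and gives the bound of at most two modified (hence at most two damaged) u-fans. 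The only degenerate case left open by the pseudocode is a zero-length path $P$, in which case the for-loop could needlessly relabel a u-fan and your worry about $C_{\U}(z)$ could resurface; this never happens in how $\Flip$ is invoked, but a fully self-contained statement would either exclude zero-length $P$ or guard the if-branch.
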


\subsection{Data Structures}\label{sec:data struc overview}

In \Cref{sec:data structs}, we describe the data structures that we use to implement our algorithm. Our data structures are simple and almost identical to the data structures of \cite{ABBC2025}, with the modification that we use balanced binary trees instead of hashmaps to make our data structures deterministic, incurring an extra $O(\log \mu)$ factor overhead in the running time of each operation, where $\mu$ is the number of colors in the edge coloring $\chi$.

On top of the standard data structures used to maintain the $\mu$-coloring $\chi$ with colors $C$, we also use data structures that allow us to efficiently maintain a separable collection $\U$. More specifically, the data structures that we use to implement a separable collection $\U$ support the following queries.
\begin{itemize}
    \item $\textsc{Insert}_{\U}(\f)$:  The input to this query is a u-fan $\f$. In response, the data structure adds $\f$ to $\U$ if $\U \cup \{\f\}$ is separable and outputs $\texttt{fail}$ otherwise.
    \item $\textsc{Delete}_{\U}(\f)$: The input to this query is a u-fan $\f$. In response, the data structure removes $\f$ from $\U$ if $\f \in \U$ and outputs $\texttt{fail}$ otherwise.
    \item $\textsc{Find-U-Fan}_{\U}(x,c)$: The input to this query is a vertex $x \in V$ and a color $c \in C$. In response, the data structure returns the u-fan $\f \in \U$ with $c_{\f}(x) = c$ if such a u-fan exists and outputs $\texttt{fail}$ otherwise.
    \item  $\textsc{Missing-Color}_{\U}(x)$: The input to this query is a vertex $x \in V$. In response, the data structure returns an arbitrary color from the set $\miss_\chi(x) \setminus C_{\U}(x)$.\footnote{Note that, since $|C_{\U}(x)| < |\miss_\chi(x)|$, such a color always exists.}
\end{itemize}
The following claim shows that it is always possible to answer a $\textsc{Missing-Color}$ query.

\begin{claim}\label{claim:missing color}
    For each $x \in V$, the set $\miss_\chi(x) \setminus C_{\U}(x)$ is non-empty.
\end{claim}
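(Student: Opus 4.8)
The plan is to prove the contrapositive-style bound directly: I will show that for every $x \in V$, the multi-set $C_{\U}(x)$ has strictly fewer elements (counting multiplicity) than $\miss_\chi(x)$, which immediately gives that $\miss_\chi(x) \setminus C_{\U}(x)$ is non-empty. The key observation is that each u-fan $\f \in \U$ with $x \in \f$ must \emph{use up} at least one uncolored edge incident on $x$: indeed, by \Cref{def:u-fan}, if $x$ is the center of $\f$ then both $(x, v)$ and $(x, w)$ are uncolored edges incident on $x$, and if $x$ is a leaf of $\f$ then the edge joining $x$ to the center of $\f$ is uncolored and incident on $x$. Since the u-fans in a separable collection are edge-disjoint (first property of separability), distinct u-fans in $\U$ that contain $x$ are charged to distinct uncolored edges incident on $x$.

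First I would fix $x \in V$ and let $d_\chi(x)$ denote the number of \emph{colored} edges incident on $x$ under $\chi$, and let $u_\chi(x)$ denote the number of uncolored edges incident on $x$. Then $\deg_G(x) = d_\chi(x) + u_\chi(x) \le \Delta$, and since exactly $d_\chi(x)$ distinct colors from $C$ appear on edges at $x$ while $|C| = \mu \ge \Delta$ (here $\mu = \Delta+1$ in the setting of the supporting data structures, but the argument only needs $|C| \ge \deg_G(x)$), we get $|\miss_\chi(x)| = |C| - d_\chi(x) \ge \deg_G(x) - d_\chi(x) = u_\chi(x)$, with strict inequality whenever $|C| > \deg_G(x)$; more robustly, $|\miss_\chi(x)| \ge u_\chi(x) + (|C| - \deg_G(x)) \ge u_\chi(x)$, and in fact $|\miss_\chi(x)| \ge u_\chi(x) + 1$ holds as long as $|C| \ge \deg_G(x) + 1$.

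Next I would bound $|C_{\U}(x)|$, the size of the multi-set $\{c_{\f}(x) : \f \in \U,\, x \in \f\}$. Its cardinality equals the number of u-fans $\f \in \U$ with $x \in \f$. By the charging argument above — each such $\f$ is assigned an uncolored edge incident on $x$, and edge-disjointness of the collection makes this assignment injective — we obtain $|C_{\U}(x)| \le u_\chi(x)$. Combining with the previous paragraph, $|C_{\U}(x)| \le u_\chi(x) \le |\miss_\chi(x)|$, and using $|C| = \mu \ge \Delta + 1 \ge \deg_G(x) + 1$ we actually get $|C_{\U}(x)| \le u_\chi(x) < |\miss_\chi(x)|$. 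Since $C_{\U}(x)$ is a sub-multi-set of colors drawn from $\miss_\chi(x)$ (by the second property of separability, these colors are in fact distinct, and by \Cref{def:u-fan} each $c_{\f}(x) \in \miss_\chi(x)$), strict inequality in cardinality forces $\miss_\chi(x) \setminus C_{\U}(x) \neq \emptyset$, completing the proof.

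The main obstacle I anticipate is purely bookkeeping: making the charging map from \{u-fans at $x$\} to \{uncolored edges at $x$\} genuinely injective when $x$ is the center of a u-fan, since then $\f$ has \emph{two} candidate uncolored edges $(x,v)$ and $(x,w)$ — I just need to pick one consistently — and checking that no uncolored edge at $x$ is claimed by two different u-fans, which is exactly the edge-disjointness clause of separability. I should also double-check the degenerate case where $x$ has no missing colors: this cannot happen because $|C| = \mu \ge \Delta+1 > \deg_G(x)$ forces $|\miss_\chi(x)| \ge 1$ regardless, so the statement is vacuously fine there, but it is worth a sentence. The footnote in the excerpt already asserts $|C_{\U}(x)| < |\miss_\chi(x)|$, so the intended proof is surely this cardinality comparison; I would present it cleanly in the three steps above.
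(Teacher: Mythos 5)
Your proof is correct and takes the same approach as the paper: bound $|C_{\U}(x)|$ by the number of uncolored edges at $x$ (via edge-disjointness of the u-fans in a separable collection), note that $|\miss_\chi(x)|$ strictly exceeds this count, and conclude. The paper's version is just more terse---it states $|C_{\U}(x)| \le d$ and $|\miss_\chi(x)| \ge d+1$ directly, where $d$ is the number of uncolored edges at $x$, leaving the injective charging argument and the palette-size accounting implicit, whereas you spell both out.
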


\begin{proof}
    Let $d$ denote the number of uncolored edges incident on $x$.
    Since the collection $\U$ is separable, we have that $|C_{\U}(x)| \leq d$. Since $|\miss_\chi(x)| \geq d + 1$, it follows that $|C_{\U}(x)| < |\miss_\chi(x)|$ and so $\miss_\chi(x) \setminus C_{\U}(x) \neq \emptyset$.
\end{proof}

\noindent
Furthermore, the data structure supports the following initialization operation.
\begin{itemize}
    \item $\textsc{Initialize}(G, \chi)$: Given a graph $G$ and an edge coloring $\chi$ of $G$, we can initialize the data structure with an empty separable collection $\U = \emptyset$.
\end{itemize}
In \Cref{sec:data structs}, we show how to implement the initialization operation in $O(m \log \mu)$ time and each of these queries in $O(\log \mu)$ time with the appropriate data structures.
\textbf{These queries provide the `interface' via which our algorithm will interact with the u-components.}

\section{The Main Algorithm (Proof of \Cref{thm:main:1})}\label{sec:main}

Our final algorithm can be easily described using three different deterministic subroutines as black boxes, which will be presented in subsequent sections. In this section, we state lemmas that summarize the behavior of these subroutines and show how to use them to prove \Cref{thm:main:1}.

The first of these subroutines and the main technical component of our algorithm is a subroutine called $\Amplify$, which is described in \Cref{sec:proof key}. The following lemma
(proved in \Cref{sec:proof key})
summarizes the behavior of $\Amplify$.

\begin{restatable}{lemma}{sparsifytypes}\label{lem:key}
    Given a graph $G$, a partial $\mu$-coloring $\chi$ of $G$ with colors $C$, a separable collection $\U$ of size $\lambda$, and an integer $10 \leq \eta \leq \mu/ 10$, the algorithm $\Amplify$ modifies the coloring $\chi$ (without changing which edges are colored) and the separable collection $\U$, and constructs disjoint subsets of colors $\C_{1},\dots,\C_\eta \subseteq C$ with the following properties:
    \begin{enumerate}
        \item For all $k \in [\eta]$, we have that $|{\C_k}| \leq \mu / \eta$.
        \item The u-fans in $\U$ have types in $\bigcup_{k=1}^\eta (\C_k \times \C_k)$ and $|{\U}| \geq \lambda/100$.
    \end{enumerate}
    Furthermore, the algorithm is deterministic and runs in time $O(m \eta^4 \log \mu)$.
\end{restatable}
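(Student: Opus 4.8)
The plan is to adapt the bipartite framework of \Cref{sec:deterministic:bipartite} to the general-graph setting, replacing uncolored edges with u-fans and alternating-path flips with calls to $\Flip$. I would first set up the notation mirroring the bipartite case: partition the palette $C$ into $\eta$ blocks $\C_1,\dots,\C_\eta$ of size $\le |C|/\eta$, further split each $\C_k$ into two halves $C_{2k-1},C_{2k}$, define the bijections $\phi_{i\to j}$ between equal-size color blocks, and define \emph{uniform}, \emph{aligned}, \emph{social}, and \emph{$j$-relevant} types exactly as before. The key conceptual point is that for a u-fan $\f$ with center $u$ and type $\tau(\f)=\{\alpha,\beta\}$, the relevant object to reshape is $\tau(\f)$: I want to change the missing colors that $\U$ assigns (via $\Flip$) so that a constant fraction of the u-fans in $\U$ end up with social types, while keeping $\U$ separable. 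Since flipping an alternating path through $\Flip$ damages at most $2$ u-fans (\Cref{alg:flip} and the lemma following it), the damage bookkeeping is essentially the same as in the bipartite analysis.

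The algorithm proceeds in $\eta$ rounds; in round $k$ we only flip $(2k-1)$- and $(2k)$-relevant alternating paths, and we maintain the invariant (analogue of \Cref{inv:round:bipartite}) that after round $k$ there are disjoint subsets $\U_1,\dots,\U_k\subseteq\U$ with $|\U_{k'}|=\Omega(\lambda/\eta)$ and every $\f\in\U_{k'}$ of type $C_{2k'-1}\times C_{2k'}$. Implementing a round deterministically uses the three insights from \Cref{sec:deterministic:bipartite}: (i) at any stage either $\Omega(\lambda)$ u-fans already have uniform types (and we stop), or there is a \emph{good batch} $B$ of $\Omega(\lambda/\eta^2)$ u-fans all of the same non-aligned type $C_i\times C_j$, whose characteristic alternating paths can all be flipped simultaneously without damaging any u-fan in the current $\U_1\cup\dots\cup\U_k$ (the path-type-disjointness arguments of \Cref{cl:batch:flip:bipartite} and \Cref{cl:batch:flip:bipartite:101} carry over verbatim, since they only use the block structure of $\phi$); (ii) the set of good u-fans can be found by traversing, for each u-fan, the two characteristic alternating paths emanating from its center-vertex's ``left'' and ``right'' missing colors; (iii) the total length of all such paths is $O(m\cdot\poly(\eta))$ because there are $O(\Delta\eta)$ relevant types (\Cref{obs:num:relevant:types}) and each type's maximal alternating paths have total length $O(n)$ — here I must be careful that $\Delta$ need not satisfy \Cref{assume:regular}, so the bound is $O((m+\Delta\eta)\cdot\poly(\eta))$ or, after the standard reduction, $O(m\cdot\poly(\eta))$. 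Repeating the batch step $O(\eta)$ times completes a round; $\eta$ rounds give the result, and the u-fans with non-social types get dropped so that the final $\U$ (of size $\ge\lambda/100$) has types in $\bigcup_k(\C_k\times\C_k)$, giving property (2), while property (1) is immediate from the partition.

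For the running time I would argue as follows: each round performs $O(\eta)$ batch-extraction steps, each of which (a) computes the good u-fans in $O(m\cdot\poly(\eta)\log\mu)$ time via path traversals using the data structures of \Cref{sec:data struc overview}, (b) scans the $O(\eta^2)$ possible batch types to pick the largest good batch, and (c) flips all characteristic paths of that batch, which costs time proportional to their total length, again $O(m\cdot\poly(\eta)\log\mu)$, and updates $\U$ via $O(1)$ $\Flip$ calls per path plus $\textsc{Insert}_{\U}/\textsc{Delete}_{\U}$ queries. Summing over $O(\eta)$ steps per round and $\eta$ rounds yields $O(m\cdot\poly(\eta)\log\mu)$; tracking the exponents carefully (one $\eta$ from rounds, one $\eta$ from batch-extraction repetitions, and $\eta^2$ from the number of relevant types / batch types) gives the claimed $O(m\eta^4\log\mu)$.

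The main obstacle I anticipate is the \emph{separability maintenance}: in the bipartite overview, ``damaging'' an uncolored edge $(u',v')\in U_{k'}$ simply meant it stopped being of type $C_{2k'-1}\times C_{2k'}$, but here each u-fan $\f$ assigns missing colors to three vertices $u,v,w$, and flipping a path incident to any of $v,w,u$ can alter $c_\f(\cdot)$ and also the multiset $C_\U(x)$ at a shared vertex $x$, so I must verify that the characteristic-path bookkeeping still bounds the number of damaged u-fans per already-placed u-fan by $O(\eta)$ \emph{and} that simultaneous flips of a batch's characteristic paths keep $\U$ separable. Concretely, I would define the characteristic paths of $\f\in\U$ to start from the center $u$ (or from the leaves) along the $\phi$-images of $c_\f(u)$, reprove the ``$\tau=\tau'$ or $\tau\cap\tau'=\emptyset$'' dichotomy for these paths, and then invoke the fact that $\Flip$ is order-independent on paths with pairwise equal-or-disjoint color types and that $B$'s u-fans are edge-disjoint (from separability of $\U$), so the simultaneous flip is well-defined and damages no u-fan outside $B$. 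A secondary nuisance is handling the degenerate cases $\alpha=\beta$ or $\{i,j\}\cap\{2k-1,2k\}\ne\emptyset$, which in the bipartite overview are dismissed as ``easier''; in the general setting these still need a short separate treatment, but they do not introduce new ideas.
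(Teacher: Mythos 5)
Your approach is correct in spirit but takes the round-based route sketched in the technical overview (\Cref{sec:algo:describe:bipartite}, \Cref{sec:deterministic:bipartite}), whereas the paper's actual $\Amplify$ (\Cref{alg:amplify}) is a flat greedy loop with no rounds. In the paper, each iteration adaptively picks $k^\star = \arg\min_k |\hat{\U}_k|$ (over all $\eta$ indices, not a fixed round index) and then picks the single largest good batch $\U_{i^\star,i'^\star}\setminus\mathcal B_{k^\star}$ to align to $C_{2k^\star-1}\times C_{2k^\star}$. Choosing $k^\star$ as an argmin gives $|\hat{\U}_{k^\star}|\le|\hat{\U}|/\eta\le\lambda/(100\eta)$ for free, which feeds directly into the damage bound $|\mathcal B_{k^\star}|\le 7|\hat\U|+6\eta|\hat\U_{k^\star}|\le\lambda/4$ (\Cref{claim:bad count}), and the whole correctness argument is then just \Cref{lem:augment good:2}, \Cref{claim:U>=1/2}, \Cref{lem:find good}: each iteration grows $|\hat\U|$ by $\ge\lambda/(4\eta^2)$, so $O(\eta^2)$ iterations suffice. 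Your round-based version needs to maintain the extra per-round invariant $|\U_{k'}|=\Theta(\lambda/\eta)$ to control $|\hat\U_k|$ during round $k$, which works (this is exactly \Cref{inv:round:bipartite}) but is a bit more bookkeeping than the argmin trick. The per-iteration cost $O(m\eta^2\log\mu)$ and the $O(\eta^2)$ iteration count are the same in both, so you arrive at the same $O(m\eta^4\log\mu)$ bound.

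There is one genuine omission. When $\mu$ is not a multiple of $2\eta$, the blocks $C_1,\dots,C_{2\eta}$ of size $r=\lfloor\mu/2\eta\rfloor$ only cover $q=2\eta r$ colors, leaving up to $2\eta-1$ leftover colors that participate in no $C_i$ and no $\phi_{i\to j}$. A u-fan whose type involves a leftover color can never be made social by the algorithm, and in the worst case \emph{every} u-fan could have such a type (e.g.\ one particular color missing at many vertices), in which case the algorithm returns $\U=\emptyset$. The paper handles this with a preprocessing relabeling (\Cref{lem:ignore last colors}): order colors by how many u-fans carry them in their type, assign the least-used $\mu-q<2\eta$ colors to be the leftovers, and conclude that at most $2\lambda/5$ u-fans have types outside $[q]\times[q]$, which is why the final bound is $\lambda/100$ rather than a cleaner constant. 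Without this step your argument cannot guarantee $|\U|\ge\lambda/100$ at termination. Otherwise your handling of the u-fan adaptation (three-path characteristic paths per u-fan, $O(\eta)$ damage per placed social u-fan, simultaneous flipping of edge-disjoint path systems via \Cref{cl:disjoint paths}, and restoring separability via $\Flip$) matches the paper.
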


The second subroutine, described in \Cref{app:small},
is called $\Small$, and we use it to extend the coloring to sufficiently small subgraphs. The following lemma (proved in \Cref{app:small})
 summarizes the behavior of $\Small$.

\begin{restatable}{lemma}{colorsmall}
\label{lem:Sinnamon}
    Given a graph $G$, a partial $\mu$-coloring $\chi$ with colors $C$, and a separable collection $\U$ of size $\lambda$, the algorithm $\Small$ extends the coloring $\chi$ to at least $\lambda/100$ uncolored edges. Furthermore, the algorithm is deterministic and runs in $O(m \mu^2 \log \mu)$ time.
\end{restatable}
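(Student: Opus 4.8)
\textbf{Proof proposal for \Cref{lem:Sinnamon}.}

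The plan is to reduce the problem to the classical $O(m\sqrt{\lambda})$-type subroutine that underlies the Gabow et al.\ / Sinnamon analysis, but applied at the scale of a graph whose color palette has size $\mu = |C|$ rather than $\Delta + 1$. First I would observe that a separable collection $\U$ of size $\lambda$ is exactly the kind of object a u-fan based coloring algorithm wants as input: each u-fan $\f \in \U$ certifies an uncolored edge at its center $u$ together with available colors at $u$ and its leaves, and, by the key property of u-fans recalled in \Cref{sec:u-fans def}, each such u-fan can be activated --- i.e.\ used to color one of its uncolored edges --- by flipping a single maximal alternating path of the type $\tau(\f)$. Because $\U$ is separable, the u-fans are edge-disjoint and the colors they assign at any shared vertex are distinct, which is precisely what guarantees that activating one u-fan does not destroy the structural validity of the others (it may damage at most $2$ of them per path flip, by the $\Flip$ lemma, but we will not need $\U$ intact afterwards --- we only need to color $\lambda/100$ edges).

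The core of the argument is then the standard ``short augmenting path'' counting. Consider the $\lambda$ uncolored edges witnessed by the u-fans in $\U$. I would partition an execution into phases: in each phase, first color all u-fans whose activating alternating path is ``short'' (length at most some threshold $t$), and then argue via the usual potential / length-sum bound that the number of remaining u-fans with long activating paths is small. Concretely, the total length of all maximal alternating paths of a fixed type is at most $n$, and here there are $O(\mu^2)$ types over the palette $C$; more to the point, the sum over the (at most $\lambda$) u-fans of the lengths of their activating paths is $O(m + \text{something})$ type bound, so at most a $1/2$-fraction of the still-uncolored u-fans can have activating path longer than $2 \cdot (\text{avg})$. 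Repeating, after $O(\log \lambda)$ phases --- or, with the Gabow-style $\sqrt{\cdot}$ trade-off, after fewer --- we have colored at least $\lambda/100$ edges. The ``small'' in the name and the $\mu^2$ in the running time suggest the intended proof is cruder than this: simply loop over all u-fans, and for each one traverse and flip its activating alternating path, which costs $O(|P| \log \mu)$; since a path has length $O(m)$ and there are $O(\lambda) = O(m)$ of them this is already $O(m^2\log\mu)$, and a more careful accounting --- bounding path lengths by the number of edges of the two relevant colors, of which there are $O(m/\mu)$ each on average, times a union over colors --- collapses this to $O(m\mu^2\log\mu)$. I would present whichever of these bookkeeping schemes cleanly yields the stated bound, and I expect it is the following: for each color pair one can charge the work to edges of those colors, giving $\sum_{\text{pairs}} O((\text{edges of those two colors})\log\mu)$, and bounding the number of pairs touched by $O(\mu^2)$ and the per-pair edge count by $O(m)$ gives $O(m\mu^2\log\mu)$.

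The two things that need care are (1) ensuring that when we activate one u-fan, the alternating-path flip does not re-color an edge we have already successfully colored in this subroutine, and (2) maintaining the separability / data-structure invariants so that each path traversal and the bookkeeping updates cost only $O(\log\mu)$ per edge (via the $\Flip$ subroutine and the balanced-binary-tree data structures of \Cref{sec:data struc overview}). For (1), the standard fix is that a maximal alternating path of type $\{\alpha,\beta\}$ only touches edges colored $\alpha$ or $\beta$, and we process u-fans so that already-colored edges are ``frozen'' --- or, more simply, we note that activating a u-fan colors exactly one previously-uncolored edge and recolors only previously-colored edges along one path, and since each flip damages at most $2$ other u-fans (by the $\Flip$ lemma) we lose at most a constant fraction of $\U$ to collateral damage, leaving $\Omega(\lambda)$ u-fans still activatable; choosing constants so that $\lambda/100$ edges survive is routine.

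\textbf{Main obstacle.} I expect the genuinely delicate part to be the running-time accounting rather than correctness: one must show the total length of all the alternating paths flipped (across all $\Theta(\lambda)$ activations) is $O(m\mu^2)$, i.e.\ that no single color pair is ``over-visited.'' Correctness (that we color $\geq \lambda/100$ edges while keeping $\chi$ a valid partial $\mu$-coloring) follows straightforwardly from the key property of u-fans, separability, and the $\Flip$ lemma; the worst case $O(m\mu^2\log\mu)$ bound is deliberately loose, so the real work is just organizing the loop over u-fans and invoking the path-traversal data structures with the correct $O(\log\mu)$-per-edge cost.
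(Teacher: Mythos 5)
Your proposal circles around the right ingredients (activating u-fans by flipping one alternating path of type $\tau(\f)$, bounding collateral damage via the $\Flip$ lemma, and charging work to color pairs), but you entertain several alternative bookkeeping schemes without committing to one, and the scheme you lean toward -- ``simply loop over all u-fans, one by one, and flip each activating path'' -- has a real gap. After you flip a path of type $\{\alpha,\gamma\}$, the $\alpha$-edges change, so the alternating path of some later u-fan of type $\{\alpha,\beta\}$ can become longer (or a u-fan's type itself can silently change). Because the lengths you want to charge to the pair $\{\alpha,\beta\}$ depend on the coloring \emph{at the time you process that u-fan}, interleaving types makes the naive ``each pair pays at most $O(m)$'' charging unsound. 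Likewise, the $(1-1/\mu^2)^{\mu^2}$-style shrinkage argument requires that the per-iteration progress be a \emph{fraction} of the current $|\U|$, which you do not set up.

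The paper's proof is cleaner and avoids these issues by design: in each of $\mu^2$ iterations, it finds the \emph{most common} type $\tau^\star$ in $\U$ (by averaging, at least $|\U|/\mu^2$ u-fans share it), and batch-activates \emph{all} of them, flipping only $\{\alpha,\beta\}$-alternating paths for the single type $\tau^\star = \{\alpha,\beta\}$. Since alternating paths of one fixed type are vertex-disjoint, the total flip length per iteration is $O(n)$, so each iteration costs $O(m\log\mu)$ (for scanning/counting) plus $O(n\log\mu)$ (for the flips), giving $O(m\mu^2\log\mu)$ total. After $\mu^2$ iterations, $|\U| \leq \lambda(1 - 1/\mu^2)^{\mu^2} \leq \lambda/2$. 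Then a separate counting step, using that each activation damages at most \emph{one other} u-fan (the path starts at a vertex of the u-fan being activated, so one endpoint's damage is self-damage), shows at least $\lambda/4$ of the removals are genuine colorings. You cite ``at most $2$ other u-fans'' and hand-wave the constant; the constant you get is still $\geq \lambda/100$, so that part is not fatal, but the batching-by-type step and the resulting exponential-decay / vertex-disjointness argument are the ideas your sketch is missing.
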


Using the algorithms $\Amplify$ and $\Small$, we construct a \emph{recursive} algorithm $\Extend$, described in \Cref{sec:extend}, which takes a partial coloring as input and efficiently extends the coloring to a large proportion of the uncolored edges. The following lemma (proved in \Cref{sec:extend})
summarizes the behavior of $\Extend$.

\begin{restatable}{lemma}{extendcolor}\label{lem:extend}
    Given a graph $G$, a partial $\mu$-coloring $\chi$ of $G$ with colors $C$, a separable collection $\U$ of size $\lambda$, and an integer $\eta \geq 10$, the algorithm $\Extend$ extends the coloring $\chi$ to at least $\lambda / 100^{(\log \mu / \log \eta) + 1}$ uncolored edges.
    Furthermore, the algorithm is deterministic and runs in time $O(m \eta^4 \log^2 (\mu) / \log (\eta))$. 
\end{restatable}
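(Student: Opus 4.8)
The plan is to define $\Extend$ recursively on the palette size $\mu$, using $\Amplify$ for the ``divide'' step and $\Small$ as the base case, and then analyze the recursion tree. First I would handle the base case: if $\mu \leq \eta$ (or more generally $\mu = O(\eta)$, say $\mu \le C_0 \eta$ for the absolute constant implicit in the hypothesis $\eta \le \mu/10$ failing), we simply invoke $\Small$, which by \Cref{lem:Sinnamon} extends the coloring to at least $\lambda/100$ uncolored edges in $O(m\mu^2\log\mu) = O(m\eta^2\log\eta)$ time; this comfortably beats the claimed bound. For the recursive step, assume $\eta \le \mu/10$. We first call the subroutine of \Cref{lem:build u-fans} on the set $U$ of $\lambda$ uncolored edges; if it extends the coloring to $\Omega(\lambda)$ of them we are immediately done, so assume instead it returns a separable collection $\U$ of $\Omega(\lambda)$ u-fans in $O((m+\Delta\lambda)\log\Delta)$ time. (Note $\Delta\lambda \le \Delta m/\Delta \cdot \mu \le m\mu$ roughly, or one can simply fold this into the stated running time; I would double-check that $\Delta\lambda = O(m\eta^4\log\mu)$ holds, which it does since $\lambda \le m$ and each u-fan has bounded size, making the collection size at most $O(m)$.) Actually, re-reading the lemma statement, $\U$ is given as input already, so this step is unnecessary here — $\Extend$ receives the separable collection directly.

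Next I would apply $\Amplify$ to $(G,\chi,\U,\eta)$. By \Cref{lem:key}, in $O(m\eta^4\log\mu)$ time this modifies $\chi$ and $\U$ so that $|\U| \ge \lambda/100$ and the u-fans of $\U$ all have types in $\bigcup_{k=1}^\eta(\C_k\times\C_k)$ for disjoint color sets $\C_k$ with $|\C_k|\le\mu/\eta$. For each $k\in[\eta]$, let $\U_k := \{\f\in\U : \tau(\f)\in\C_k\times\C_k\}$ and let $\lambda_k = |\U_k|$; these partition $\U$, so $\sum_k \lambda_k \ge \lambda/100$. Define the subgraph $G_k$ consisting of the edges of the u-fans in $\U_k$ together with all edges colored from $\C_k$, with the restricted coloring $\chi_k : E(G_k)\to\C_k\cup\{\bot\}$ and the restricted collection $\U_k$ (which is separable in $G_k$ since $\U$ was separable in $G$). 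Because the palettes $\C_k$ are pairwise disjoint and each u-fan's edges lie entirely within one $G_k$, these instances are edge-disjoint in a color-compatible way: flipping an alternating path or activating a u-fan inside $G_k$ never affects any other $G_{k'}$. Now recurse: call $\Extend$ on $(G_k,\chi_k,\U_k,\eta)$ for each $k$. Each recursive call has palette size $|\C_k|\le\mu/\eta$, so the recursion depth is at most $\log\mu/\log\eta$. By induction, the $k$-th recursive call extends $\chi_k$ to at least $\lambda_k/100^{(\log(\mu/\eta)/\log\eta)+1} = \lambda_k/100^{\log\mu/\log\eta}$ uncolored edges. Summing over $k$, the total number of newly colored edges is at least $(\sum_k\lambda_k)/100^{\log\mu/\log\eta} \ge (\lambda/100)/100^{\log\mu/\log\eta} = \lambda/100^{(\log\mu/\log\eta)+1}$, as required. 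Since these edits happen in edge-disjoint, palette-disjoint subgraphs, they can all be applied to the global coloring $\chi$ without interference.

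For the running time: let $T(\mu)$ denote the time on an instance with palette size $\mu$ and total ``size'' $m$ (where $m$ counts the relevant edges and u-fan edges). The $\Amplify$ call costs $O(m\eta^4\log\mu)$. The recursive calls on $G_1,\dots,G_\eta$ have total edge count $O(m)$ (each edge of $G$ appears in at most one $G_k$, since the $\C_k$ are disjoint and u-fans are edge-disjoint), so their combined cost is at most the cost of one level of recursion applied to total size $O(m)$ with palette size $\mu/\eta$. Unrolling, we get $T(\mu) = O(m\eta^4\log\mu) \cdot (\text{number of levels}) = O(m\eta^4\log\mu)\cdot O(\log\mu/\log\eta) = O(m\eta^4\log^2\mu/\log\eta)$. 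I would need to be slightly careful that $\log$ of the shrinking palette at deeper levels is still bounded by $\log\mu$ and that the base-case cost $O(m\eta^2\log\eta)$ per base-level node sums to $O(m\eta^2\log\eta) \le O(m\eta^4\log^2\mu/\log\eta)$, which is immediate.

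The main obstacle I expect is \textbf{bookkeeping the edge-disjointness and separability across the recursive decomposition}: I must verify that restricting $\U$ to $\U_k$ yields a genuinely separable collection in $G_k$ (it does, since both separability conditions are inherited by subsets and subgraphs), that an edge colored from $\C_k$ really does belong to exactly one $G_k$ (true because the $\C_k$ are disjoint), that uncolored edges belonging to u-fans in $\U_k$ live in $G_k$ (true by construction of $G_k$), and — most delicately — that the color modifications performed by the recursive call on $G_k$, which may flip alternating paths and recolor edges, only ever touch edges with colors in $\C_k$ or u-fan edges assigned to $\U_k$, hence never collide with the work done on $G_{k'}$ for $k'\ne k$. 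This last point relies on the fact that a maximal $\{\alpha,\beta\}$-alternating path with $\alpha,\beta\in\C_k$ stays within edges colored by $\C_k$ (plus at most the uncolored endpoint edges of u-fans, which are also local), so the invariant is maintained. Once these structural invariants are nailed down, the counting and timing arguments are routine.
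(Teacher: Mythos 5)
Your proposal is correct and follows essentially the same route as the paper: base case $\mu = O(\eta)$ handled by $\Small$, recursive case using $\Amplify$ to split into $\eta$ edge-disjoint, palette-disjoint subinstances $(G_k, \chi_k, \U_k)$, and induction on the shrinking palette size to get both the coverage bound (summing $\lambda_k/100^{\log\mu/\log\eta}$ over $k$ and using $\sum_k \lambda_k \ge \lambda/100$) and the running time (per-level work is $O(m\eta^4\log\mu)$ since the subgraphs partition $G$'s edges, times $O(\log\mu/\log\eta)$ levels). The paper packages the induction in terms of a clean integer quantity $\Depth(\mu,\eta) = \max(\lceil \log_\eta(\mu/(10\eta)) \rceil, 0)$ rather than directly on $\log\mu/\log\eta$, and isolates the edge-disjointness/separability bookkeeping into a separate claim, but the content is the same; you also correctly flagged and discarded the unnecessary detour through \Cref{lem:build u-fans}.
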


Using \Cref{lem:extend}, we can now prove \Cref{thm:main:1}.

\subsection{Proof of \Cref{thm:main:1}}
\label{sec:forward:pointer}

We begin by proving the following corollary of \Cref{lem:extend}, which we use to efficiently extend a $(\Delta + 1)$-coloring of a graph $G$.

\begin{corollary}\label{cor:extend}
     Given a graph $G$ and a partial $(\Delta + 1)$-coloring $\chi$ of $G$ with $\lambda = O(m / \Delta)$ uncolored edges $U$, we can extend $\chi$ to the remaining uncolored edges in time $O(m \cdot 2^{13 \sqrt{\log_2 \Delta}}  \log n)$.
\end{corollary}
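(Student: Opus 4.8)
The plan is to iterate the algorithm $\Extend$ from \Cref{lem:extend} a bounded number of times, choosing the parameter $\eta$ to balance the two competing exponential factors in the running time: the $\eta^4$ blowup per invocation versus the $100^{\log\mu/\log\eta}$ loss in the fraction of uncolored edges that get colored. First I would feed the set $U$ of $\lambda = O(m/\Delta)$ uncolored edges into \Cref{lem:build u-fans}. This either directly extends the coloring to $\Omega(\lambda)$ edges (in which case we recurse on the smaller remaining set), or it modifies $\chi$ to produce a separable collection $\U$ of $\Omega(\lambda)$ u-fans in $O((m+\Delta\lambda)\log\Delta) = O(m\log\Delta)$ time, using $\lambda = O(m/\Delta)$. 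In the latter case we can hand $\U$ to $\Extend$.

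Next I would invoke $\Extend$ with $\mu = \Delta+1$ and with $\eta := 2^{\lceil\sqrt{\log_2\Delta}\rceil}$, so that $\log\eta \approx \sqrt{\log_2\Delta}$ and hence $\log\mu/\log\eta \approx \sqrt{\log_2\Delta}$. By \Cref{lem:extend}, this extends the coloring to at least a $100^{-(\log\mu/\log\eta)-1} \geq 2^{-\Theta(\sqrt{\log\Delta})}$ fraction of the $\Omega(\lambda)$ uncolored edges currently carried by $\U$, in time $O(m\eta^4\log^2\mu/\log\eta) = O(m\cdot 2^{4\sqrt{\log_2\Delta}}\cdot \log^{1.5}\Delta)$. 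The key point is that a single call colors a $2^{-\Theta(\sqrt{\log\Delta})}$ fraction of the uncolored edges; we then rebuild u-fans on the (geometrically shrinking) leftover uncolored set and repeat. Each pass reduces the number of uncolored edges by a factor $(1 - 2^{-\Theta(\sqrt{\log\Delta})})$, so $O(2^{\Theta(\sqrt{\log\Delta})}\cdot\log(m))$ — actually just $O(2^{\Theta(\sqrt{\log\Delta})}\cdot\log n)$ — passes suffice to exhaust all uncolored edges. (One subtlety: once the uncolored set becomes smaller than the initial $U$, we still have $\lambda \leq O(m/\Delta)$, so the $O(m\log\Delta)$ bound for \Cref{lem:build u-fans} continues to hold; and each pass's runtime is dominated by the $\Extend$ call, which does not shrink with $\lambda$, so we pay the full $O(m\cdot 2^{4\sqrt{\log_2\Delta}}\log^{1.5}\Delta)$ each pass.)

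Multiplying the per-pass cost by the number of passes gives a total running time of $O\!\left(m\cdot 2^{4\sqrt{\log_2\Delta}}\cdot\log^{1.5}\Delta\cdot 2^{\Theta(\sqrt{\log\Delta})}\cdot\log n\right)$. Absorbing the $\log^{1.5}\Delta = 2^{O(\log\log\Delta)}$ factor and all constants in the exponent into the $2^{\Theta(\sqrt{\log\Delta})}$ term — legitimate since $\log\log\Delta = o(\sqrt{\log\Delta})$ — and bounding all the accumulated constant multiples of $\sqrt{\log_2\Delta}$ in the exponent by $13\sqrt{\log_2\Delta}$ for $\Delta$ large enough (handling small $\Delta$ separately via the trivial $O(m\Delta)$ or $O(m\log\Delta)$ bounds), yields the claimed $O(m\cdot 2^{13\sqrt{\log_2\Delta}}\log n)$.

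The main obstacle I anticipate is the bookkeeping around the number of passes and the constant in the exponent. One must verify carefully that the geometric-progress argument works: each call to $\Extend$ colors a $p := 100^{-(\log_2\mu/\log_2\eta)-1}$ fraction of the \emph{currently} uncolored edges (not of the original $U$), so after $t$ passes the number of uncolored edges is at most $(1-p)^t\lambda \leq \lambda e^{-pt}$, and we need $t = O(p^{-1}\log\lambda) = O(2^{\Theta(\sqrt{\log\Delta})}\log n)$; this requires that $p$ stays bounded below across all passes, which holds because $\mu$ and $\eta$ are fixed throughout. The second delicate point is checking that the choice $\eta = 2^{\lceil\sqrt{\log_2\Delta}\rceil}$ satisfies the hypothesis $10 \leq \eta \leq \mu/10$ of \Cref{lem:extend} (true for $\Delta$ above a constant, and small $\Delta$ is dispatched separately), and that all the additive/multiplicative constants — the $4$ from $\eta^4$, the $\Theta(1)$ in the pass count, the $\log^{1.5}\Delta$ factor — combine to something genuinely below $13\sqrt{\log_2\Delta}$ in the exponent once $\Delta$ exceeds the relevant threshold. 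These are routine but must be done honestly; there is no conceptual difficulty beyond this balancing.
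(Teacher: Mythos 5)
Your proposal is correct and takes essentially the same approach as the paper: build u-fans via Lemma~\ref{lem:build u-fans}, invoke $\Extend$ with $\eta \approx 2^{\sqrt{\log_2\Delta}}$ (the paper uses $\eta = 10\cdot\lceil 2^{\sqrt{\log_2\Delta}}\rceil$, which has the minor advantage of automatically satisfying $\eta \ge 10$ without a separate small-$\Delta$ case), and iterate geometrically on the shrinking uncolored set. Your accounting of the per-pass cost, pass count, and absorption of polylog factors matches the paper's calculation.
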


\begin{proof}
    Suppose that we have a partial $(\Delta + 1)$-coloring $\chi$ with $\lambda$ uncolored edges. Applying \Cref{lem:build u-fans}, we either extend the coloring $\chi$ to $\Omega(\lambda)$ uncolored edges or modify $\chi$ to obtain a separable collection of $\Omega(\lambda)$ u-fans $\U$ in $O(m)$ time. In the latter case, we can then apply \Cref{lem:extend} to the coloring $\chi$ and the separable collection $\U$ with $\eta = 10 \cdot \lceil 2^{\sqrt{\log_2 \Delta}} \rceil$ to extend the coloring $\chi$ to an $\Omega (1 / 100^{\log \Delta / \log \eta + 1}) \geq \Omega (1 / 100^{\sqrt{\log_2 \Delta}})$ proportion of the uncolored edges.
    We can repeat this $O(100^{\sqrt{\log_2 \Delta}} \log \lambda)$ times to extend the coloring $\chi$ to all of the edges in $U$.
    Each iteration of this process can be implemented in $O(m \cdot 2^{4 \sqrt{\log_2 \Delta}} \log^2 \Delta) \leq O(m \cdot 2^{6 \sqrt{\log_2 \Delta}})$ time, since we repeat this process for $O(100^{\sqrt{\log_2 \Delta}} \log \lambda) \leq O(2^{7\sqrt{\log_2 \Delta}} \log n)$ iterations, this takes $O(m \cdot 2^{13\sqrt{\log_2 \Delta}} \log n)$ time in total.
\end{proof}

We prove \Cref{thm:main:1} by applying \Cref{cor:extend} to the standard Euler partition framework \cite{gabow1985algorithms, sinnamon2019fast}. Given a graph $G$, we partition it into two edge-disjoint subgraphs $G_1$ and $G_2$ on the same vertex set such that $\Delta(G_i) \leq \lceil \Delta/2 \rceil$ for each $G_i$, where $\Delta(G_i)$ denotes the maximum degree of $G_i$. We then recursively compute a $(\Delta(G_i) + 1)$-coloring $\chi_i$ for each $G_i$. Combining $\chi_1$ and $\chi_2$, we obtain a $(\Delta + 3)$-coloring $\chi$ of $G$. We then uncolor the two smallest color classes in $\chi$, which contain $O(m / \Delta)$ edges, and apply \Cref{cor:extend} to recolor all of the uncolored edges in $\chi$ using only $\Delta + 1$ colors in $O(m \cdot 2^{13 \sqrt{\log_2 \Delta}}  \log n)$ time.

To show that the total running time of the algorithm is $O(m \cdot 2^{14 \sqrt{\log_2 \Delta}}  \log n)$,
first note that the depth of the recursion tree is $O(\log \Delta)$.
Next, consider the $i^{th}$ level of the recursion tree, for an arbitrary $i = O(\log \Delta)$: we have $2^i$ edge-disjoint subgraphs $G_1,\dots,G_{2^i}$ such that $\Delta(G_j) \leq O(\Delta/2^i)$ and $\sum_{j = 1}^{2^i} |E(G_j)| = m$.
Since the total running time at recursion level $i$ is 
$O(m \cdot 2^{13 \sqrt{\log_2 \Delta}} \log n)$
and the depth of the recursion tree is $O(\log \Delta)$, it follows that the total running time is $O(m \cdot 2^{13 \sqrt{\log_2 \Delta}} \log n \log \Delta) \leq O(m \cdot 2^{14 \sqrt{\log_2 \Delta}} \log n)$.

\section{The Algorithm $\Amplify$ (Proof of \Cref{lem:key})}\label{sec:proof key}

In this section, we describe and analyze the subroutine $\Amplify$, proving \Cref{lem:key}, which we restate below.

\sparsifytypes*

\subsection{Preliminaries for $\Amplify$}

We begin by giving some preliminaries and defining notations and subroutines that we use to describe the algorithm $\Amplify$.

\medskip
\noindent \textbf{The Subsets of Colors $C_1,\dots,C_{2\eta}$:}
Let $G$ be a graph, $\chi : E(G) \longrightarrow C \cup \{\bot\}$ be a partial $\mu$-coloring of $G$, $\U$ a separable collection with types in $C \times C$, and $\eta$ an integer such that $10 \leq \eta \leq \mu/10$.
By relabeling the colors, we can assume that $C = [\mu]$. 
Let $r := \lfloor \mu/ 2\eta \rfloor$, $C_i := [(i - 1)r + 1, ir]$ for each $i \in [2\eta]$, and $\C_k := C_{2k-1} \cup C_{2k}$ for each $k \in [\eta]$. Note that $|{\C_k}| = 2r \leq \mu/\eta$ for each $k \in [\eta]$. For each $i \in [2\eta]$ and $j \in [r]$, we denote the $j^{th}$ color in $C_i$ by $C_i(j) := (i-1)r + j$.
We can see that the subsets of colors $C_1, \dots, C_{2\eta} \subseteq [\mu]$ partition the set of colors $[2\eta r] \subseteq [\mu]$.

Our algorithm will modify $\chi$ by only changing the colors of edges $e \in E$ with $\chi(e) \in [q]$, where $q := 2\eta r$. 
Thus, our algorithm will only have the ability to change the types of u-fans $\f \in \U$ with $\tau(\f) \in [q] \times [q]$.\footnote{Recall that $[q] \times [q]$ denotes the set of \emph{unordered} subsets $\{\{c, c'\} \mid c, c' \in [q] \}$. }
To ensure that sufficiently many u-fans have types in $[q] \times [q]$, we perform a simple preprocessing step that involves relabeling the colors; we describe this in the proof of \Cref{lem:ignore last colors}.
After this preprocessing step, our algorithm constructs the subsets of colors $C_1,\dots,C_{2\eta}$ and removes all u-fans $\f$ from $\U$ that do \emph{not} have a type $\tau(\f) \in [q] \times [q]$. Thus, from this point onward, \emph{we assume that all u-fans in $\U$ have types in $\tau(\f) \in [q] \times [q]$}.
We later show that this operation only removes at most $2\lambda/5$ u-fans from $\U$.

\medskip
\noindent \textbf{Uniform and Aligned U-Fans:} Let $\f$ be a u-fan in $\U$. We say that the u-fan $\f$ is \textbf{uniform} if $\tau(\f) \subseteq C_i$ (i.e.~$\tau(\f) \in C_i \times C_i$) for some $i \in [2\eta]$, and we say that $\f$ is \textbf{aligned} if $\tau(\f)$ intersects both $C_{2k-1}$ and $C_{2k}$ (i.e.~$\tau(\f)\in C_{2k-1}\times C_{2k}$) for some $k \in [\eta]$. We refer to u-fans that are uniform or aligned as \textbf{social}. 
We let 
$\hat \U \subseteq \U$ denote the subset of social u-fans.
Note that for any social fan $\f \in \hat \U$, we have that $\tau(\f) \in \bigcup_{i=1}^\eta (\C_i \times \C_i)$. Thus, our objective is to ensure that a constant fraction of the u-fans in $\U$ are social, i.e.~that $|\hat \U| = \Omega(\lambda)$.

\medskip
\noindent \textbf{Modifying the Types of U-Fans:} The algorithm $\Amplify$ works by repeatedly modifying the types of u-fans to increase the number of social u-fans.
Suppose that the algorithm wants to modify the type of a u-fan $\f \notin \hat \U$ with $\tau(\f) \in C_{i} \times C_{i'}$ (by changing the colors of some edges) so that $\tau(\f) \in \C_k \times \C_k$ for some $k \in [\eta]$ after the modification. The algorithm does this by flipping some specific alternating paths, that we refer to as the \textbf{$k$-relevant} alternating paths of $\f$, which changes the type of $\f$ so that it is aligned. 
Suppose that $\f = (u,v,w,c_{\f}(u),c_{\f}(v),c_{\f}(w))$ where $c_{\f}(u) = C_{i}(j)$ and $c_{\f}(v) = c_{\f}(w) = C_{i'}(j')$ for $i,i' \in [2\eta]$ and $j, j' \in [r]$. Note that, since $\f$ is not social (and hence not uniform) we have that $i \neq i'$. In the case that $\{2k-1,2k\} \cap \{i,i'\} = \emptyset$, the $k$-relevant alternating paths of $\f$ are 
\begin{enumerate}
    \item The $\{C_{2k-1}(j), C_{i}(j)\}$-alternating path $P_u$ starting at $u$.
    \item The $\{C_{2k}(j'), C_{i'}(j')\}$-alternating paths $P_v$ and $P_w$ starting at $v$ and $w$ respectively.
\end{enumerate}
We let $\mathcal P_{k}(\f)$ denote the set of alternating-paths $\{P_u, P_v, P_w\}$.\footnote{If $P_v = P_w$, the set $\mathcal P_{k}(\f)$ only contains one copy of this path, i.e.~it is not a multiset.} If $i = 2k$ or $i' = 2k - 1$, we consider the $\{C_{2k}(j), C_{i}(j)\}$-alternating path starting at $u$ and the $\{C_{2k - 1}(j'), C_{i'}(j')\}$-alternating paths starting at $v$ and $w$ instead. Flipping the alternating paths in $\mathcal P_{k}(\f)$ by calling $\Flip(P)$ for each $P \in \mathcal P_{k}(\f)$ turns the type of the u-fan $\f$ into $\{C_{2k - 1}(j),C_{2k}(j')\} \in C_{2k-1} \times C_{2k} \subseteq \C_k \times \C_k$. We note  that flipping these paths might also change the types of at most $3$ other u-fans in the collection $\U$, \emph{potentially damaging these u-fans}.
\Cref{alg:modify} provides the pseudocode for the algorithm $\Compute$ that computes the set of $k$-relevant alternating paths of $\f$.

\begin{algorithm}[H]
    \SetAlgoLined
    \DontPrintSemicolon
    \SetKwRepeat{Do}{do}{while}
    \SetKwBlock{Loop}{repeat}{EndLoop}
    Let $\f = (u,v,w,c_{\f}(u),c_{\f}(v),c_{\f}(w))$\;
    Let $c_{\f}(u) = C_{i}(j)$ and $c_{\f}(v) = c_{\f}(w) = C_{i'}(j')$, where $i,i' \in [2\eta]$ and $j, j' \in [r]$\;
    Let $\ell \leftarrow 2k - 1$ and $\ell' \leftarrow 2k$\;
    \If{$i = 2k$ or $i' = 2k-1$}{
        Let $\ell \leftarrow 2k$ and $\ell' \leftarrow 2k - 1$\;
    }
    Let $P_u$ be the $\{C_{\ell}(j), C_{i}(j)\}$-alternating path starting at $u$\;
    Let $P_v$ and $P_w$ be the $\{C_{\ell'}(j'), C_{i'}(j')\}$-alternating path starting at $v$ and $w$ respectively\;
    \Return $\{P_u, P_v, P_w\}$
    \caption{$\Compute(\f, k)$}
    \label{alg:modify}
\end{algorithm}


For $i,i' \in [2\eta]$, let $\U_{i,i'} := \{\f \in \U \mid \tau(\f) \in C_i \times C_{i'}\} \subseteq \U$ denote the subset of u-fans with types in $C_i \times C_{i'}$. 
A key observation is that the $k$-relevant alternating paths corresponding to the u-fans in $\U_{i,i'} \setminus \hat \U$ are either edge-disjoint or the same. We summarize this property in the following claim.

\begin{claim}\label{cl:disjoint paths}
    For any u-fans $\f, \f' \in \U_{i,i'} \setminus \hat \U$, $P \in \mathcal P_k(\f)$ and $P' \in \mathcal P_k(\f')$, the alternating paths $P$ and $P'$ are either edge-disjoint or the same.
\end{claim}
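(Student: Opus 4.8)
## Proof proposal for Claim~\ref{cl:disjoint paths}

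The plan is to reduce the statement to a claim about the \emph{types} of the relevant alternating paths, via the standard fact that two maximal alternating paths of a given 2-color type are either vertex-disjoint (hence edge-disjoint) or identical. So the first step is to recall and invoke this fact: for a fixed type $\{\gamma,\delta\}$, the maximal $\{\gamma,\delta\}$-alternating paths partition the edges colored $\gamma$ or $\delta$ into vertex-disjoint paths and cycles; two maximal alternating paths of the \emph{same} type therefore cannot share an edge without coinciding. Consequently, if $P$ and $P'$ are maximal alternating paths with $\tau(P) = \tau(P')$ they are either the same or edge-disjoint, while if $\tau(P) \cap \tau(P') = \emptyset$ they trivially share no edge. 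So it suffices to show: for all $\f,\f' \in \U_{i,i'}\setminus\hat\U$, $P \in \mathcal P_k(\f)$, $P' \in \mathcal P_k(\f')$, either $\tau(P) = \tau(P')$ or $\tau(P) \cap \tau(P') = \emptyset$.

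Next I would unpack the types of the paths produced by $\Compute(\f,k)$. Write $\f = (u,v,w,C_i(j),C_{i'}(j'),C_{i'}(j'))$ and $\f' = (u',v',w',C_i(j_1),C_{i'}(j_1'),C_{i'}(j_1'))$; since both lie in $\U_{i,i'}\setminus\hat\U$ we have $i\neq i'$, and by the relabeling convention the two color classes $C_i$ and $C_{i'}$ feeding the center and the leaves are the \emph{same} indices $i,i'$ for $\f$ and $\f'$ (this is where membership in the common set $\U_{i,i'}$ is used; I should be slightly careful that the ordered roles match, i.e. that in both u-fans the center color lies in $C_i$ and the leaf color in $C_{i'}$ — if $C_i\times C_{i'}$ is treated as an unordered product this needs a short case split, but the indices $\ell,\ell'$ chosen in $\Compute$ depend only on the \emph{pair} $\{i,i'\}$ and $k$, so the argument goes through in either orientation). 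The set $\mathcal P_k(\f)$ then consists of: $P_u$ of type $\{C_\ell(j), C_i(j)\}$, and $P_v,P_w$ of type $\{C_{\ell'}(j'), C_{i'}(j')\}$, where $\{\ell,\ell'\}$ is a fixed permutation of $\{2k-1,2k\}$ determined solely by $\{i,i'\}$ and $k$. The key point is that this type is determined by which color class the relevant endpoint color comes from (namely $C_i$ or $C_{i'}$) together with its index $j$ or $j'$ within that class.

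Then I would do the case analysis on which of the three paths $P$ and $P'$ are. The "parallel" cases ($P = P_u$-type and $P' = P_{u'}$-type, or both are $P_v/P_w$-type): here $\tau(P) = \{C_\ell(j), C_i(j)\}$ and $\tau(P') = \{C_\ell(j_1), C_i(j_1)\}$. Using the convention $C_a(b) = (a-1)r+b$, the two types are equal if $j = j_1$ and disjoint otherwise — indeed $C_i(j) = C_i(j_1)$ iff $j=j_1$, and likewise $C_\ell(j)=C_\ell(j_1)$ iff $j=j_1$, while "cross" equalities like $C_\ell(j) = C_i(j_1)$ are impossible because $\ell\in\{2k-1,2k\}$ is disjoint from $\{i,i'\}$ in the main case and the $r$-blocks are disjoint (the degenerate sub-cases $i=2k$ or $i'=2k-1$, where $\Compute$ swaps $\ell,\ell'$, need a separate but entirely analogous check). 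The "crossed" case ($P$ is the $u$-path of $\f$ and $P'$ is a $v/w$-path of $\f'$, or vice versa): then $\tau(P) \subseteq C_\ell \cup C_i$ and $\tau(P') \subseteq C_{\ell'} \cup C_{i'}$; since $\ell\neq\ell'$, $i\neq i'$, and all four index blocks $C_i,C_{i'},C_{2k-1},C_{2k}$ are pairwise disjoint as subsets of $[\mu]$ (again, modulo the degenerate overlaps $i=2k$ etc., which I would handle by noting that then one block coincides but the \emph{remaining} two are still disjoint, forcing $\tau(P)\cap\tau(P')=\emptyset$), we get $\tau(P)\cap\tau(P')=\emptyset$ directly. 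Assembling these cases gives that any two relevant paths have equal or disjoint types, and the disjointness fact from the first paragraph finishes the proof.

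I expect the main obstacle to be purely bookkeeping rather than conceptual: carefully tracking the block structure $C_1,\dots,C_{2\eta}$ and in particular the degenerate sub-cases in $\Compute$ where $i = 2k$ or $i' = 2k-1$ (so that $\ell,\ell'$ are swapped and one of the "target" blocks $C_{2k-1},C_{2k}$ actually coincides with $C_i$ or $C_{i'}$). In those sub-cases one of the pairwise-disjointness statements among the four blocks fails, and one must check that enough disjointness survives to still force "equal type or disjoint type"; this is the only place a reader might worry the claim could fail, so it deserves an explicit line or two.
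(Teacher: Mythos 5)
Your high-level plan---reduce edge-disjointness to showing the relevant types are either equal or disjoint, then do a case analysis on whether the two paths are center-paths or leaf-paths---is the same approach the paper uses for the analogous statement in the bipartite technical overview (\Cref{cl:batch:flip:bipartite}). That part is sound, and so is your observation that once the types are equal or disjoint you're done by the standard structure of maximal alternating paths.

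However, the way you brush off the orientation concern is not correct, and this is a genuine gap. You write that ``the indices $\ell,\ell'$ chosen in $\Compute$ depend only on the \emph{pair} $\{i,i'\}$ and $k$, so the argument goes through in either orientation.'' But $\Compute$ does not pair $\ell,\ell'$ with the unordered pair $\{i,i'\}$; it pairs $\ell$ with the \emph{center's} color class and $\ell'$ with the \emph{leaves'} color class (it extracts $i,i'$ from $c_{\f}(u)$ and $c_{\f}(v)$, so ``$i$'' inside $\Compute$ is always the center's class). Consequently, if $\f\in\U_{a,b}$ has its center color in $C_a$ and $\f'\in\U_{a,b}$ has its center color in $C_b$, then in the non-degenerate case both calls set $\ell=2k-1,\ \ell'=2k$, but now the center path $P_u$ of $\f$ has type $\{C_{2k-1}(j),C_a(j)\}$ while a leaf path $P_{v'}$ of $\f'$ has type $\{C_{2k}(j_1'),C_a(j_1')\}$. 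If $j=j_1'$ these two types share the color $C_a(j)$ without being equal, so the corresponding maximal alternating paths can share a $C_a(j)$-colored edge and then diverge---precisely what the claim rules out. The same overlap arises in the center-vs-center comparison ($\{C_{2k-1}(j),C_a(j)\}$ against $\{C_{2k-1}(j_1),C_b(j_1)\}$ share $C_{2k-1}(j)$ when $j=j_1$). So, in contrast to the overview, where the explicit ``left/right'' convention (determined by $\alpha<\beta$) forces a consistent orientation across the batch, here the center/leaf roles are fixed by the u-fan's structure, and nothing in $\U_{i,i'}$ as defined ties them to $i$ versus $i'$. To close the gap you would need either to restrict the batch to u-fans of one fixed orientation (center always in $C_{\min(i,i')}$, say, splitting the batch in two if necessary), or to argue directly that mixed orientations cannot occur in the sets $\mathcal V$ the algorithm constructs. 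A bare assertion that ``the argument goes through in either orientation'' is exactly the place where the proof would break.

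Your flag about the degenerate sub-cases ($i=2k$ or $i'=2k-1$) is the right instinct and indeed needs a short but explicit verification, but that is a bookkeeping matter; the orientation point above is the one that actually threatens the statement.
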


For any subset $\mathcal V \subseteq \U_{i,i'}\setminus \hat \U$, let $\mathcal P_k(\mathcal V)$ denote the set $\bigcup_{\f \in \mathcal V} \mathcal P_k(\f)$ of all $k$-relevant alternating paths of the u-fans $\f \in \mathcal V$. It follows from \Cref{cl:disjoint paths} that all of the alternating paths in $\mathcal P_k(\mathcal V)$ are edge-disjoint.
Thus, we can flip them all \textbf{\emph{simultaneously}}
to modify the types of the u-fans in $\mathcal V$ so that they are all contained in $C_{2k - 1} \times C_{2k} \subseteq \C_k \times \C_k$, making them social.\footnote{When we say that we can flip these paths \emph{simultaneously}, we mean that we can flip them in any arbitrary order and still have the same effect on the coloring.} See \Cref{disjoint} for an illustration.

\begin{figure}
	\centering
	\begin{tikzpicture}[thick,scale=0.9]
	\draw (-1, 1) node(u1)[circle, draw, fill=black!50,
	inner sep=0pt, minimum width=6pt, label = above:$u_1$] {};
	\draw (0, 0) node(v1)[circle, draw, color=red, fill=black!50,
	inner sep=0pt, minimum width=6pt, label = below:$v_1$] {};
	\draw (-2, 0) node(w1)[circle, draw, color=red, fill=black!50,
	inner sep=0pt, minimum width=6pt, label = below:$w_1$] {};
	
	\draw [gray!50] plot [smooth cycle] coordinates {(-1, 1.8) (0.5, -0.3) (-1, -0.8) (-2.5, -0.3)};
	\node at (-1,2.2) {u-fan $\f_1$};
	
	\draw (1, -1) node(1)[circle, draw, fill=black!50,
	inner sep=0pt, minimum width=6pt] {};
	\draw (2, 0) node(2)[circle, draw, fill=black!50,
	inner sep=0pt, minimum width=6pt] {};
	\draw (4, 0) node(3)[circle, draw, fill=black!50,
	inner sep=0pt, minimum width=6pt] {};
	\draw (5, -1) node(4)[circle, draw, fill=black!50,
	inner sep=0pt, minimum width=6pt] {};
	\draw (6, 0) node(5)[circle, draw, fill=black!50,
	inner sep=0pt, minimum width=6pt] {};
	\draw (8, 0) node(6)[circle, draw, fill=black!50,
	inner sep=0pt, minimum width=6pt] {};
	\draw (9, -1) node(7)[circle, draw, fill=black!50,
	inner sep=0pt, minimum width=6pt] {};
	\draw (10, 0) node(8)[circle, draw, fill=black!50,
	inner sep=0pt, minimum width=6pt] {};
	\draw (12, 0) node(9)[circle, draw, fill=black!50,
	inner sep=0pt, minimum width=6pt] {};
	
	\draw [line width = 0.5mm, dashed] (u1) to (v1);
	\draw [line width = 0.5mm, dashed] (u1) to (w1);
	\draw [line width = 0.5mm, color=orange] (v1) to (1);
	\draw [line width = 0.5mm, color=red] (1) to (2);
	\draw [line width = 0.5mm, color=orange] (2) to (3);
	\draw [line width = 0.5mm, color=red] (3) to (4);
	\draw [line width = 0.5mm, color=orange] (4) to (5);
	\draw [line width = 0.5mm, color=red] (5) to (6);
	\draw [line width = 0.5mm, color=orange] (6) to (7);
	\draw [line width = 0.5mm, color=red] (7) to (8);
	\draw [line width = 0.5mm, color=orange] (8) to (9);
	
	\draw (-1, -3) node(u2)[circle, draw, fill=black!50,
	inner sep=0pt, minimum width=6pt, label = below:$u_2$] {};
	\draw (0, -2) node(v2)[circle, draw, color=cyan, fill=black!50,
	inner sep=0pt, minimum width=6pt, label = above:$v_2$] {};
	\draw (-2, -2) node(w2)[circle, draw, color=cyan, fill=black!50,
	inner sep=0pt, minimum width=6pt, label = above:$w_2$] {};
	
	\draw [gray!50] plot [smooth cycle] coordinates {(-1, -3.8) (0.5, -1.7) (-1, -1.2) (-2.5, -1.7)};
	\node at (-1,-4.2) {u-fan $\f_2$};
	
	\draw (2, -2) node(12)[circle, draw, fill=black!50,
	inner sep=0pt, minimum width=6pt] {};
	\draw (4, -2) node(13)[circle, draw, fill=black!50,
	inner sep=0pt, minimum width=6pt] {};
	\draw (6, -2) node(15)[circle, draw, fill=black!50,
	inner sep=0pt, minimum width=6pt] {};
	\draw (8, -2) node(16)[circle, draw, fill=black!50,
	inner sep=0pt, minimum width=6pt] {};
	\draw (10, -2) node(18)[circle, draw, fill=black!50,
	inner sep=0pt, minimum width=6pt] {};
	\draw (12, -2) node(19)[circle, draw, fill=black!50,
	inner sep=0pt, minimum width=6pt] {};
	
	\draw [line width = 0.5mm, dashed] (u2) to (v2);
	\draw [line width = 0.5mm, dashed] (u2) to (w2);
	\draw [line width = 0.5mm, color=Emerald] (v2) to (1);
	\draw [line width = 0.5mm, color=cyan] (1) to (12);
	\draw [line width = 0.5mm, color=Emerald] (12) to (13);
	\draw [line width = 0.5mm, color=cyan] (13) to (4);
	\draw [line width = 0.5mm, color=Emerald] (4) to (15);
	\draw [line width = 0.5mm, color=cyan] (15) to (16);
	\draw [line width = 0.5mm, color=Emerald] (16) to (7);
	\draw [line width = 0.5mm, color=cyan] (7) to (18);
	\draw [line width = 0.5mm, color=Emerald] (18) to (19);
	
\end{tikzpicture}
	\caption{In this picture we have two different u-fans $\f_1 = (u_1, v_1, w_1, \cdot\,, {\color{red}C_i(j_1)}, {\color{red}C_i(j_1)})$ and $\f_2 = (u_2, v_2, w_2, \cdot \,, {\color{cyan}C_{i}(j_2)}, {\color{cyan}C_{i}(j_2)})$. When $j_1\neq j_2$, we have $\{{\color{red}C_i(j_1)}, {\color{orange}C_{2k}(j_1)}\}\cap \{{\color{cyan}C_{i}(j_2)}, {\color{Emerald}C_{2k}(j_2)}\} = \emptyset$. Then two $k$-relevant alternating paths from $v_1, v_2$ of type-$\{{\color{red}C_i(j_1)}, {\color{orange}C_{2k}(j_1)}\}$ and type-$\{{\color{cyan}C_{i}(j_2)}, {\color{Emerald}C_{2k}(j_2)}\}$ are edge-disjoint.}
	\label{disjoint}
\end{figure}

We refer to a subset of u-fans $\mathcal V \subseteq \U$ such that $\mathcal V \subseteq \U_{i,i'}$ for some $i,i' \in [2\eta]$ as a \textbf{batch}.
The subroutine $\ModifyB$ modifies the types of a batch of u-fans $\mathcal V \subseteq \U \setminus \hat \U$, which we describe below in \Cref{alg:modifyB}.


\begin{algorithm}[H]
    \SetAlgoLined
    \DontPrintSemicolon
    \SetKwRepeat{Do}{do}{while}
    \SetKwBlock{Loop}{repeat}{EndLoop}
    \SetKwInOut{Input}{input}
    \Input{A batch $\mathcal V \subseteq \U_{i,i'}\setminus \hat \U$, for some $i,i' \in [2\eta]$}
    Let $\mathcal P \leftarrow \mathcal P_k(\mathcal V)$ be the set of $k$-relevant alternating paths of the u-fans in $\mathcal V$\;  
    \For{$P \in \mathcal P$}{
        Call $\Flip(P)$\;
    }
    \caption{$\ModifyB(\mathcal V, k)$}
    \label{alg:modifyB}
\end{algorithm}

\medskip
\noindent \textbf{Good U-Fans:} 
Whenever we modify the types of some subset of u-fans to make these u-fans social, we want to make sure that we are increasing the total number of social u-fans. Since applying $\ModifyB(\mathcal V, k)$ to some batch $\mathcal V \subseteq \U_{i,i'} \setminus \hat \U$ can change the types of up to $3 |\mathcal V|$ u-fans that are not contained in $\mathcal V$---potentially damaging them---calling $\ModifyB(\mathcal V, k)$ for an arbitrary batch $\mathcal V \subseteq \U_{i,i'}\setminus \hat \U$ might be counterproductive.

To this end, we refer to a u-fan $\f \in \U$ as being \textbf{$k$-good} if it is not social and making a call to $\ModifyB(\{\f\}, k)$ does not change the type of any u-fan already in $\hat \U$.
We say that a u-fan is \textbf{$k$-bad} if it is not $k$-good, and denote the set of $k$-bad u-fans by $\mathcal B_k \subseteq \U$.
We can observe that calling $\ModifyB(\{\f\}, k)$ for a $k$-good u-fan increases the size of the set $\hat \U$ by at least $1$.

\subsection{The Algorithm $\Amplify$}

The algorithm $\Amplify$ works in \textbf{iterations}. During each iteration, the algorithm finds a large batch $\mathcal V$ of good u-fans, modifies their types to make them social by making a call to $\ModifyB$, and then removes the damaged u-fans from $\U$.
Once the number of social u-fans is at least $\lambda / 100$, the algorithm terminates. 

The pseudocode in \Cref{alg:amplify} gives a formal description of the algorithm.
We note that, throughout the run of \Cref{alg:amplify}, the separable collection $\U$ is updated during the calls to $\ModifyB$ (where we change the missing colors assigned to u-fans) in \Cref{line:up sep 1} and when we remove damaged u-fans from $\U$ in \Cref{line:bye damaged}.
Whenever we refer to $\U$ or any subset of $\U$ (such as $\hat \U$, $\U_{i,i'}$ and $\mathcal B_k$) in \Cref{alg:amplify}, these are defined with respect to the \emph{current state} of the separable collection $\U$. In \Cref{sec:amp imp}, we show how to compute these subsets efficiently.

\begin{algorithm}[H]
    \SetAlgoLined
    \DontPrintSemicolon
    \SetKwRepeat{Do}{do}{while}
    \SetKwBlock{Loop}{repeat}{EndLoop}
    Set $\U \leftarrow \{\f \in \U \mid \tau(\f) \in [q] \times [q]\}$\label{line:remove types}\;
    \While{$|\hat \U| < \lambda / 100$}{
        Let $k^\star \leftarrow \arg \min_{k \in [\eta]} |{\U_{2k-1,2k}} \cup {\U_{2k-1,2k-1}}\cup {\U_{2k,2k}}|$\;
        Let $i^\star ,i'^\star \leftarrow \arg \max_{1 \leq i < i' \leq 2\eta} |{\U_{i,i'}} \setminus \mathcal B_{k^\star}|$\;
        Let $\mathcal V \leftarrow {\U_{i^\star,i'^\star}} \setminus\mathcal B_{k^\star}$\;
        Call $\ModifyB(\mathcal V,k^\star)$\label{line:up sep 1}\; 
        Remove any damaged u-fans from $\U$\label{line:bye damaged}\;
    }
    Set $\U \leftarrow \hat \U$\label{line:end}\;
    \caption{$\Amplify(\U)$}
    \label{alg:amplify}
\end{algorithm}


In \Cref{sec:amp anal}, we analyze the algorithm $\Amplify$ and show that, after making a call to $\Amplify(\U)$, the coloring $\chi$ and separable collection $\U$ satisfy the properties described in \Cref{lem:key}. In \Cref{sec:amp imp}, we show how to implement the algorithm $\Amplify$ to run in time $O(m \eta^4 \log \mu)$.

\subsection{Analysis of $\Amplify$}\label{sec:amp anal}


We begin by showing that at least half of the u-fans in $\U$ initially have types in $[q] \times [q]$.

\begin{claim}\label{lem:ignore last colors}
    By relabeling the colors (i.e.~setting $\chi = \chi \circ \pi$ for a permutation $\pi : [\mu] \longrightarrow [\mu]$), we have that at least $3\lambda/5$ of the u-fans in $\U$ have types in $[q] \times [q]$.
\end{claim}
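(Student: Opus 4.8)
The plan is to prove this by a short averaging argument over the $\mu$ colors, together with an explicit (greedy) choice of which colors are relabeled into $[q]$. First I would record the elementary estimate $\mu - q = \mu \bmod (2\eta) < 2\eta \le \mu/5$, which uses only the hypothesis $\eta \le \mu/10$ together with $q = 2\eta r = 2\eta\lfloor \mu/(2\eta)\rfloor$. Thus $[\mu]\setminus[q]$ has fewer than $\mu/5$ colors, and the entire content of the claim is to choose, among all $q$-subsets of $[\mu]$, a good ``surviving'' set of colors and relabel it onto $[q]$.

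Next, for each color $c \in [\mu]$ let $d(c) := |\{\f \in \U : c \in \tau(\f)\}|$. Since every u-fan in the separable collection $\U$ has $|\tau(\f)| = 2$ (Condition~\ref{item:diff cols} of \Cref{def:u-fan}), we have $\sum_{c \in [\mu]} d(c) = 2|\U| = 2\lambda$. Let $S \subseteq [\mu]$ be a set of $q$ colors attaining the $q$ largest values of $d$ (ties broken arbitrarily), so that $[\mu]\setminus S$ consists of the $\mu - q$ colors with the smallest $d$-values. The standard ``the $t$ smallest of $\mu$ nonnegative numbers sum to at most a $t/\mu$ fraction of the total'' inequality then gives $\sum_{c \notin S} d(c) \le \frac{\mu - q}{\mu}\cdot 2\lambda < \frac{2\lambda}{5}$. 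A u-fan $\f$ has $\tau(\f)\not\subseteq S$ only if at least one color of $\tau(\f)$ lies outside $S$, so by a union bound at most $\sum_{c\notin S} d(c) < 2\lambda/5$ u-fans $\f$ satisfy $\tau(\f)\not\subseteq S$; hence at least $3\lambda/5$ of the u-fans $\f$ in $\U$ satisfy $\tau(\f) \subseteq S$.

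Finally, I would let $\pi : [\mu]\to[\mu]$ be any permutation sending $S$ bijectively onto $[q]$ (possible since $|S| = q$) and relabel the colors by $\pi$, updating $\chi$, the missing-color sets $\miss_\chi(\cdot)$, and the types of all u-fans accordingly. After the relabeling, every u-fan $\f$ with $\tau(\f)\subseteq S$ beforehand has $\tau(\f) \subseteq \pi(S) = [q]$, so at least $3\lambda/5$ u-fans have types in $[q]\times[q]$, as claimed. Computing $S$ only requires tallying the values $d(c)$ with one pass over $\U$ and selecting the $q$ largest, which fits comfortably within the running-time budget of \Cref{lem:key}; I would defer the precise data-structure details to \Cref{sec:amp imp}.

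There is no substantive obstacle here: it is a one-shot counting bound with a greedy choice of $S$. The only points that require a little care are verifying $\mu - q < \mu/5$ from $\eta \le \mu/10$, and getting the direction of the union bound right (a u-fan is ``lost'' precisely when at least one of its two colors falls outside $S$, and each such u-fan is charged to at least one color $c \notin S$).
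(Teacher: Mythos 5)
Your proof is correct and follows essentially the same averaging argument as the paper: the paper defines $X_c = \sum_{\f\in\U}\mathds{1}[c\in\tau(\f)]$ (your $d(c)$), relabels so that $X_1\ge\cdots\ge X_\mu$ (your greedy choice of $S$ and permutation $\pi$), and bounds $\sum_{c>q}X_c \le (1-q/\mu)\cdot 2\lambda \le 2\lambda/5$ using $q \ge \mu - 2\eta \ge (4/5)\mu$, exactly as you do. The only cosmetic difference is that you make the greedy selection and union bound explicit, while the paper phrases it as a ``w.l.o.g.~sort the colors'' step.
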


\begin{proof}
    Given some u-fan $\f \in \U$ and $c \in [\mu]$, let $X_c^{\f}$ denote the indicator for the event that $c \in \tau(\f)$, i.e.~that the type of $\f$ contains the color $c$. Furthermore, let $X_c = \sum_{\f \in \U} X_c^{\f}$. Since the type of each (non-damaged) u-fan contains exactly $2$ colors, we can see that $\sum_{c = 1}^\mu X_c \leq 2 |{\U}| = 2\lambda$.
    By \emph{relabeling the colors}, we can assume w.l.o.g. that $X_1 \geq \dots \geq X_\mu$.
    It follows from this assumption that 
    $$\sum_{c = 1}^q X_c \geq \frac{q}{\mu} \cdot \sum_{c = 1}^\mu X_c.$$ Thus, we get that
    $$ \sum_{c = q + 1}^\mu X_c \leq \left(1 - \frac{q}{\mu} \right) \cdot \sum_{c = 1}^\mu X_c \leq 2 \left(1 - \frac{q}{\mu} \right) \cdot \lambda \leq \frac{2}{5} \cdot \lambda. $$
    Recalling that $\mu \ge 10\eta$, the last inequality follows from $q = 2\eta \lfloor \mu/2\eta \rfloor \geq 2\eta (\mu/2\eta - 1) \geq \mu - 2\eta \geq (4/5) \mu$. In other words,
     at most $2\lambda/5$ of the u-fans in $\U$ have types not in $[q] \times [q]$.
\end{proof}

The following lemma describes how the sizes of the sets $\U$ and $\hat \U$ change during each iteration of the algorithm.

\begin{lemma}\label{lem:augment good:2}
    Consider some iteration of \Cref{alg:amplify} where we call $\ModifyB(\mathcal V, k^\star)$ for a batch of u-fans $\mathcal V$. During this iteration, the size of $\hat \U$ increases by $|\mathcal V|$ and the size of $\U$ decreases by at most $3 |\mathcal V|$.
\end{lemma}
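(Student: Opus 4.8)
The plan is to unpack the definitions of the quantities involved in the iteration — the batch $\mathcal V = \U_{i^\star, i'^\star} \setminus \mathcal B_{k^\star}$, the ``good'' u-fans, and the effect of $\ModifyB(\mathcal V, k^\star)$ — and then carefully track, edge by edge and u-fan by u-fan, which u-fans change type and in which direction. First I would record that every u-fan $\f \in \mathcal V$ is by construction $k^\star$-good, hence not social and with $\tau(\f) \in C_{i^\star} \times C_{i'^\star}$ where $i^\star \neq i'^\star$ (so the ``non-social'' case of $\Compute$ applies). By \Cref{cl:disjoint paths}, all the $k^\star$-relevant alternating paths in $\mathcal P_{k^\star}(\mathcal V)$ are pairwise edge-disjoint or identical, so flipping them all via the calls to $\Flip$ inside $\ModifyB$ has a well-defined cumulative effect, and each $\f \in \mathcal V$ ends up with type $\{C_{2k^\star-1}(j), C_{2k^\star}(j')\} \in C_{2k^\star-1}\times C_{2k^\star} \subseteq \C_{k^\star}\times\C_{k^\star}$, i.e.\ becomes social. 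Since each $\f \in \mathcal V$ was previously \emph{not} in $\hat\U$ and is now in $\hat\U$, the count $|\hat\U|$ goes up by exactly $|\mathcal V|$ — provided no u-fan that was \emph{already} in $\hat\U$ gets knocked out of $\hat\U$ by these flips, which is exactly the content of $k^\star$-goodness: for each single $\f\in\mathcal V$, the call $\ModifyB(\{\f\},k^\star)$ does not change the type of any u-fan already in $\hat\U$; and since the paths for distinct u-fans in $\mathcal V$ are edge-disjoint (or identical), flipping them simultaneously does not create new interference with $\hat\U$ either. This gives the ``$|\hat\U|$ increases by $|\mathcal V|$'' half.

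For the second half, I would bound how many u-fans can be \emph{damaged}, i.e.\ have their type altered in an unwanted way, and then removed from $\U$ in \Cref{line:bye damaged}. The key structural fact, stated already in the discussion preceding the lemma, is that flipping the $k^\star$-relevant alternating paths of a single u-fan $\f$ (there are at most three such paths: $P_u$, $P_v$, $P_w$) can change the types of at most $3$ other u-fans — this in turn should follow from the $\Flip$ subroutine's guarantee (\cite{ABBC2025}, Lemma 5.4) that each call to $\Flip$ damages at most $2$ u-fans, combined with the bookkeeping of how many distinct paths a single u-fan contributes; I would cite the earlier paragraph that asserts ``flipping these paths might also change the types of at most $3$ other u-fans in the collection $\U$.'' Summing over the $|\mathcal V|$ u-fans in the batch, at most $3|\mathcal V|$ u-fans outside $\mathcal V$ can be damaged, so at most $3|\mathcal V|$ u-fans are removed from $\U$ in \Cref{line:bye damaged}; meanwhile the u-fans in $\mathcal V$ themselves are not removed (they become social, not damaged). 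Hence $|\U|$ decreases by at most $3|\mathcal V|$, as claimed.

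The step I expect to be the main obstacle is making the ``at most $3$ damaged u-fans per u-fan in $\mathcal V$'' bound fully rigorous in the \emph{simultaneous} setting. One must be careful that when the paths of many u-fans in $\mathcal V$ are flipped together, the set of u-fans damaged is still contained in the union of the per-u-fan damage sets — this is where edge-disjointness of the paths across distinct u-fans in $\mathcal V$ (\Cref{cl:disjoint paths}) does the real work: a u-fan outside $\mathcal V$ whose type changes must have an edge on one of the flipped paths, and that edge belongs to the path-set of a unique $\f \in \mathcal V$, so it is accounted for in that $\f$'s damage budget of $3$. I would also need to double-check the mild subtlety that a u-fan in $\hat\U$ (already social) cannot be among the damaged ones, precisely because each $\f \in \mathcal V$ is $k^\star$-good; combined with edge-disjointness this shows the only type changes outside $\mathcal V$ are to non-social u-fans, which are the ones we are willing to discard. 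Everything else is definition-chasing and the union bound over $\mathcal V$.
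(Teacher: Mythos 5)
Your proposal is correct and follows essentially the same route as the paper's proof: use $k^\star$-goodness to conclude that each $\f\in\mathcal V$ becomes social without knocking anything out of $\hat\U$, and then charge at most $3$ potentially damaged u-fans to each $\f\in\mathcal V$ to get the decrease bound on $|\U|$. The only difference is that you spell out the role of \Cref{cl:disjoint paths} in justifying that the simultaneous flips behave like the per-u-fan flips and that damaged u-fans can be attributed to a unique $\f\in\mathcal V$; the paper invokes this implicitly via the preceding discussion and the footnote on simultaneous flipping, so your version is just a bit more explicit on the same point.
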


\begin{proof}
    Since none of the u-fans in $\mathcal V$ are $k$-bad, we know that calling $\ModifyB(\{\f\}, k)$ for any $\f \in \mathcal V$ will not remove any u-fans from $\hat \U$ and will add the u-fan $\f$ to $\hat \U$ after changing its type. Thus, calling $\ModifyB(\mathcal V, k)$, which flips all of the $k$-relevant alternating paths corresponding to the u-fans in $\mathcal V$, adds each u-fan in $\mathcal V$ to $\hat \U$ without removing any u-fans from $\hat \U$. However, flipping the $k$-relevant alternating paths corresponding to a u-fan can damage up to $3$ u-fans in $\U$. Thus, calling $\ModifyB(\mathcal V, k)$ can damage up to $3|\mathcal V|$ u-fans in $\U$, which are removed from $\U$ during \Cref{line:bye damaged}. Thus, $|\hat \U|$ increases by $|\mathcal V|$ and $|{\U}|$ decreases by at most $3|\mathcal V|$.
\end{proof}

We can now prove the following claim, which lower bounds the size of $\U$ throughout the run of the algorithm.

\begin{claim}\label{claim:U>=1/2}
    At the start of each iteration of \Cref{alg:amplify}, we have that $|{\U}| \geq \lambda/2$.
\end{claim}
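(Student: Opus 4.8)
The plan is to track the total decrease in $|\mathcal{U}|$ over all iterations and show it never exceeds $\lambda/2$ before termination. First, recall from \Cref{lem:ignore last colors} that after the relabeling and the pruning step in \Cref{line:remove types}, we remove at most $2\lambda/5$ u-fans, so immediately after \Cref{line:remove types} we have $|\mathcal{U}| \geq 3\lambda/5$. Thus it suffices to show that the cumulative number of u-fans removed during the while-loop (in \Cref{line:bye damaged}, over all iterations) is at most $\lambda/10$, since $3\lambda/5 - \lambda/10 = \lambda/2$.

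Next, I would use \Cref{lem:augment good:2}: in the iteration where we call $\ModifyB(\mathcal{V}, k^\star)$, the size of $\hat{\mathcal{U}}$ increases by exactly $|\mathcal{V}|$ while $|\mathcal{U}|$ decreases by at most $3|\mathcal{V}|$. Since the while-loop runs only as long as $|\hat{\mathcal{U}}| < \lambda/100$, and $|\hat{\mathcal{U}}|$ is monotonically non-decreasing (a social u-fan stays social once created, by the argument establishing \Cref{lem:augment good:2}, and the batches $\mathcal{V}$ chosen are always disjoint from $\hat{\mathcal{U}}$), the sum of $|\mathcal{V}|$ over all iterations is at most $\lambda/100$ --- because each such $\mathcal{V}$ contributes $|\mathcal{V}|$ fresh u-fans to $\hat{\mathcal{U}}$, whose final size is less than $\lambda/100 + |\mathcal{V}_{\text{last}}|$; a slightly more careful bookkeeping (the last iteration may overshoot) gives $\sum_{\text{iterations}} |\mathcal{V}| \leq \lambda/100 + |\mathcal{V}_{\text{last}}|$, but since each $|\mathcal{V}| \leq |\mathcal{U}| \leq \lambda$ this is still at most, say, $2\lambda/100 = \lambda/50$ as long as we are careful, or one can simply note the loop guard ensures $|\hat{\mathcal{U}}|$ was $< \lambda/100$ at the \emph{start} of the last iteration so the total is $< \lambda/100 + |\mathcal{V}_{\text{last}}|$ where $|\mathcal{V}_{\text{last}}|$ is at most the number of good u-fans, itself at most $|\mathcal{U}| $.

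Then I would conclude: the total decrease in $|\mathcal{U}|$ across all iterations is at most $3 \sum_{\text{iterations}} |\mathcal{V}|$. Combining with the termination condition, this sum is bounded so that $3\sum |\mathcal{V}| \leq \lambda/10$ (using that the loop stops as soon as $|\hat{\mathcal{U}}| \geq \lambda/100$, so the total size of all batches is $O(\lambda/100)$, and hence $3$ times it is well below $\lambda/10$ once the constants are chosen as in the algorithm). Therefore at the start of every iteration, $|\mathcal{U}| \geq 3\lambda/5 - \lambda/10 \geq \lambda/2$.

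The main obstacle I anticipate is the off-by-one / overshoot issue in bounding $\sum_{\text{iterations}} |\mathcal{V}|$: the final iteration can push $|\hat{\mathcal{U}}|$ well past $\lambda/100$, so a naive bound of $\lambda/100$ on the sum of batch sizes is not literally correct. The fix is to separate the last iteration and bound $|\mathcal{V}_{\text{last}}|$ crudely (it is at most the total number of non-social u-fans, hence at most $\lambda$), which would \emph{not} suffice on its own --- so instead one should observe that $\sum_{\text{all but last}} |\mathcal{V}| < \lambda/100$ and for the last iteration use the more refined fact that $3|\mathcal{V}_{\text{last}}| \leq 3|\mathcal{U}|$, then close the loop by induction: assuming $|\mathcal{U}| \geq \lambda/2$ at the start of the last iteration, the drop is at most $3|\mathcal{V}_{\text{last}}|$, and one shows $|\mathcal{V}_{\text{last}}|$ itself is small because the number of $k^\star$-good u-fans available is controlled (this presumably uses a counting lemma about $k$-bad u-fans proved just after this claim). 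I would therefore likely prove this claim jointly with, or immediately after, the lemma bounding $|\mathcal{B}_k|$ and the lemma lower-bounding the largest good batch, rather than in isolation.
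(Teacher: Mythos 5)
Your core idea is the right one --- combine the lower bound $|\U| \geq 3\lambda/5$ after \Cref{line:remove types} (from \Cref{lem:ignore last colors}) with the bookkeeping of \Cref{lem:augment good:2} ($|\hat\U|$ up by $\Phi$, $|\U|$ down by at most $3\Phi$), and this is exactly what the paper does. However, the ``overshoot'' obstacle you flag is a red herring, and your attempted fix (splitting off the last iteration, bounding $|\mathcal V_{\text{last}}|$, appealing to the batch-size lemma, and inducting) is both unnecessary and not clearly closed in your write-up.

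The point you miss is the meaning of ``at the start of each iteration.'' The while-loop guard is evaluated \emph{before} the body executes; so at the moment the claim is invoked, you already know $|\hat\U| < \lambda/100$. \Cref{lem:augment good:2} gives the invariant $|\U| \geq 3\lambda/5 - 3|\hat\U|$ (the quantity $|\U| + 3|\hat\U|$ is non-decreasing across iterations, and it is at least $3\lambda/5$ after \Cref{line:remove types}). Plugging in $|\hat\U| < \lambda/100$ gives $|\U| \geq 3\lambda/5 - 3\lambda/100 \geq \lambda/2$ directly. The size of the batch $\mathcal V$ being chosen in the \emph{current} iteration, and the damage it will cause, are irrelevant to the state at the start of that same iteration; they only affect the start of the \emph{next} iteration, at which point the guard is re-checked. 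So there is no need to bound $\sum_{\text{iterations}} |\mathcal V|$ over \emph{all} iterations, and no need to bring in \Cref{claim:bad count} or \Cref{lem:find good}. Your proposal, as written, leaves the argument hanging on a circular ``assume the claim holds at the start of the last iteration'' step, which you would need to replace by the simple observation above.
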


\begin{proof}
    It follows from \Cref{lem:ignore last colors} that \Cref{line:remove types} of \Cref{alg:amplify} only removes at most $2\lambda/5$ u-fans from $\U$. Thus, at the start of the first iteration, we have that that $|{\U}| \geq 3\lambda/5$. It follows from \Cref{lem:augment good:2} that, if $|\hat \U|$ increases by $\Phi$ during some iteration, then $|{\U}|$ decreases by at most $3 \Phi$ during the same iteration. Thus, at the start of each iteration, we have that $|{\U}| \geq 3\lambda/5 - 3 |\hat \U|$. Since the algorithm terminates once $|\hat \U| \geq \lambda/100$, it follows that $|{\U}| \geq 3\lambda/5 - 3\lambda/100 \geq \lambda/2$.
\end{proof}

The following lemma lower bounds the size of the batch of good edges $\mathcal V$ that we construct during each iteration and is the main technical component in the analysis of this algorithm.

\begin{lemma}\label{lem:find good}
    During each iteration of \Cref{alg:amplify}, we construct a batch $\mathcal V \subseteq \U_{i,i'}$ of $k$-good u-fans with size $|\mathcal V| \geq \lambda/(4\eta^2)$, for some $i,i' \in [2\eta]$ s.t.~$i < i'$ and $k \in [\eta]$.
\end{lemma}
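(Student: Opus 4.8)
The plan is to mimic the argument behind Claim \ref{cl:good:batch:bipartite} from the technical overview, translated to the u-fan setting. First I would fix the index $k^\star \in [\eta]$ chosen in the algorithm, namely the one minimizing $|\U_{2k-1,2k} \cup \U_{2k-1,2k-1} \cup \U_{2k,2k}|$. Since these $\eta$ sets (one for each $k$) are the sets of u-fans that are already ``locally social'' with respect to the $k$-th color block, and since their union over $k$ is exactly $\hat\U$, averaging gives that the minimizer $k^\star$ satisfies $|\U_{2k^\star-1,2k^\star} \cup \U_{2k^\star-1,2k^\star-1} \cup \U_{2k^\star,2k^\star}| \le |\hat\U|/\eta$. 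As we are in an iteration of the while loop, $|\hat\U| < \lambda/100$, so the number of u-fans that are ``blocked'' from contributing to the batch because they are already social in block $k^\star$ is at most $\lambda/(100\eta)$, which is negligible.

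Next I would bound $|\mathcal B_{k^\star}|$, the set of $k^\star$-bad u-fans. A u-fan $\f \notin \hat\U$ is $k^\star$-bad precisely when flipping one of the (at most three) $k^\star$-relevant alternating paths of $\f$ changes the type of some u-fan already in $\hat\U$; equivalently, some social u-fan $\f'$ is ``responsible'' for $\f$ being bad. As in Observations \ref{ob:bipartite:100} and \ref{ob:bipartite:101}, each social u-fan $\f' \in \hat\U$ can be responsible for only $O(\eta)$ many u-fans: the alternating paths emanating from the (at most three) vertices of $\f'$ in the relevant colors are $O(\eta)$ in number (one relevant path per target color block, and there are $2\eta$ blocks), and each such path can ``hit'' at most a constant number of bad u-fans since the bad u-fans' uncolored edges form a matching-like structure and each endpoint of a path damages at most one u-fan incident to that endpoint. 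Hence $|\mathcal B_{k^\star}| \le O(\eta) \cdot |\hat\U| \le O(\eta) \cdot \lambda/100 = O(\lambda/100) \cdot \eta$. Choosing constants carefully (as in the $(2/25)\lambda$ bound in Claim \ref{cl:iteration:bipartite:1}), this is at most, say, $\lambda/4$.

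Now I would combine these bounds. By Claim \ref{claim:U>=1/2}, $|\U| \ge \lambda/2$. Removing the at most $\lambda/(100\eta) \le \lambda/100$ u-fans that are already social in block $k^\star$, plus the u-fans in $\hat\U$ (at most $\lambda/100$ more), plus the $k^\star$-bad u-fans (at most $\lambda/4$), we still have $\Omega(\lambda)$ u-fans left that are not social, not $k^\star$-bad, and not already counted; call this set $S$. Every u-fan in $S$ is $k^\star$-good and lies in some $\U_{i,i'}$ with $i < i'$. Since $S$ is partitioned among at most $\binom{2\eta}{2} \le 2\eta^2$ batches $\U_{i,i'}$, a pigeonhole/averaging argument produces a single batch $\U_{i^\star,i'^\star}$ containing at least $|S|/(2\eta^2) = \Omega(\lambda)/(2\eta^2) \ge \lambda/(4\eta^2)$ u-fans of $S$, all of which are $k^\star$-good. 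This is exactly the batch $\mathcal V = \U_{i^\star,i'^\star} \setminus \mathcal B_{k^\star}$ selected by the algorithm (up to the negligible social ones, which one can absorb by tightening the constant), proving the claimed size bound.

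The main obstacle I expect is the bookkeeping in the ``responsibility'' bound for $|\mathcal B_{k^\star}|$: one has to carefully track, for a social u-fan $\f'$ with up to three vertices and a fixed target block, exactly how many alternating paths in the relevant color pairs start at $\f'$'s vertices, and argue that each such path damages only $O(1)$ other u-fans (relying on separability of $\U$, so that each vertex carries distinctly-colored u-fan slots, and that a flip damages at most $2$ u-fans per path as in \Cref{alg:flip}'s analysis). Getting the constant small enough that the leftover set $S$ has size $\Omega(\lambda)$ with a clean enough constant to conclude $|\mathcal V| \ge \lambda/(4\eta^2)$ requires choosing the various thresholds ($\lambda/100$ for termination, the constants in $|\mathcal B_{k^\star}|$) consistently — this is routine but fiddly, and is the part where I would be most careful.
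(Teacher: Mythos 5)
Your high-level structure matches the paper's: pick $k^\star$ by averaging so that $|\hat\U_{k^\star}|\le\lambda/(100\eta)$, bound the set of $k^\star$-bad u-fans, subtract, and pigeonhole over batches. But there is a genuine gap in the step where you bound $|\mathcal B_{k^\star}|$. You apply a uniform estimate of ``each social u-fan is responsible for $O(\eta)$ bad u-fans,'' obtaining $|\mathcal B_{k^\star}|\le O(\eta)\cdot|\hat\U|\le O(\eta\lambda/100)$, and then assert that ``choosing constants carefully'' this is at most $\lambda/4$. That inference is false: $O(\eta\lambda/100)$ is not $O(\lambda)$; for $\eta$ as large as $\mu/10$ it vastly exceeds $\lambda/4$, and no constant-tuning will fix it. In fact the two observations you cite (\Cref{ob:bipartite:100} and \Cref{ob:bipartite:101}) already encode an asymmetry you are flattening: the overwhelming majority of social objects are each responsible for only $O(1)$ bad ones, and only a small, specially-chosen minority can be responsible for $O(\eta)$.

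The missing idea is precisely to couple the choice of $k^\star$ to the responsibility count, as the paper does in its Claim inside the proof. A social u-fan $\f'\in\hat\U\setminus\hat\U_{k^\star}$ has $c_{\f'}(x)\notin\C_{k^\star}$ at each vertex $x$, so there are only two $k^\star$-relevant types $\{C_\ell(j),C_{2k^\star-1}(j)\}$ and $\{C_\ell(j),C_{2k^\star}(j)\}$ that can damage it at $x$; hence $\f'$ is responsible for at most $6$ bad u-fans. Only $\f'\in\hat\U_{k^\star}$, which has a color in $\C_{k^\star}$, can be hit by $k^\star$-relevant paths coming from each of the $2\eta-2$ other blocks, giving the $O(\eta)$ responsibility bound—but by your own averaging $|\hat\U_{k^\star}|\le\lambda/(100\eta)$, so the $\eta$ cancels. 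Summing the two contributions yields $|\mathcal B_{k^\star}|\le 7|\hat\U|+6\eta|\hat\U_{k^\star}|\le 13\lambda/100\le\lambda/4$, which is the bound you need. Without this split, your proof does not go through.
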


\begin{proof}
    Consider the state of the algorithm at the start of some iteration. We now show that the set $\mathcal U_{i^\star, i'^\star} \setminus \mathcal B_{k^\star}$ constructed during the iteration has size at least $\lambda/(4\eta^2)$. 
    
    We can observe that the subsets $\{\U_{i,i'}\}_{1 \leq i < i' \leq \eta} \cup \{\U_{i,i}\}_{i \in [\eta]}$ partition the set $\U$. Since, for any $i \in [2\eta]$, $\U_{i,i} \subseteq \hat \U \subseteq \mathcal B_{k^\star}$, it follows that
    \begin{equation}\label{eq:union}
        \U \setminus \mathcal B_{k^\star} = \bigcup_{1 \leq i < i' \leq \eta} (\mathcal U_{i,i'} \setminus \mathcal B_{k^\star}).
    \end{equation}
    By a simple averaging argument and \Cref{eq:union}, we get that
    \begin{equation}\label{eq:union2}
        |\mathcal U_{i^\star, i'^\star} \setminus \mathcal B_{k^\star}| \geq \frac{1}{\eta^2} \cdot \sum_{1 \leq i < i' \leq \eta} |\mathcal U_{i,i'} \setminus \mathcal B_{k^\star}| = \frac{1}{\eta^2} \cdot |\mathcal U \setminus \mathcal B_{k^\star}|.
    \end{equation}
    Now, for each $k \in [\eta]$, let $\hat \U_k := {\U_{2k-1,2k}} \cup {\U_{2k-1,2k-1}}\cup {\U_{2k,2k}}$. Then we have that $\hat \U = \bigcup_{k = 1}^\eta \hat \U_k$. Since the sets $\{\hat \U_k\}_{k \in [\eta]}$ are all disjoint, it follows that $\sum_{k=1}^\eta |\hat \U_k| = |\hat \U| < \lambda/100$ and thus $|\hat \U_{k^\star}| \leq \lambda/(100 \eta)$ by a simple averaging argument.
    We now prove the following claim, which upper bounds the number of $k$-bad u-fans in terms of $|\hat \U|$ and $|\hat \U_k|$.

    \begin{claim}\label{claim:bad count}
        For each $k \in [\eta]$, we have that $|\mathcal B_k| \leq 7 \cdot |\hat \U| + 6\eta \cdot |\hat \U_{k}|$.
    \end{claim}

    \begin{proof}[Proof of claim]
        We first note that, for each non-social $k$-bad u-fan $\f \in \mathcal B_{k}$, there is some social u-fan $\f' \in \hat \U$ that is \emph{responsible} for the u-fan $\f$ being $k$-bad, meaning that calling $\ModifyB(\{\f\} ,k)$ changes the type of $\f'$. Thus, to upper bound the size of $\mathcal B_{k}$, it suffices to upper bound the number of non-social u-fans for which the social u-fans are responsible.

        To this end, let $\f' \in \hat \U$. If $\f' \in \hat \U \setminus \hat \U_k$, then $\f'$ is responsible for at most $6$ $k$-bad non-social u-fans. To see why, note that there are at most $2$ $k$-relevant alternating paths
        ending at each vertex of $\f'$ (at most one with a color in $C_{2k-1}$ and one with a color in $C_{2k}$) such that flipping these paths would remove $\f'$ from $\hat \U$, and for each of these alternating paths there is at most one non-social u-fan $\f$ such that calling $\ModifyB(\{\f\}, k)$ flips this path. 
        More specifically, given a vertex $x \in \f'$ such that $c_{\f'}(x) = C_{\ell}(j)$, for some $\ell \notin \{2k-1,2k\}$, the type of any such $k$-relevant alternating path ending at $x$ is $\{C_{\ell}(j), C_i(j)\}$, for some $i \in \{2k-1, 2k\} \}$.
        See \Cref{socialize-damage1} for an illustration.
        
        \begin{figure}
        	\centering
        	\begin{tikzpicture}[thick,scale=0.7]
	\draw (7, -1.5) node(0)[circle, draw, color = red, fill=black!50,
	inner sep=0pt, minimum width=10pt, label = $u'$] {};
	\draw (5, 0) node(1)[circle, draw, color=orange, fill=black!50,
	inner sep=0pt, minimum width=10pt, label = $v'$] {};
	\draw (5, -3) node(2)[circle, draw, color=orange, fill=black!50, inner sep=0pt, minimum width=10pt, label = -90:{$w'$}] {};

	\draw (3, 0) node(23)[circle, draw, fill=black!50,
	inner sep=0pt, minimum width=6pt] {};
	\draw (1, 0) node(24)[circle, draw, fill=black!50,
	inner sep=0pt, minimum width=6pt] {};
	\draw (-1, 0) node(25)[circle, draw, fill=black!50,
	inner sep=0pt, minimum width=6pt] {};
	\draw (-3, 0) node(26)[circle, draw, fill=black!50, 
	inner sep=0pt, minimum width=6pt] {};	
		
	\draw [line width = 0.5mm, color=teal] (1) to (23);
	\draw [line width = 0.5mm, color=orange] (23) to node[above] {$\beta$} (24);
	\draw [line width = 0.5mm, color=teal] (24) to (25);
	\draw [line width = 0.5mm, color=orange] (25) to node[above] {$\beta$} (26);

	\draw (3, -3) node(43)[circle, draw, fill=black!50,
	inner sep=0pt, minimum width=6pt] {};
	\draw (1, -3) node(44)[circle, draw, fill=black!50,
	inner sep=0pt, minimum width=6pt] {};
	\draw (-1, -3) node(45)[circle, draw, fill=black!50,
	inner sep=0pt, minimum width=6pt] {};
	\draw (-3, -3) node(46)[circle, draw, fill=black!50, 
	inner sep=0pt, minimum width=6pt] {};

	\draw [line width = 0.5mm, color=Orchid] (2) to (43);
	\draw [line width = 0.5mm, color=orange] (43) to node[above] {$\beta$} (44);
	\draw [line width = 0.5mm, color=Orchid] (44) to (45);
	\draw [line width = 0.5mm, color=orange] (45) to node[above] {$\beta$} (46);

	\draw (9, -1.5) node(73)[circle, draw, fill=black!50,
	inner sep=0pt, minimum width=6pt] {};
	\draw (11, -1.5) node(74)[circle, draw, fill=black!50,
	inner sep=0pt, minimum width=6pt] {};
	\draw (13, -1.5) node(75)[circle, draw, fill=black!50,
	inner sep=0pt, minimum width=6pt] {};
	\draw (15, -1.5) node(76)[circle, draw, fill=black!50, 
	inner sep=0pt, minimum width=6pt] {};
		
	\draw [line width = 0.5mm, color=NavyBlue] (0) to (73);
	\draw [line width = 0.5mm, color=red] (73) to node[above] {$\alpha$} (74);
	\draw [line width = 0.5mm, color=NavyBlue] (74) to (75);
	\draw [line width = 0.5mm, color=red] (75) to node[above] {$\alpha$} (76);
	
	\draw [line width = 0.5mm, dashed] (0) to (1);
	\draw [line width = 0.5mm, dashed] (0) to (2);
	
	\draw [gray!50] plot [smooth cycle] coordinates {(8, -1.5) (5, 1) (4, -1.5) (5, -4)};
	\node at (7, -4) {u-fan $\f'\in \hat{\U}_{k'}$};
	
\end{tikzpicture}
        	\caption{In this example, $\f' = (u', v', w', {\color{red}\alpha}, {\color{orange}\beta}, {\color{orange}\beta})\in \hat{\U}_{k'}$, where $k'\neq k$ and ${\color{red}\alpha} = C_{2k'-1}(j_1), {\color{orange}\beta} = C_{2k'}(j_2)$. Then, for any $i\in \{2k-1, 2k\}$,  $u'$ could be the endpoint of a $k$-relevant path of type $\{{\color{red}\alpha} = C_{2k'-1}(j_1), C_i(j_1)\}$, and $v', w'$ could be endpoints of $k$-relevant paths of types $\{{\color{orange}\beta} = C_{2k'}(j_2), C_{i}(j_2)\}$. Thus, the total number of $k$-bad u-fans with respect to $\f'$ is at most $6$.}\label{socialize-damage1}
        \end{figure}

        On the other hand, if $\f' \in {\hat \U}_k$, then $\f'$ is responsible for at most $6\eta$ 
        $k$-bad u-fans. To see why, note that there are at most $2\eta - 2$ $k$-relevant alternating paths 
        ending at each vertex of $\f'$ (for each $i \in [2\eta] \setminus \{2k-1, 2k\}$, there is at most one such alternating path with a color in $C_i$) such that flipping these paths would remove $\f'$ from $\hat \U$, and for each of these alternating paths there is at most one non-social u-fan $\f$ such that calling $\ModifyB(\{\f\}, k)$ flips this path.
        More specifically, given a vertex $x \in \f'$ such that $c_{\f'}(x) = C_{\ell}(j)$, for some $\ell \in \{2k-1,2k\}$, the type of any such $k$-relevant alternating path ending at $x$ is $\{C_{\ell}(j), C_i(j)\}$, for some $i \in [2\eta] \setminus \{2k-1, 2k\} \}$.
        See \Cref{socialize-damage2} for an illustration.
        
        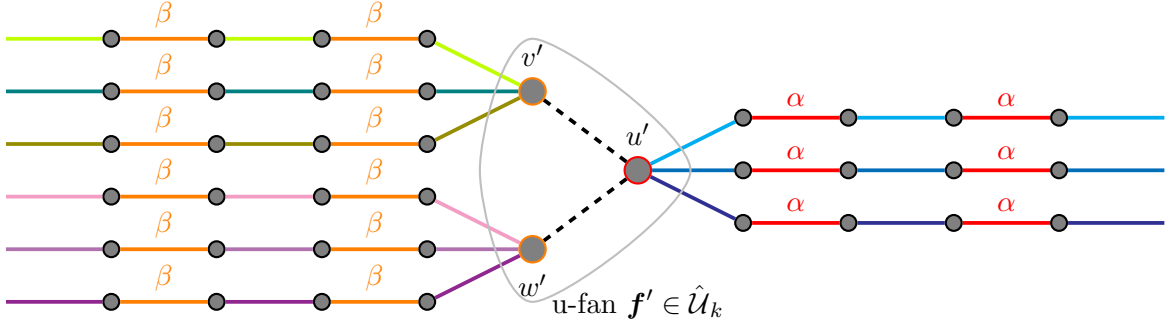
\begin{figure}
        	\centering
        	\begin{tikzpicture}[thick,scale=0.7]
	\draw (7, -1.5) node(0)[circle, draw, color = red, fill=black!50,
	inner sep=0pt, minimum width=10pt, label = $u'$] {};
	\draw (5, 0) node(1)[circle, draw, color=orange, fill=black!50,
	inner sep=0pt, minimum width=10pt, label = $v'$] {};
	\draw (5, -3) node(2)[circle, draw, color=orange, fill=black!50, inner sep=0pt, minimum width=10pt, label = -90:{$w'$}] {};	
	
	\draw (3, -1) node(3)[circle, draw, fill=black!50,
	inner sep=0pt, minimum width=6pt] {};
	\draw (1, -1) node(4)[circle, draw, fill=black!50,
	inner sep=0pt, minimum width=6pt] {};
	\draw (-1, -1) node(5)[circle, draw, fill=black!50,
	inner sep=0pt, minimum width=6pt] {};
	\draw (-3, -1) node(6)[circle, draw, fill=black!50, 
	inner sep=0pt, minimum width=6pt] {};
	
	\draw (3, 1) node(13)[circle, draw, fill=black!50,
	inner sep=0pt, minimum width=6pt] {};
	\draw (1, 1) node(14)[circle, draw, fill=black!50,
	inner sep=0pt, minimum width=6pt] {};
	\draw (-1, 1) node(15)[circle, draw, fill=black!50,
	inner sep=0pt, minimum width=6pt] {};
	\draw (-3, 1) node(16)[circle, draw, fill=black!50, 
	inner sep=0pt, minimum width=6pt] {};
	
	\draw (3, 0) node(23)[circle, draw, fill=black!50,
	inner sep=0pt, minimum width=6pt] {};
	\draw (1, 0) node(24)[circle, draw, fill=black!50,
	inner sep=0pt, minimum width=6pt] {};
	\draw (-1, 0) node(25)[circle, draw, fill=black!50,
	inner sep=0pt, minimum width=6pt] {};
	\draw (-3, 0) node(26)[circle, draw, fill=black!50, 
	inner sep=0pt, minimum width=6pt] {};	
	
	\draw [line width = 0.5mm, color=olive] (1) to (3);
	\draw [line width = 0.5mm, color=orange] (3) to node[above] {$\beta$} (4);
	\draw [line width = 0.5mm, color=olive] (4) to (5);
	\draw [line width = 0.5mm, color=orange] (5) to node[above] {$\beta$} (6);
	\draw [line width = 0.5mm, color=olive] (6) to (-5, -1);
	
	\draw [line width = 0.5mm, color=teal] (1) to (23);
	\draw [line width = 0.5mm, color=orange] (23) to node[above] {$\beta$} (24);
	\draw [line width = 0.5mm, color=teal] (24) to (25);
	\draw [line width = 0.5mm, color=orange] (25) to node[above] {$\beta$} (26);
	\draw [line width = 0.5mm, color=teal] (26) to (-5, 0);
	
	\draw [line width = 0.5mm, color=lime] (1) to (13);
	\draw [line width = 0.5mm, color=orange] (13) to node[above] {$\beta$} (14);
	\draw [line width = 0.5mm, color=lime] (14) to (15);
	\draw [line width = 0.5mm, color=orange] (15) to node[above] {$\beta$} (16);
	\draw [line width = 0.5mm, color=lime] (16) to (-5, 1);
	
	\draw (3, -2) node(33)[circle, draw, fill=black!50,
	inner sep=0pt, minimum width=6pt] {};
	\draw (1, -2) node(34)[circle, draw, fill=black!50,
	inner sep=0pt, minimum width=6pt] {};
	\draw (-1, -2) node(35)[circle, draw, fill=black!50,
	inner sep=0pt, minimum width=6pt] {};
	\draw (-3, -2) node(36)[circle, draw, fill=black!50, 
	inner sep=0pt, minimum width=6pt] {};
	
	\draw (3, -3) node(43)[circle, draw, fill=black!50,
	inner sep=0pt, minimum width=6pt] {};
	\draw (1, -3) node(44)[circle, draw, fill=black!50,
	inner sep=0pt, minimum width=6pt] {};
	\draw (-1, -3) node(45)[circle, draw, fill=black!50,
	inner sep=0pt, minimum width=6pt] {};
	\draw (-3, -3) node(46)[circle, draw, fill=black!50, 
	inner sep=0pt, minimum width=6pt] {};
	
	\draw (3, -4) node(53)[circle, draw, fill=black!50,
	inner sep=0pt, minimum width=6pt] {};
	\draw (1, -4) node(54)[circle, draw, fill=black!50,
	inner sep=0pt, minimum width=6pt] {};
	\draw (-1, -4) node(55)[circle, draw, fill=black!50,
	inner sep=0pt, minimum width=6pt] {};
	\draw (-3, -4) node(56)[circle, draw, fill=black!50, 
	inner sep=0pt, minimum width=6pt] {};
	
	\draw [line width = 0.5mm, color=Lavender] (2) to (33);
	\draw [line width = 0.5mm, color=orange] (33) to node[above] {$\beta$} (34);
	\draw [line width = 0.5mm, color=Lavender] (34) to (35);
	\draw [line width = 0.5mm, color=orange] (35) to node[above] {$\beta$} (36);
	\draw [line width = 0.5mm, color=Lavender] (36) to (-5, -2);
	
	\draw [line width = 0.5mm, color=Orchid] (2) to (43);
	\draw [line width = 0.5mm, color=orange] (43) to node[above] {$\beta$} (44);
	\draw [line width = 0.5mm, color=Orchid] (44) to (45);
	\draw [line width = 0.5mm, color=orange] (45) to node[above] {$\beta$} (46);
	\draw [line width = 0.5mm, color=Orchid] (46) to (-5, -3);
	
	\draw [line width = 0.5mm, color=Plum] (2) to (53);
	\draw [line width = 0.5mm, color=orange] (53) to node[above] {$\beta$} (54);
	\draw [line width = 0.5mm, color=Plum] (54) to (55);
	\draw [line width = 0.5mm, color=orange] (55) to node[above] {$\beta$} (56);
	\draw [line width = 0.5mm, color=Plum] (56) to (-5, -4);
	
	\draw (9, -0.5) node(63)[circle, draw, fill=black!50,
	inner sep=0pt, minimum width=6pt] {};
	\draw (11, -0.5) node(64)[circle, draw, fill=black!50,
	inner sep=0pt, minimum width=6pt] {};
	\draw (13, -0.5) node(65)[circle, draw, fill=black!50,
	inner sep=0pt, minimum width=6pt] {};
	\draw (15, -0.5) node(66)[circle, draw, fill=black!50, 
	inner sep=0pt, minimum width=6pt] {};
	
	\draw (9, -1.5) node(73)[circle, draw, fill=black!50,
	inner sep=0pt, minimum width=6pt] {};
	\draw (11, -1.5) node(74)[circle, draw, fill=black!50,
	inner sep=0pt, minimum width=6pt] {};
	\draw (13, -1.5) node(75)[circle, draw, fill=black!50,
	inner sep=0pt, minimum width=6pt] {};
	\draw (15, -1.5) node(76)[circle, draw, fill=black!50, 
	inner sep=0pt, minimum width=6pt] {};
	
	\draw (9, -2.5) node(83)[circle, draw, fill=black!50,
	inner sep=0pt, minimum width=6pt] {};
	\draw (11, -2.5) node(84)[circle, draw, fill=black!50,
	inner sep=0pt, minimum width=6pt] {};
	\draw (13, -2.5) node(85)[circle, draw, fill=black!50,
	inner sep=0pt, minimum width=6pt] {};
	\draw (15, -2.5) node(86)[circle, draw, fill=black!50, 
	inner sep=0pt, minimum width=6pt] {};
	
	\draw [line width = 0.5mm, color=cyan] (0) to (63);
	\draw [line width = 0.5mm, color=red] (63) to node[above] {$\alpha$} (64);
	\draw [line width = 0.5mm, color=cyan] (64) to (65);
	\draw [line width = 0.5mm, color=red] (65) to node[above] {$\alpha$} (66);
	\draw [line width = 0.5mm, color=cyan] (66) to (17, -0.5);

	\draw [line width = 0.5mm, color=NavyBlue] (0) to (73);
	\draw [line width = 0.5mm, color=red] (73) to node[above] {$\alpha$} (74);
	\draw [line width = 0.5mm, color=NavyBlue] (74) to (75);
	\draw [line width = 0.5mm, color=red] (75) to node[above] {$\alpha$} (76);
	\draw [line width = 0.5mm, color=NavyBlue] (76) to (17, -1.5);
	
	\draw [line width = 0.5mm, color=Blue] (0) to (83);
	\draw [line width = 0.5mm, color=red] (83) to node[above] {$\alpha$} (84);
	\draw [line width = 0.5mm, color=Blue] (84) to (85);
	\draw [line width = 0.5mm, color=red] (85) to node[above] {$\alpha$} (86);
	\draw [line width = 0.5mm, color=Blue] (86) to (17, -2.5);
	
	\draw [line width = 0.5mm, dashed] (0) to (1);
	\draw [line width = 0.5mm, dashed] (0) to (2);
	
	\draw [gray!50] plot [smooth cycle] coordinates {(8, -1.5) (5, 1) (4, -1.5) (5, -4)};
	\node at (7, -4) {u-fan $\f'\in \hat{\U}_k$};
	
\end{tikzpicture}
        	\caption{In this example, $\f' = (u', v', w', {\color{red}\alpha}, {\color{orange}\beta}, {\color{orange}\beta})\in \hat{\U}_k$, where ${\color{red}\alpha} = C_{2k-1}(j_1), {\color{orange}\beta} = C_{2k}(j_2)$. Then, for any $i\in [2\eta]\setminus \{2k-1, 2k\}$,  $u'$ could be the endpoint of a $k$-relevant path of type $\{{\color{red}\alpha}= C_{2k-1}(j_1), C_i(j_1)\}$, and $v', w'$ could be endpoints of $k$-relevant paths of types $\{{\color{orange}\beta} = C_{2k}(j_2), C_{i}(j_2)\}$. Thus, the total number of $k$-bad u-fans with respect to $\f'$ is at most $6\eta-6$.}\label{socialize-damage2}
        \end{figure}
        
        Thus, accounting for the social u-fans, the total number of $k$-bad u-fans is at most 
        $$6 \cdot |\hat \U \setminus \hat {\mathcal U}_k| + 6\eta \cdot |\hat {\mathcal U}_k| + |\hat \U | \leq 7 \cdot |{\hat \U}| + 6\eta \cdot |\hat{ \mathcal U}_k|.\qedhere $$ 
    \end{proof}

    It follows from \Cref{claim:bad count} that $|\mathcal B_{k^\star}| \leq 7 \cdot \lambda/100 + 6\eta \cdot \lambda/(100\eta) \leq \lambda/4$. Since, by \Cref{claim:U>=1/2}, the size of $\mathcal U$ is at least $\lambda/2$, it follows from \Cref{eq:union2} and the upper bound on $|\mathcal B_{k^\star}|$ that
    $$ |\mathcal U_{i^\star, i'^\star} \setminus \mathcal B_{k^\star}| \geq \frac{1}{\eta^2} \cdot |\mathcal U \setminus \mathcal B_{k^\star}| \geq \frac{1}{\eta^2} \cdot( |\mathcal U| - |\mathcal B_{k^\star}|) \geq \frac{1}{\eta^2} \cdot \left( \frac{\lambda}{2} - \frac{\lambda}{4} \right) \geq \frac{\lambda}{4\eta^2}.\qedhere  $$
\end{proof}

Combining \Cref{lem:find good,lem:augment good:2}, we get the following corollary, which upper bounds the number of iterations performed by the algorithm.

\begin{corollary}
    During each iteration of \Cref{alg:amplify}, the size of $\hat \U$ increases by at least $\lambda/(4\eta^2)$.
\end{corollary}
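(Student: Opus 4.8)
The plan is to obtain the corollary as an immediate bookkeeping consequence of the two lemmas just proved. First I would recall from the pseudocode of \Cref{alg:amplify} that each iteration of the while-loop does exactly one thing of substance: it picks $k^\star$, then $i^\star, i'^\star$ as in the algorithm, forms the batch $\mathcal V = \U_{i^\star, i'^\star} \setminus \mathcal B_{k^\star}$, calls $\ModifyB(\mathcal V, k^\star)$, and finally deletes the damaged u-fans from $\U$. So it suffices to control the net change in $|\hat \U|$ caused by this single call to $\ModifyB$.

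Next I would invoke \Cref{lem:find good} on precisely this batch: it tells us that $\mathcal V$ consists of $k^\star$-good u-fans and that $|\mathcal V| \geq \lambda/(4\eta^2)$. Then I would invoke \Cref{lem:augment good:2} on the call $\ModifyB(\mathcal V, k^\star)$: since $\mathcal V$ is a batch of u-fans none of which is $k^\star$-bad, that lemma guarantees that during this iteration $|\hat \U|$ increases by exactly $|\mathcal V|$ (and $|\U|$ drops by at most $3|\mathcal V|$, which we do not need here). Chaining the two facts gives that $|\hat \U|$ increases by $|\mathcal V| \geq \lambda/(4\eta^2)$, which is the claim.

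I do not expect any genuine obstacle: the statement simply records the per-iteration progress and both ingredients are already in hand. The only point requiring a line of care is to verify that the object fed to \Cref{lem:augment good:2} is the same batch to which \Cref{lem:find good} refers, i.e.\ that $\mathcal V = \U_{i^\star,i'^\star}\setminus\mathcal B_{k^\star}$ is indeed contained in some $\U_{i,i'}$ and contains only good u-fans; this is immediate from the choices of $i^\star,i'^\star,k^\star$ and from the definition of $\mathcal B_{k^\star}$ as the set of $k^\star$-bad u-fans. (Implicitly this also uses that the quantities in the $\arg\min$/$\arg\max$ are evaluated against the current state of $\U$, as stated just before \Cref{alg:amplify}.)
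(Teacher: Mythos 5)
Your proposal is correct and matches the paper's reasoning exactly: the paper obtains the corollary by the same one-line combination of \Cref{lem:find good} (giving $|\mathcal V|\geq\lambda/(4\eta^2)$ with $\mathcal V$ consisting of $k^\star$-good u-fans) and \Cref{lem:augment good:2} (giving that $|\hat\U|$ increases by $|\mathcal V|$). Your extra sentence of care about the batch being the same object in both lemmas is a reasonable sanity check but adds nothing beyond what the paper takes as immediate.
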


Thus, after $O(\eta^2)$ many iterations, we have that $|\hat \U| \geq \lambda/100$, so the algorithm sets $\U \leftarrow \hat \U$ and terminates. Since $\hat \U$ is a subset of a separable collection, it must be a separable collection itself. Furthermore, it follows from the definition of $\hat \U$ that the type of every u-fan in $\hat \U$ is contained in the subset of types $\bigcup_{k=1}^\eta \C_k \times \C_k$.

\subsection{Implementation of $\Amplify$}\label{sec:amp imp}

In this section, we show how to implement each iteration of \Cref{alg:amplify} in $O(m \eta^2 \log \mu)$ time. By the analysis in \Cref{sec:amp anal}, the algorithm runs for $O(\eta^2)$ iterations. Hence, that would imply that the running time of  $\Amplify$ is $O(m \eta^4 \log \mu)$. 

We first note that our algorithm maintains the separable collection $\U$ explicitly.
Thus, we can construct the separable collections $\{\hat \U_k\}_{k \in [\eta]}$ by initializing an empty separable collection (using our data structures) for each $\hat \U_k$ and then scanning through each of the u-fans in $\U$, adding them to the appropriate collection in $\{\hat \U_k\}_{k \in [\eta]}$ whenever they are contained in one. This can be done in $\eta \cdot O(m \log \mu)$ time. Given these separable collections, we can easily find $k^\star = \arg \min_{k \in [\eta]} |{\hat \U_{k}}|$. Similarly, given the set of $k^\star$-bad u-fans $\mathcal B_{k^\star}$, we can construct the separable collections $\{{\U_{i,i'}} \setminus \mathcal B_{k^\star}\}_{1 \leq i < i' \leq \eta}$ in $O(m \eta^2 \log \mu)$ time and then find the values $i^\star ,i'^\star = \arg \max_{1 \leq i < i' \leq \eta} |{\U_{i,i'}} \setminus \mathcal B_{k^\star}|$.

We now show to compute the set of $k$-bad u-fans $\mathcal B_{k}$ for any $k \in [\eta]$ and implement the subroutine $\ModifyB(\mathcal V, k)$ in $O(m \eta \log \mu)$ time.

\medskip
\noindent \textbf{Finding $k$-Bad U-Fans:}
Following the proof of \Cref{claim:bad count}, which upper bounds the number of $k$-bad u-fans, we note that finding all of the non-social $k$-bad u-fans can be done by simply traversing the $k$-relevant alternating paths starting at the vertices of the social u-fans. Specifically, for each social u-fan $\f' \in \hat \U$ and each vertex $x \in \f'$, one can traverse the $k$-relevant alternating paths with the color $c_{\f'}(x)$ starting at $x$. If $c_{\f'}(x) \in \mathcal C_k$, there are $2\eta - 2$ such paths. Otherwise, there are only $2$. 
For each of these paths $P$, we can traverse the path in time $O(|P| \log \mu)$ using our data structures and find its other endpoint $y$. We can then check in $O(\log \mu)$ time using our data structures if there is a non-social u-fan $\f$ with a vertex at $y$ such that calling $\ModifyB(\{\f\},k)$ flips this path.
If so, we add this u-fan $\f$ to the set of $k$-bad u-fans. After doing this for each social u-fan, we have found all of the non-social $k$-bad u-fans in $\mathcal B_{k}$.

For each $i \in [2 \eta] \setminus \{2k, 2k - 1\}$ and $\ell \in \{2k, 2k - 1\}$, the total length of all $\{C_i(j), C_{\ell}(j)\}$-alternating paths, for all $j\in [r]$, is upper bounded by $m$, since all of these paths are edge disjoint. Thus, summing over all $i \in [2 \eta] \setminus \{2k, 2k - 1\}$ and $\ell \in \{2k, 2k - 1\}$, it follows that the total length of all $k$-relevant alternating paths is $O(m\eta)$.  
Since we traverse each $k$-relevant alternating path at most twice during this process, this entire process can be implemented in $O(m \eta \log \mu)$ time using our data structures.

\medskip
\noindent \textbf{Implementing $\ModifyB(\mathcal V, k)$:} We construct the set of $k$-relevant alternating paths $\mathcal P$ by scanning through each u-fan $\f \in \mathcal V$ and computing the $k$-relevant alternating paths corresponding to $\f$. By a similar argument as above, we conclude that the total time taken to compute and flip each of these paths is $O(\log \mu)$ times the total lengths of these paths, which is $O(m \eta)$. Thus, this can be done in $O(m \eta \log \mu)$ time.

\section{The Algorithm $\Small$ (Proof of \Cref{lem:Sinnamon})}\label{app:small}

In this section, we describe and analyze the subroutine $\Small$, proving \Cref{lem:Sinnamon}, which we restate below.

\colorsmall*

\begin{proof}
    We begin by identifying the most common type $\tau^\star$ among the u-fans in $\U$. By a simple averaging argument, we know that $\ell \geq |{\U}| / \mu^2$ of the u-fans in $\U$ have type $\tau^\star$. We can find this type in $O(m \log \mu)$ time by scanning over each u-fan in $\U$ and counting the number of occurrences of each type.
    We then activate these u-fans by flipping only alternating paths with type $\tau^\star$, extending the coloring to $\ell$ uncolored edges in $O(n \log \mu)$ time, and remove the u-fans whose type changed during this process from $\U$. We repeat this process for $O(\mu^2)$ iterations, each time extending the coloring to at least a $1/\mu^2$ proportion of the u-fans in $\U$. Thus, after $\mu^2$ iterations, we have that $|{\U}| \leq \lambda \cdot (1 - 1/\mu^2)^{\mu^2} \leq \lambda/2$.
    Since activating a u-fan can change the type of at most one other u-fan in the separable collection, it follows that we have extended the coloring to at least $\lambda/4 \geq \lambda/100$ edges. The total running time of this algorithm is $O(m \mu^2 \log \mu)$.
\end{proof}

\section{The Algorithm $\Extend$ (Proof of \Cref{lem:extend})}\label{sec:extend}

In this section, we describe and analyze the subroutine $\Amplify$, proving \Cref{lem:extend}, which we restate below.

\extendcolor*

\subsection{The Algorithm $\Extend$}
Let $G$ be a graph, $\chi : E(G) \longrightarrow C \cup \{\bot\}$ be a partial $\mu$-coloring of $G$, $\U$ a separable collection of size $\lambda$, and $\eta \geq 10$ an integer.
The algorithm $\Extend$ is recursive;
given input $(G, \chi, \U, \eta)$, the algorithm does the following:

\medskip
\noindent \textbf{Base Case:} If $\mu \leq 10\eta$, the algorithm calls $\Small(G, \chi, \U)$ to extend the coloring $\chi$ to at least $\lambda/100$ uncolored edges.

\medskip
\noindent \textbf{Recursive Case:} Otherwise, if $\mu > 10\eta$, the algorithm calls $\Amplify(G, \chi, \U, \eta)$, which (by \Cref{lem:key}) modifies $\chi$ and $\U$ (without changing which edges are colored), and constructs disjoint subsets of colors $\mathcal C_1,\dots, \mathcal C_\eta \subseteq C$ with the following properties:
\begin{enumerate}
    \item For all $k \in [\eta]$, we have that $|{\C_k}| \leq \mu / \eta$.
    \item The u-fans in $\U$ have types in $\bigcup_{k=1}^\eta (\C_k \times \C_k)$ and $|{\U}| \geq \lambda/100$.
\end{enumerate}
The algorithm then uses the coloring $\chi$ and the subsets $\{\mathcal C_k\}_{k \in [\eta]}$ to compute the following subgraphs of $G$: For each $k \in [\eta]$, let $\U_k \subseteq \U$ be the subset of u-fans in $\U$ with a type in $\mathcal C_k \times \mathcal C_k$.
We then define a set of edges $E_k := \chi^{-1}(\mathcal C_k) \cup E(\U_k)$, i.e.~$E_k$ is the set of all colored edges with colors in $\C_k$ and uncolored edges contained in the u-fans in $\U$ that have a type in $\C_k \times \C_k$,
and let $G_k := (V, E_k)$ be the subgraph of $G$ containing these edges.
We emphasize that the subgraphs $\{G_k\}_{k \in [\eta]}$ are all edge-disjoint, which enables to proceed recursively on all of them `in parallel' (without any possible conflicts).
We define $\chi_k : E_k \longrightarrow \mathcal C_k \cup \{\bot\}$ to be the coloring of the graph $G_k$ obtained by restricting the coloring $\chi$ to the edges in $E_k$. We also define $E_{\eta + 1} := E(G) \setminus (E_1 \cup \dots \cup E_{\eta})$ and $\chi_{\eta + 1} : E_{\eta + 1} \longrightarrow ( \mathcal C \setminus \bigcup_{i=1}^\eta \mathcal C_i ) \cup \{\bot\}$ to be the coloring obtained by restricting the coloring $\chi$ to the edges in $E_{\eta + 1}$.

For each $k \in [\eta]$, the algorithm calls $\Extend(G_k, \chi_k, \U_k, \eta)$, which modifies the coloring $\chi_k$.
Finally, we set $\chi$ to be the union of the colorings $\chi_1,\dots, \chi_{\eta + 1}$.

\subsection{Analysis of $\Extend$}

We prove \Cref{lem:extend} by induction on the value of $$\Depth(\mu, \eta) := \max (\lceil \log_\eta(\mu / (10 \eta)) \rceil,0) \leq (\log \mu / \log \eta) + 1.$$
More specifically, we show that the algorithm extends the coloring $\chi$ to at least $\lambda/100^{\Depth(\mu, \eta) + 1}$ edges in $O(m \eta^4 \log(\mu) \cdot (\Depth(\mu, \eta) + 1))$ time.
Note that the value $\Depth(\mu, \eta)$ is an upper bound on the depth of the recursion tree of an instance: since the recursion bottoms when $\mu \le 10\eta$, we have $\mu/\eta^d \le 10\eta$, where $d$ is the recursion depth.

We first note that, if $\mu \leq 10 \eta$, the algorithm uses $\Small$ to extend the coloring $\chi$ of $G$ to at least $\lambda/100$ uncolored edges. It follows from \Cref{lem:Sinnamon} that the time it takes to compute this coloring is $O(m \mu^2 \log \eta) = O(m \eta^2 \log \mu)$. Furthermore, this is the base case of the induction and $\Depth(\mu, \eta) = 0$. 
Thus, \Cref{lem:extend} holds whenever $\mu \leq 10 \eta$.

For the induction step, suppose that \Cref{lem:extend} holds whenever $\Depth \leq L$ for some integer $L \geq 0$ and that $\Depth(\mu, \eta) = L + 1$.
In this case, the algorithm uses the subroutine $\Amplify$ to modify the coloring $\chi$ and the separable collection $\U$. Let $\tilde \chi$ and $\tilde \U$ denote the state of the coloring $\chi$ and the separable collection $\U$ after calling $\Amplify(G, \chi, \U, \eta)$. 
The algorithm then splits the instance $(G, \tilde \chi, \tilde \U, \eta)$ into subproblems $\{(G_k, \tilde \chi_k, {\tilde \U}_k, \eta)\}_{k \in [\eta]}$ 
and recurses on each of these subproblems to modify the colorings $\{\tilde \chi_k\}_{k \in [\eta]}$. Let $\chi^\star_k$ denote the state of the coloring $\tilde \chi_k$ after calling $\Extend(G_k, \tilde \chi_k, {\tilde \U}_k, \eta)$, and let $\chi^\star_{\eta + 1}$ denote the restriction of the coloring $\tilde \chi$ to the edges that are not contained in any $G_k$.
The solution returned by the algorithm is the union of the colorings $\chi^\star_1, \dots, \chi^\star_{\eta + 1}$, which we denote by $\chi^\star$.

The following claim shows that these subproblems are all feasible.

\begin{claim}\label{claim:feasible sp}
    Each of the subproblems in $\{(G_k, \tilde \chi_k, {\tilde \U}_k, \eta)\}_{k \in [\eta]}$ is a valid instance. Moreover, the subgraphs corresponding to these $\eta$ subproblems are edge-disjoint, and thus any color (re)assignments done for one subproblem have no effect on the other subproblems.
\end{claim}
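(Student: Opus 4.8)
The plan is to verify that each subproblem $(G_k, \tilde\chi_k, \tilde\U_k, \eta)$ meets the hypotheses required to invoke $\Extend$ recursively (equivalently, the hypotheses of \Cref{lem:extend}), and then to record the edge-disjointness that was already emphasized in the construction. Concretely, the hypotheses to check are: (i) $\tilde\chi_k$ is a valid partial $|\C_k|$-coloring of $G_k$ with color set $\C_k$; (ii) $\tilde\U_k$ is a separable collection of u-fans \emph{with respect to $\tilde\chi_k$ in $G_k$}; and (iii) $\eta \geq 10$, which is an unchanged input parameter and hence immediate.

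For (i), recall that $E_k = \tilde\chi^{-1}(\C_k) \cup E(\tilde\U_k)$ and $G_k = (V, E_k)$, and $\tilde\chi_k$ is the restriction of $\tilde\chi$ to $E_k$. First I would observe that $\tilde\chi_k$ only uses colors in $\C_k$: a colored edge in $E_k$ either lies in $\tilde\chi^{-1}(\C_k)$ (so its color is in $\C_k$ by definition) or is an uncolored edge of some u-fan in $\tilde\U_k$ (so it contributes $\bot$, not a color). Validity is inherited from $\tilde\chi$ being a valid partial coloring of $G$: no two edges sharing a vertex receive the same color under $\tilde\chi$, and restricting to a subset of edges cannot introduce a conflict. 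For (ii), I would argue that the two defining conditions of separability pass to $G_k$: edge-disjointness of the u-fans in $\tilde\U_k$ is inherited from $\tilde\U$ since $\tilde\U_k \subseteq \tilde\U$; and for each vertex $x$, the multiset $C_{\tilde\U_k}(x) \subseteq C_{\tilde\U}(x)$, so its elements remain distinct. The one subtlety worth spelling out is that the u-fans in $\tilde\U_k$ are genuinely u-fans of the subgraph $G_k$ with respect to $\tilde\chi_k$: every edge of such a u-fan $\f$ — its two uncolored edges and every edge carrying a color of $\tau(\f) \in \C_k \times \C_k$ on the alternating paths it interacts with — has its relevant color in $\C_k$, hence lies in $E_k$; and the missing-color conditions $c_{\f}(x) \in \miss_{\tilde\chi}(x)$ restrict correctly, because $c_{\f}(x) \in \C_k$ and an edge incident to $x$ colored $c_{\f}(x)$ under $\tilde\chi$ is also colored $c_{\f}(x)$ under $\tilde\chi_k$, so $c_{\f}(x) \in \miss_{\tilde\chi_k}(x)$ as well.

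For the edge-disjointness claim, I would invoke the construction directly: for $k \neq k'$, the colored edges of $E_k$ have colors in $\C_k$ while those of $E_{k'}$ have colors in $\C_{k'}$, and $\C_k \cap \C_{k'} = \emptyset$ since the $\{\C_k\}$ are disjoint (\Cref{lem:key}); the uncolored edges of $E_k$ come from u-fans in $\tilde\U_k$ (types in $\C_k \times \C_k$) versus $\tilde\U_{k'}$ (types in $\C_{k'} \times \C_{k'}$), and a single uncolored edge cannot lie in two u-fans with disjoint type-color-sets because the separability of $\tilde\U$ already forces all u-fans to be edge-disjoint. Hence $E_k \cap E_{k'} = \emptyset$. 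Since a recursive call $\Extend(G_k, \tilde\chi_k, \tilde\U_k, \eta)$ only changes colors of edges in $E_k$ (it recolors within $G_k$ using colors in $\C_k$), and these edge sets are pairwise disjoint, the calls for distinct $k$ do not interfere, so taking the union $\chi^\star$ of the $\chi^\star_k$ together with the untouched $\chi^\star_{\eta+1}$ is well-defined.

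The main obstacle I anticipate is not any single hard step but rather the bookkeeping in (ii): one must be careful that a u-fan's validity is a statement about the \emph{ambient} coloring and graph, so when we pass to $G_k$ we need every structure the u-fan implicitly references (its uncolored edges, the colors missing at its vertices, and — once the recursion starts flipping alternating paths inside $G_k$ — the alternating paths of its type) to be entirely contained in $G_k$ and to behave identically under $\tilde\chi_k$ as under $\tilde\chi$. This is exactly why \Cref{lem:key} was set up to guarantee $\tau(\f) \in \C_k \times \C_k$ and why $E_k$ was defined to include \emph{all} of $\tilde\chi^{-1}(\C_k)$ rather than just the edges touched by $\tilde\U_k$: it ensures the alternating paths of any type in $\C_k \times \C_k$ live in $G_k$. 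I would make this containment explicit as the crux of the proof, and the rest follows by the inheritance arguments above.
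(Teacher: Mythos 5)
Your proposal is correct and takes essentially the same approach as the paper's proof: check that the restricted coloring uses colors in $\C_k$, that $\tilde\U_k$ inherits separability from $\tilde\U$, that u-fans in $\tilde\U_k$ are genuine u-fans of $(G_k, \tilde\chi_k)$, and that the $E_k$ are pairwise disjoint by the color/type partition. You spell out in more detail the inheritance of the missing-color condition and the containment of alternating paths, which the paper leaves implicit, but the underlying argument is the same.
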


\begin{proof}
    Since $\tilde \chi_k$ is the restriction of $\tilde \chi$ to $G_k$, it follows from the definition of $G_k$ that $\tilde \chi_k$ is a coloring of $G_k$ with colors $\mathcal C_k$. Since $\tilde \U_k \subseteq \tilde \U$ is a subset of a separable collection, it is a separable collection itself. Furthermore, the edges of each u-fan in $\tilde \U_k$ are contained in $G_k$, and for each u-fan $\f \in \tilde \U_k$, we have that $\tau(\f) \subseteq \C_k$. It follows immediately from the definition of the subgraphs $G_k$ that they are edge-disjoint.
\end{proof}

Since $|\mathcal C_k| \leq \mu / \eta$ for each $k \in [\eta]$ and $\Depth(\mu / \eta, \eta) \leq \Depth(\mu, \eta) - 1 = L$, it follows by \Cref{claim:feasible sp} and the induction hypothesis that, for each $k \in [\eta]$, the coloring $\chi^\star_k$ has at least $|\tilde \U_k| \cdot 100^{-L}$ more colored edges than $\tilde \chi_k$.
By \Cref{lem:key}, we know that $\tilde \U = \bigcup_{k=1}^\eta \tilde \U_k$.
Since the domains of the colorings $\{\chi^\star_k\}_{k \in [\eta + 1]}$ partition the edge set of $G$ and the colors used by these colorings partition the set of colors $C$ used by $\chi$, it follows that the union of these colorings $\chi^\star$ is a partial coloring that assigns colors to 
at least
$$\sum_{k=1}^\eta |{\tilde \U}_k| \cdot  100^{-L} = |\tilde \U| \cdot  100^{-L} \geq |{\U}| \cdot 100^{-(L+1)}$$
more edges than $\chi$, where the second inequality follows from the fact that we have $|{\tilde\U}| \geq |{\U}|/100$ by \Cref{lem:key}. Thus, the coloring $\chi^\star$ has the required properties.

To bound the running time of the algorithm, we can observe that, for each $\ell$, the total running time at the $\ell^{th}$
level of the recursion tree is $O(m \eta^4 \log \mu)$.
To see why this is the case, note that the subgraphs in each branch of the $\ell^{th}$ level of the recursion tree are edge-disjoint by \Cref{claim:feasible sp}, and further note that the running time at any level other than the bottom level of the recursion is dominated by the time taken to handle the calls to $\Amplify$. Given any such subgraph with $m'$ edges, the time taken to handle the corresponding call to $\Amplify$ is $O(m'\eta^4 \log \mu)$ by \Cref{lem:key}. Due to the edge-disjointness of these subgraphs, summing over all these subgraphs yields a total running time of $O(m \eta^4 \log \mu)$ at this level. Similarly, the running time at the bottom level of the recursion is dominated by the time taken to handle calls to $\Small$, which by \Cref{lem:Sinnamon} gives rise to a runtime of $O (m \eta^2 \log \mu)$.
Since the depth of the recursion tree is $\Depth(\mu, \eta) = O(\log \mu / \log \eta)$, the desired running time follows.

\section{Implementation and Data Structures}\label{sec:data structs}

In this section, we describe the key data structures that we use to implement an edge coloring $\chi$ and a separable collection $\U$, allowing us to efficiently implement the operations performed by our algorithms.
Our data structures are almost identical to the data structures used by \cite{ABBC2025}, with the modification that we use balanced binary trees instead of hashmaps to make our data structures deterministic.

We begin by describing the data structures and then show how they can be used to efficiently implement the queries described in \Cref{sec:data struc overview}.

\medskip
\noindent \textbf{Preprocessing the Graph:} We can assume that, whenever we deal with a graph $G = (V,E)$ with $n$ vertices and $m$ edges, we have that $V = \{1,\dots, n\}$, i.e.~the vertices are the integers from $1$ to $n$. We can achieve this by having an initial preprocessing phase where we sort the vertices in our graph $G$ into a list $u_1 ,\dots,u_n$ and replace each occurrence of $u_i$ with $i$. This can easily be done in time $O(m \log n)$. Since the running time of our final algorithm (\Cref{thm:main:1}) is $\Omega(m \log n)$, the overhead of this preprocessing is negligible.

\medskip
\noindent \textbf{Implementing an Edge Coloring:}
Let $G = (V, E)$ be a graph of maximum degree $\Delta$ and let $C$ be a set of $\mu$ colors. For each $u \in V$, we let $N(u)$ denote the edges in $E$ incident on $u$.
Let $C = \{c_1,\dots c_\mu\}$ such that $c_1 \leq \dots \leq c_\mu$ (recall that the colors used by our algorithm are always integers). Then we define $C[j] := \{c_{j'} \mid c_{j'} \leq j\}$.
We implement an edge coloring $\chi : E \longrightarrow C$ of $G$ using the following, for each $u \in V$:
\begin{itemize}
    \item The map $\phi_u : N(u) \longrightarrow C$ where $\phi_u(e) := \chi(e)$ for all $e \in N(u)$.
    \item The map $\phi_u' : C \longrightarrow N(u)$ where $\phi_u'(c) := \{e \in N(u) \mid \chi(e) = c\}$.
    \item The set $\miss_\chi(u) \cap C[\deg_G(u) + 1]$.\footnote{We take this intersection with $C[\deg_G(u)+1]$ instead of maintaining $\miss_\chi(u)$ directly to ensure that the space complexity and initialization time of the data structures are $\tilde O(m)$ and not $\Omega(\Delta n)$.}
\end{itemize}
We implement all of the maps and sets using balanced binary trees, allowing us to perform membership queries, insertions and deletions in $O(\log \mu)$ time, since each of these maps and sets have size at most $\mu$. Furthermore, we store an array of pointers to these binary trees for each $u \in V$, allowing us to retrieve any of these trees in $O(1)$ given the corresponding vertex $u$.

The map $\phi'$ allows us to check if a color $c \in C$ is available at a vertex $u \in V$ in $O(\log \mu)$ time, and if it is not, to find the edge $e \in N(u)$ with $\chi(e) = c$.
Each time an edge $e$ changes color under $\chi$, we can easily update all of these data structures in $O(\log \mu)$ time. Furthermore, given $O(\log \mu)$ time query access to an edge coloring $\chi$, we can initialize these data structures in $O(m \log \mu)$ time.
We note that $\chi$ is a proper edge coloring if and only if $|\phi'_u(c)| \leq 1$ for all $u \in V$ and $c \in C$.

Since the binary trees used to implement the maps $\phi_u$ stores $m$ elements, it follows that it can be implemented with $O(m)$ space. Similarly, the maps $\phi_u'$ store $2m$ elements in total (if $\{e \in N(u) \mid \chi(e) = c\} = \emptyset$, then we do not store anything for $\phi'_u(c)$) and thus can be implemented with $O(m)$ space since each element has size $O(1)$ (recall that $|\phi'_u(c)| \leq 1$ since the coloring is proper). 
Since $\sum_u |\miss_\chi(u) \cap C[\deg_G(u) + 1]| = O(m)$, the sets $\miss_\chi(u) \cap C[\deg_G(u) + 1]$ can be implemented in $O(m)$ space.

\medskip
\noindent \textbf{Implementing a Separable Collection:}
We implement a separable collection $\U$ in a similar manner using the following, for each $u \in V$:
\begin{itemize}
    \item The map $\psi_u : C \longrightarrow \U$ where $\psi_u(c) := \{\f \in \U \mid u \in \f, c_{\f}(u) = c\}$.
    \item The set $C_{\U}(u) := \{c_{\f}(u) \mid \f \in \U, u \in \f\}$.
    \item The set $\overline{C}_{\U}(u) := \left(\miss_\chi(u) \cap C[\deg_G(u) + 1] \right) \setminus C_{\U}(u)$.
\end{itemize}
We again implement all of the maps and sets using balanced binary trees, allowing us to access and change entries in $O(\log \mu)$ time.
We note that, since $\U$ is separable, $|\psi_u(c)| \leq 1$ for all $u \in V$ and $c \in C$. Each time we remove a color $c \in C$ from the palette $\miss_\chi(u)$ of a vertex $u \in V$, we can update $\overline{C}_{\U}(u)$ in $O(\log \mu)$ time and check $\psi_u(c)$ in $O(\log \mu)$ time to find any u-component that has been damaged. Each time we add or remove a u-fan from $\U$, we can update all of these data structures in $O(\log \mu)$ time. 
Furthermore, we can initialize these data structures for an empty collection in $O(m \log \mu)$ time by creating a empty maps $\psi_u$ and empty sets $C_{\U}(u)$ for each $u \in V$ and copying the sets $\overline{C}_{\U}(u) = \miss_\chi(u) \cap C[\deg_G(u) + 1]$ for each $u \in V$ which are maintained by the data structures for the edge coloring $\chi$.
Since $\U$ is separable, we can see that $\overline{C}_{\U}(u) \neq \emptyset$. Thus, whenever we want a color from the set $\miss_\chi(u) \setminus C_{\U}(u)$, it suffices to take an arbitrary color from the set $\overline{C}_{\U}(u)$.

For each $\f \in \U$, we can see that $\f$ is contained at most $3$ different $\psi_u$. Thus, the total space required to store the binary trees that implement the $\psi_u$ is $O(|\U|)$. Since $\U$ is separable, the u-fans in $\U$ are edge-disjoint, and thus $|\mathcal U| \leq m$. It follows that the maps $\psi_u$ can be stored with $O(m)$ space. For each $u \in V$, we can observe that $|C_{\U}(u)| \leq \deg_G(u)$ since at most $\deg_G(u)$ many u-fans in $\U$ contain the vertex $u$, and $|\overline{C}_{\U}(u)| \leq \deg_G(u) + 1$ since $\overline{C}_{\U}(u) \subseteq C[\deg_G(u) + 1]$. Thus, the total space required to store the sets $\{C_{\U}(u)\}_{u \in V}$ and $\{\overline{C}_{\U}(u)\}_{u \in V}$ is $O(m)$.

\subsection{Implementing the Operations from \Cref{sec:data struc overview}}\label{sec:implementing queries}

We now describe how to implement each of the operations from \Cref{sec:data struc overview}.

\medskip
\noindent \textbf{Implementing} $\textsc{Initialize}(G, \chi)$: Suppose that we are given the graph $G$ and $O(\log \mu)$ time query access to an edge coloring $\chi$ of $G$. We can initialize the data structures used to maintain the maps $\phi_u$ and $\phi_u'$ in $O(m \log \mu)$ time.
We can then scan through the vertices $u \in V$ and initialize the sets $\miss_\chi(u) \cap C[\deg_G(u) + 1]$ in $O(m \log\mu)$ time. We can then initialize the data structures for an empty separable collection in $O(m\log \mu)$ time by creating empty maps $\psi_u$ and initializing the sets $C_{\U}(u) \leftarrow \emptyset$ and $\overline{C}_{\U}(u) \leftarrow \miss_\chi(u) \cap C[\deg_G(u) + 1]$ for each $u \in V$.

\medskip
\noindent \textbf{Implementing} $\textsc{Insert}_{\U}(\f)$: By performing at most $3$ queries to the maps $\psi_u$, we can check if $\U \cup \{\f\}$ is separable. If so we can update the $\psi_u$ and the sets $C_{\U}(x)$ and $\overline{C}_{\U}(x)$ for $x \in \f$ in $O(\log \mu)$ time in order to insert $\f$ into $\U$. Otherwise, we return $\texttt{fail}$.

\medskip
\noindent \textbf{Implementing} $\textsc{Delete}_{\U}(\f)$: We can first make a query to the $\psi_u$ to ensure that $\f \in \U$. If so, we can update the $\psi_u$ and the sets $C_{\U}(x)$ and $\overline{C}_{\U}(x)$ for $x \in \f$ in $O(\log \mu)$ time to remove $\f$ from $\U$.

\medskip
\noindent \textbf{Implementing} $\textsc{Find-U-Fan}_{\U}(x,c)$: We make a query to $\psi_x$ and check if there is an element $\psi_x(c)$. If no such element is contained in $\psi_x$, then return $\texttt{fail}$. Otherwise, return the unique u-fan in the set $\psi_x(c)$. This takes $O(\log \mu)$ time.

\medskip
\noindent \textbf{Implementing} $\textsc{Missing-Color}_{\U}(x)$: Return an arbitrary color from the set $\overline{C}_{\U}(x)$.

\section*{Acknowledgements} 
Sepehr Assadi is supported in part by a Sloan Research Fellowship, an NSERC Discovery Grant (RGPIN-2024-04290), and a Faculty of Math Research Chair grant from University of Waterloo.
Soheil Behnezhad is funded by an NSF CAREER award CCF-2442812 and a Google Faculty Research Award. Mart\'in Costa is supported by a Google PhD Fellowship. Shay Solomon is funded by the European Union (ERC, DynOpt, 101043159). Views and opinions expressed are however those of the author(s) only and do not necessarily reflect those of the European Union or the European Research Council. Neither the European Union nor the granting authority can be held responsible for them. Shay Solomon is also funded by a grant from the United States-Israel Binational Science Foundation (BSF), Jerusalem, Israel, and the United States National Science Foundation (NSF). Work of Tianyi Zhang was done while at ETH Z\"urich when supported by funding from the starting grant ``A New Paradigm for Flow and Cut Algorithms'' (no. TMSGI2\_218022) of the Swiss National Science Foundation.

\bibliography{bibliography.bib}
\bibliographystyle{alpha}

\end{document}